\documentclass[journal]{IEEEtran}

\usepackage{tikz}
\usepackage{pgfplots}
\pgfplotsset{width=8cm,compat=1.9}
\usepgfplotslibrary{external}
\tikzexternalize
\usepackage{caption}
\usepackage{subcaption}

\usepackage{amsmath}
\usepackage{amssymb}
\usepackage{setspace,csquotes}
\usepackage{enumerate}% http://ctan.org/pkg/enumerate
%\usepackage[letterpaper, total={6.5 in, 9.8in}]{geometry} 

%\doublespacing
%\allowdisplaybreaks
%\usepackage[square,numbers]{natbib}
%\usepackage{nopageno}
\bibliographystyle{ieeetr}

\usepackage{amsthm}
\newtheorem{prop}{Proposition}
\newtheorem{theorem}{Theorem}
\newtheorem{lemma}{Lemma}
\newtheorem{deftn}{Definition}
\newtheorem{corol}{Corollary}
\newcommand{\ie}[0]{\textit{i.e.}}

\usepackage{verbatim}
\usepackage{diagbox}
\usepackage[fleqn,tbtags]{mathtools}
\usepackage[english]{babel}
\usepackage{setspace}
\usepackage{tabularx}
\usepackage{comment}
\usepackage{adjustbox}
\usepackage{colortbl}
\usepackage{color}
\usepackage{amsmath}
\usepackage{relsize}
\usepackage{cite}
\usepackage{multirow}

\definecolor{lightblue}{rgb}{0.57,0.85,0.96}
\definecolor{lightgray}{rgb}{0.75,0.75,0.75}
\definecolor{lightred}{rgb}{1,0.6,0.6}
\definecolor{lightgreen}{rgb}{0.77,1,0.28}
\definecolor{lightorange}{rgb}{1,0.74,0.1}
\definecolor{lightpurple}{rgb}{0.92,0.67,0.92}
\definecolor{lightbrown}{rgb}{1,0.74,0.4}
\newcolumntype{L}{>{\arraybackslash}p{0.8cm}}
\usepackage{stackengine,graphicx}
%\usepackage{lmodern}

%\title{Caching to Reduce Age of Information}
%\author{Emrullah Ildiz, Orhan Tahir Yavascan, Tugberk Kartal, Elif Uysal }
%\date{February 2021}
\addto\captionsenglish {}
\makeatletter \renewenvironment{proof}[1][\proofname] {\par\pushQED{\qed}\normalfont\topsep6\p@\@plus6\p@\relax\trivlist\item[\hskip\labelsep\bfseries#1\@addpunct{.}]\ignorespaces}{\popQED\endtrivlist\@endpefalse} \makeatother
%\newtheorem{theorem}{Theorem}[section]
%\newtheorem{corollary}[theorem]{Corollary}
%\newtheorem{lemma}[theorem]{Lemma}
%\newtheorem{proposition}[theorem]{Proposition}
%\newtheorem{assumption}[theorem]{Assumption}
%\newtheorem{definition}[theorem]{Definition}
%\theoremstyle{definition}
%\newtheorem{example}[theorem]{Example}
%\newtheorem{remark}[theorem]{Remark}

%\numberwithin{equation}{section}
%\newcommand{\R}{\mathrm{I\negthinspace R}}

\DeclareMathOperator\supp{supp}

\DeclarePairedDelimiter{\floor}{\lfloor}{\rfloor}

\DeclareMathOperator*{\argmin}{arg\,min}

%    Enclose the argument in vert-bar delimiters:

%\renewcommand{\Pr}{\mathbb{P}}

\newcommand{\N}{\mathbb{N}}

\def\prehp(#1,#2){\ensuremath{  #1 \cdot #2 }}
\def\delequal{\mathrel{\ensurestackMath{\stackon[1pt]{=}{\scriptstyle\Delta}}}}

\newcolumntype{P}[1]{>{\centering\arraybackslash}p{#1}}
\usepackage{algorithm}
\usepackage{algorithmic}

\usepackage{graphicx}
\usepackage{psfrag}
\usepackage{dblfloatfix}

% correct bad hyphenation here

\begin{document}

\title{Pull or Wait: How to Optimize Query Age of Information}

\author{M. Emrullah Ildiz, Orhan T. Yavascan, Elif Uysal, and O. Tugberk Kartal \\
Dept. of Electrical and Electronics Engineering, METU, 06800, Ankara, Turkey \\
\{ emrullah.ildiz\_01, orhan.yavascan, uelif, tugberk.kartal\} @metu.edu.tr }
	\thanks{This work was supported in part by TUBITAK under Grant 117E215 and under Grant 119C028 and in part by Huawei. M. Emrullah Ildiz and Orhan T. Yavascan were supported by Turk Telekom and Turkcell, respectively, within the framework of 5G and Beyond Joint Graduate Support Programme coordinated by Information and Communication Technologies Authority. We thank Semanur Avsar for his assistance with simulations.}

%note: I shortened the names so they will fit

\setlength{\belowdisplayskip}{2pt} \setlength{\belowdisplayshortskip}{2pt}
\setlength{\abovedisplayskip}{2pt} \setlength{\abovedisplayshortskip}{2pt}

    \vspace*{15pt}
    {\let\newpage\relax\maketitle}
\thispagestyle{empty} 
\vspace{-2cm}
\begin{abstract}
We study a pull-based status update communication model where a source node submits update packets to a channel with random transmission delay, at times requested by a remote destination node. The objective is to minimize the average query-age-of-information (QAoI), defined as the average age-of-information (AoI) measured at query instants that occur at the destination side according to a stochastic arrival process. In reference to a push-based problem formulation defined in the literature where the source decides to \textit{update or wait} at will, with the objective of minimizing the time average AoI at the destination, we name this problem the \textit{Pull-or-Wait} (PoW) problem. We provide a comparison of the two formulations: (i) Under Poisson query arrivals, an optimal policy that minimizes the time average AoI also minimizes the average QAoI, and these minimum values are equal; and (ii) the optimal average QAoI under periodic query arrivals is always less than or equal to the optimal time average AoI. We identify the PoW problem in the case of a single query as a stochastic shortest path (SSP) problem with uncountable state and action spaces, which has been not solved in previous literature. We derive an optimal solution for this SSP problem and use it as a building block for the solution of the PoW problem under periodic query arrivals. 
\end{abstract}

% Note that keywords are not normally used for peerreview papers.
\begin{IEEEkeywords}
Age of information, Internet of things, pull-based communication, query age of information, status updates, stochastic shortest path problem, update or wait
\end{IEEEkeywords}

\IEEEpeerreviewmaketitle
\section{Introduction}\label{sect:introduction} 
\par The Internet of Things (IoT) paradigm has been gaining wide use in various sectors such as environmental monitoring\cite{IoT_environmental}, health and wellness\cite{IoT_health}, vehicular networks\cite{IoT_vehicular}, smart cities\cite{IoT_smartcities}, and so on. In many applications of these settings, a destination node seeks to have accurate information about a remote process measured by a sensor to utilize toward a computation. The received information packets by the destination node are not equally valuable: The value of the update packets highly depends on their timeliness.
\par As a metric to measure timeliness of update packets, the \textit{age-of-information} (AoI), or simply \textit{age}, has been introduced and studied in many different environments \cite{Kaul_2012_Howoftenshouldoneupdate, Yates_2021_AoISurvey, Sun_2019_AoIBook}. It is defined as the elapsed time since the generation of the latest received update packet. This definition makes it possible to measure the freshness of information for every time point at the destination node in contrast to the traditional metric, packet delay, that corresponds to the freshness of individual packets \cite{Kadota_2018_packetdelay}.
\iffalse
To justify(validate) this integration, it is shown in \cite{Kadota_2018_packetdelay} that optimizing only packet delay without taking packet inter-delivery time into account is insufficient to achieve a decent AoI performance. For example, consider an M/M/1 queue with a low arrival rate and a high service rate. In this case, the packet delay is low but a good AoI performance may not be achieved due to infrequent packet arrivals \ie\, high packet inter-delivery time.  
\fi
\par To minimize AoI in a status update system, a sensor or a source node can generate an update packet any time by its own will and immediately send it to a destination node through a communication channel; this is referred to as the \textit{generate-at-will} model \cite{Bacinoglu_2015, updateorwait, Yates_2012_Lazyistimely, Intro_1, Intro_2, Intro_3, Intro_5, Intro_6, Intro_7, Intro_8, Intro_9, Intro_10, Intro_11}. This model was introduced in \cite{Bacinoglu_2015} and further studied in \cite{updateorwait}. The problem formulation in \cite{updateorwait} is concerned with the source generating updates judiciously, to minimize the overall time average AoI over a channel that imposes a random transmission delay. In this paper, for brevity, we will refer to this formulation as the \textit{Update-or-Wait (UoW)} problem. In the UoW problem, the source controls the age by determining the submission times of the update packets to the channel. The approach of minimizing the time average age of information as an objective models a destination node that continuously utilizes the update packets; however in many IoT scenarios the application running at the destination side will utilize the information updates at certain times, rather than continuously \cite{DongAuDIoT, DongAuDRandorDeterm}. A policy that strives to keep the overall time average age at a minimum will not necessarily maintain minimal age at those utilization times. 

\par In this paper, we define an extension of the UoW problem, which is referred to as the \textit{Pull-or-Wait (PoW) problem}. In the PoW problem, the destination node requests an update packet from the source node in an effort to keep a low AoI at the next query instants, that are based on a stochastic arrival process. The \textit{query-age-of-information} (QAoI) is defined as the age values measured at query instants. The goal of the destination in the PoW problem is to determine optimal request points to minimize QAoI, knowing only the statistics of the channel delay and the query arrival processes. The following simple example reveals the difference between the UoW and PoW problems. 
\par \textit{Example 1: Consider an IoT monitoring system that requires an update packet every $4$ mseconds. Hence, the query instants are at times $4, 8, 12, \dots$ The transmission delay of this channel is constant at $1.5$ msec, but the requests for an update packet are assumed to arrive at the source node without any delay. The zero-wait policy is shown in \cite{updateorwait} to be the optimal update policy for the UoW problem when the transmission delays are constant. The evolution of the age of information under the zero-wait policy is shown in Figure \ref{fig:PullvsPush}. This policy results in a time average age of information equal to $2.25$ and performs one packet transmission per $1.5$ msec. On the other hand, a reasonable policy, which is later shown to be an optimal policy, for the PoW problem is that the destination node requests update packets at times $2.5, 6.5, 10.5, \dots$ as shown in Figure \ref{fig:PullvsPush}. As a result, this policy results in the average age of information at query instants equal to $1.5$ and performs one packet transmission per query, $4$ seconds.}
\begin{figure}[ht]
\centering
\psfrag{Q_1}[cc][cc]{$Q_1$}
\psfrag{Q_2}[cc][cc]{$Q_2$}
\psfrag{Q_3}[cc][cc]{$Q_3$}
\psfrag{5.5}[cc][cc]{$5.5$}
\psfrag{3}[cc][cc]{$3$}
\psfrag{1.5}[cc][cc]{$1.5$}
\psfrag{4}[cc][cc]{$4$}
\psfrag{Deltat}{$\Delta(t)$}
\psfrag{aaa}[cc][cc][1]{Optimal Policy of PoW Problem}
\psfrag{bbb}[cc][cc][1]{Optimal Policy of UoW Problem}
\includegraphics [scale = 1]{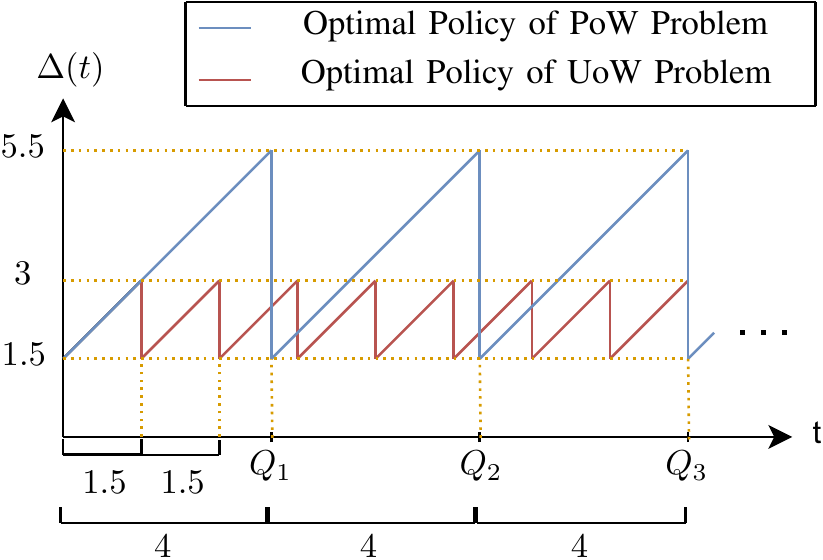}
\caption{Evolution of the Age of Information under the optimal policies of the UoW and PoW formulations in Example 1}
\label{fig:PullvsPush}
\end{figure}

This simple example points out a crucial distinction between the UoW and PoW problems. The PoW formulation uses the knowledge about utilization time \ie\, query instants to keep the AoI at the query instants much lower than that could be achieved in the UoW problem, while also reducing the number of transmissions. Hence, it is essential to comprehensively solve the PoW problem.

This paper aims to answer the following questions: How to optimally request update packets to minimize the age of information upon query instants at the destination? Under what conditions is the PoW model significantly advantageous over the UoW model? The following are the key contributions of this paper:
\begin{itemize}
    \item We define the PoW problem as a direct extension of the UoW problem formulated in \cite{updateorwait}. We show that under Poisson query arrivals, any optimal solution of the UoW problem is also an optimal solution of the PoW problem, achieving an equal age penalty (Proposition \ref{Poisson}).  We prove that for periodic queries the optimal average age penalty of the PoW problem is always less than or equal to that of the UoW problem with the same power constraint (Theorem \ref{superiorityTheorem}).
    \iffalse
    \item We state the PoW problem for a single query, referred to as \enquote{single query problem}, as a stochastic shortest path problem with uncountable state and action spaces, which has not been solved in previous literature, to the best of our knowledge. We show the existence of a deterministic policy that solves this problem (Proposition \ref{SuffandExis}), and exhibit such a policy (see Lemma \ref{BorderPointSub}, and Section \ref{sect:OptimalSolutionSection}). We employ this policy to construct a solution of the PoW problem under periodic query arrivals (Proposition \ref{InfRequest}). 
    \fi 
    
    \item We identify the PoW problem for a single query, referred to as \enquote{single query problem}, as a stochastic shortest path problem with uncountable state and action spaces, which has not been solved in previous literature, to the best of our knowledge. We show the existence of a deterministic policy that solves this problem (Proposition \ref{SuffandExis}) and characterize its first request point (Corollary \ref{CharacOfPolicy}). With the help of this characterization, we exhibit an explicit solution of the stochastic shortest path problem (Section \ref{sect:OptimalSolutionSection}).
    
    \item We employ the solution of the stochastic shortest path problem to construct a solution of the PoW problem under periodic query arrivals (Proposition \ref{InfRequest}). 
    
    \item We expand the results in \cite{chiariotti2021query} by relaxing three aspects of the system model: Our analysis allows a general channel delay distribution; a general age penalty function; and does not require a discount factor in the objective function.

\end{itemize}
\par The rest of the paper is organized as follows: In section \ref{sect:relatedwork}, we discuss some related work. In section \ref{sect:systemmodel}, we present the system model of the PoW problem. In section \ref{sect:problemformulationAnalysis}, we formulate the PoW problem and analyze it. In Section \ref{sect:SuperioritySection}, we prove that the solution of the PoW problem under periodic query arrivals always dominates that of the UoW problem. In section \ref{sect:NumericalResults}, we present numerical results to show the behavior of the solution in the PoW problem under different transmission delay processes. Finally, we conclude this paper in section \ref{sect:conclusion} by summarizing our contributions and discussing future directions.

\section{Related Work}\label{sect:relatedwork}
\par AoI has attracted a remarkable amount of interest \cite{Yates_2021_AoISurvey} and it has been applied to several different models and environments, such as enqueue-and-forward models \cite{Kam_2016_queue, Kam_2018_queue, Costa_2016_queue, Huang_2015_queue, Yates_2019_queue, Bedewy_2016_queue, Bedewy_2017_queue}, generate-at-will models \cite{updateorwait, Yates_2012_Lazyistimely, Intro_1, Intro_2, Intro_3, Intro_5, Intro_6, Intro_7, Intro_8, Intro_9, Intro_10, Intro_11}, random access environments \cite{Orhan_slotted, chen_2020_slotted, jiang_2018_slotted, yates_2017_slotted, shirin_2019_slotted}, and so on. Even though the age of information captures one semantic aspect of data, \ie\, the freshness of information, it is not sufficient for all applications. For example, the optimal policy that minimizes the MSE in the remote estimation of a Wiener process over a random delay channel is distinct from the age optimal policy as shown in \cite{sun2020sampling}. As a result, various suggestions for capturing the semantics of information have recently emerged \cite{uysal2021semantic, Bao2011TheorySemantic, Guler2018SemanticGame, Popovski2020}: the Age of Incorrect Information (AoII) extends the notion of fresh updates to that of fresh “informative” updates in \cite{Maatouk2020DefnAoII, chen2021minimizing, kriouile2021minimizing, Kam2020AoIIFeedback}. Other metrics such as the Urgency of Information (UoI) and the Age of Changed Information (AoCI) have been proposed in \cite{Zheng2020UoI} and \cite{Lin2020AoCI}, respectively.
\par The Query Age of Information (QAoI) is another metric that tries to capture the usefulness of an update packet with respect to an application more finely than the plain AoI. The QAoI is defined as the AoI measured at certain query instants, which represent the utilization times of the destination node in the application. This notion has been introduced in an independent set of works with different names such as Age upon Decision (AuD), Age of Effective Information (AoEI) \cite{Sang2017PullModel, Sang2021PullModel, DongAuDDefn, DongAuDGeneralArrivals, DongAuDIoT, DongAuDRandorDeterm, Yin2019AoEI}. The first works that suggest a pull-based communication model in the context of AoI are \cite{Sang2017PullModel, Sang2021PullModel}, where a user proactively requests update packets from multiple servers, but the authors minimize the plain AoI and do not take utilization time into account. A series of works \cite{DongAuDDefn, DongAuDGeneralArrivals, DongAuDIoT, DongAuDRandorDeterm} suggests AuD and studies a special case of the enqueue-and-forward model where a user utilizes upcoming update packets under a stochastic arrival process. This model leads the authors to measure the AoI at the utilization times. In \cite{Yin2019AoEI}, the authors study a multi-user information update system with Bernoulli update failures and suggest AoEI that measures the average AoI at the query instants. 
\par The works that are most relevant to this paper are \cite{updateorwait} and \cite{chiariotti2021query}. In \cite{updateorwait}, the authors consider a generate-at-will model to minimize time average AoI under a push-based communication model.  We extend \cite{updateorwait} to a pull-based communication model and modify the objective function with respect to the QAoI. In \cite{chiariotti2021query}, the authors suggest the QAoI and study a similar pull-based communication model. Unlike the packet erasure channel that is considered in \cite{chiariotti2021query}, we study more general channels that can have discrete, continuous, or mixed distributed transmission delays. In addition, we define an age penalty function $g(\Delta)$ to characterize the level of dissatisfaction for data staleness, where $g(.)$ can be any nonnegative, continuous, and nondecreasing function. This age penalty function enables us to simulate model-specific applications. Furthermore, we minimize the average age penalty at the query instants where there is no discount factor. In addition, we analytically compare the UoW and PoW problems under periodic and Poisson query arrival processes.

\section{System Model and Problem Definition}\label{sect:systemmodel}
\begin{figure}[t] 
\psfrag{Pull}[cc][cc][1]{\textbf{Pull (Request)}}
\psfrag{Source}[cc][cc][1]{\textbf{Source}}
\psfrag{Queue}[cc][cc][1]{\textbf{Queue}}
\psfrag{Channel}[cc][cc][1]{\textbf{Channel}}
\psfrag{Destination}[cc][cc][1]{\textbf{Destination}}
\centering
\includegraphics [scale = 1]{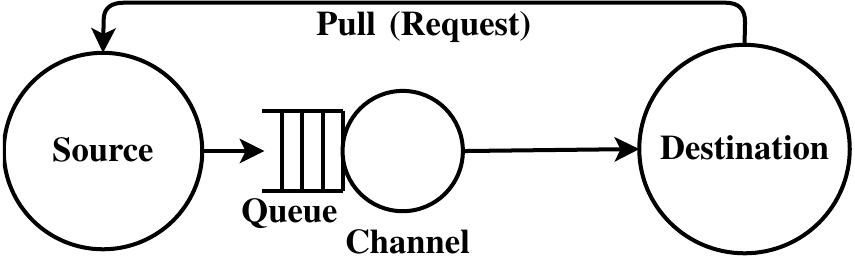}
\caption{System Model of the PoW Problem}
\label{fig:SystemModel}
\end{figure}

\par We consider a pull-based information update system depicted in Figure \ref{fig:SystemModel}, where a destination node is interested in information updates generated by a source node. The destination node requests an update packet from the source node according to an update policy. The request arrives at the source node without any delay. When a request occurs, the source node immediately generates an update packet and submits it to the channel. The channel induces a random delay between the source node and the destination node. The destination node should not request a new update packet when the previously requested update packet has not arrived at the destination node, because this will incur an unnecessary waiting time in the queue.
\par The update packets delivered to the destination node are utilized toward a computation. In this information update system, we assume that the destination node possesses a query arrival process that represents the utilization time of the upcoming update packets received from the source node. The destination node aims to minimize the average AoI at the query instants. As the destination node can recognize past states of the query arrival process, it requests update packets from the source node by taking account of not only the random delays induced by the channel but also the past states of the query process. 
\par Let the time that Update $j, j=1,2, \ldots$ is requested from the source, and submitted to the communication channel be denoted by $R_j$. Update $j$ is delivered to the destination node after a random transmission delay $Y_j$ at time $D_j= R_j+Y_j$. Then, the destination node requests Update $j+1$ at time $R_{j+1}$ after a waiting period $Z_j \in [0,M]$. This implies that $R_{j+1} = D_j + Z_j$. We assume that the transmission delay process, $\{Y_j\}_{j=0}^{\infty}$, is i.i.d. and takes values in a bounded range such that $\Pr(Y_j \in [B_{L}, B_{U}]) = 1$ where $B_L > 0$. On the other side, the query arrival process based on which the destination node utilizes the upcoming update packets is denoted as $\{Q_k, k=1,2,\ldots\}$. Please see Table \ref{NotationTable} for a summary of the notations used throughout the paper.
\iffalse
It is too early for this assumption
We assume that the request arrival process, $\{R_k\}$, is deterministic and periodic with period $T$, \textit{i.e.} $R_k = kT$. We further assume that $T > 4B_U$.\footnote{Burayı açmak lazım, bu assumptionin olası etkileri için}

ACK1 and ACK2 are used to inform the source node of the channel status (busy or idle) and the request points of the destination node, respectively. The transmission delay on these ACK packets is neglected due to their small size. 
\fi

\par At any time $t$, let $U(t)$ denote the generation time of the update packet that has been most recently received by the destination node. Consequently,
\begin{equation}
    U(t) = \max\{R_j \colon D_j \leq t\}
\end{equation}
The age of information corresponding to this flow in the destination node at time $t$ is denoted by $\Delta(t)$, and is defined as:
\begin{equation}
    \Delta(t) = t-U(t)
\end{equation}
\iffalse
Upon the transmission of request $k$, age at the destination is reset to $\Delta(R_k)$, as there is no transmission delay between the cache node and the destination node. In Fig. \ref{fig:AgeFig}, the age of information at the cache node is illustrated. The ages at request grant times are marked by crosses.
\\
\begin{figure}[t] 
\includegraphics [scale = 0.75]{AgeFig.png}
\centering
\caption{Evolution of Age of Information $\Delta(t)$ in time (solid). Crosses mark the age at the delivery instants of updates.}
\label{fig:Figures/AgeFig}
\end{figure}
\fi

We also introduce an age penalty function, $g(\Delta)$, that represents the level of dissatisfaction for data staleness or the need for a new information update. This function is defined as $g \colon [0, \infty) \xrightarrow[]{} [0, \infty)$ and it is continuous, nonnegative, and nondecreasing. 
Our goal is to minimize the average age penalty at the time of queries by controlling the sequence of waiting periods,  $(Z_0, Z_1, \dots)$. Let $\pi \delequal (Z_0, Z_1, \dots)$ denote an update policy. A causal update policy determines the waiting period $Z_j$ based on the sequence $(Z_i)_{i=0}^{j-1}$, the random processes $\{Y_j\}_{j=0}^{\infty}$, $\{Q_k\}_{k=1}^{\infty}$, and their realizations before $D_j$. Let $\Pi$ be the set of all causal update policies. Then, the objective function is defined as the following:  
\begin{equation}\label{Problem2}
    \bar{h}_{opt} = \min_{\pi \in \Pi} \limsup_{n \xrightarrow[]{} \infty} \frac{E \big[\sum_{k=1}^n g(\Delta(Q_k))\big]}{n}
\end{equation}
Throughout the paper, we refer to this problem as the \textbf{Pull or Wait (PoW) problem}. We refer to the objective function of the PoW problem as the \textbf{query average age penalty}.
\subsubsection{UoW Problem}
\begin{figure}[t] 
\psfrag{ACK}[cc][cc][1]{\textbf{ACK}}
\psfrag{Source}[cc][cc][1]{\textbf{Source}}
\psfrag{Queue}[cc][cc][1]{\textbf{Queue}}
\psfrag{Channel}[cc][cc][1]{\textbf{Channel}}
\psfrag{Destination}[cc][cc][1]{\textbf{Destination}}
\centering
\includegraphics [scale = 1]{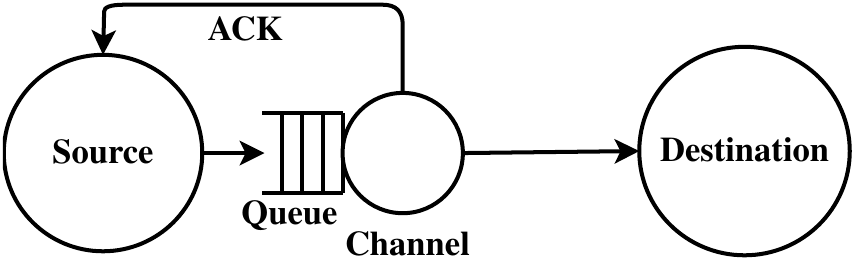}
\caption{System Model of the UoW Problem}
\label{fig:SystemModelUpdateorWait}
\end{figure}
\par In the system model that was studied in \cite{updateorwait} and that is depicted in Figure \ref{fig:SystemModelUpdateorWait}, the source node generates update packets and sends them directly to the destination node through the channel. Different from the system model of the PoW problem, the destination node does not request an update packet in an effort to minimize age penalty at the queries as there is no query in this system model. Instead, the source node submits update packets to the channel seeking to minimize the time average age penalty at the destination node. Therefore, the objective function is the following:
\begin{equation}\label{Problem1}
    \bar{g}_{opt} = \min_{\pi \in \Pi} \limsup\limits_{n\rightarrow \infty} \frac{E\big[\int^{D_n}_0 g(\Delta( t)) dt \big]}{E[D_n]} 
\end{equation}
Throughout the paper, we refer to this problem as the \textbf{Update or Wait (UoW) problem}. We refer to the objective function of the UoW problem as the \textbf{time average age penalty}.
\par Even though the example given in Section \ref{sect:introduction} leads to different update policies and average age penalty, the following proposition shows that there are special cases that both of the problems result in the same update policy and age penalty.
\begin{prop}\label{Poisson}
Let the query arrival process of a PoW problem be a Poisson process. For any transmission delay process, the optimal update policy that solves the UoW problem also solves the PoW problem with the same transmission delay process. Moreover, the optimal time average and query average age penalties are equal.
\end{prop}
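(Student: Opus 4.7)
The plan is to invoke PASTA (Poisson Arrivals See Time Averages) to show that the two objective functions agree pointwise for every admissible policy, and then conclude equality of the optima and of the optimizing sets.

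First, I would observe that both problems optimize over the same class $\Pi$ of causal policies, which (as defined earlier) may depend on both the channel delays $\{Y_j\}$ and the query arrivals $\{Q_k\}$. So it suffices to show that, for every $\pi \in \Pi$,
\begin{equation}
\limsup_{n\to\infty} \frac{E[\sum_{k=1}^n g(\Delta(Q_k))]}{n}
=
\limsup_{n\to\infty} \frac{E[\int_0^{D_n} g(\Delta(t))\,dt]}{E[D_n]}.
\end{equation}
Then taking $\min_{\pi \in \Pi}$ on both sides yields $\bar{h}_{opt} = \bar{g}_{opt}$, and identifies the two sets of optimizers.

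Second, I would verify the lack-of-anticipation condition underlying PASTA: at every time $t$, the future increments of the Poisson query process are independent of the $\sigma$-algebra generated by $\{\Delta(s)\colon s \leq t\}$. This holds because the Poisson process has independent increments, is assumed independent of the transmission delays $\{Y_j\}$, and every $\pi \in \Pi$ is causal, so no part of the past of $\Delta$ can depend on future queries. The assumption $Y_j \in [B_L,B_U]$ together with $Z_j \in [0,M]$ also forces $\Delta(t)$ to stay in a bounded interval almost surely, so $g(\Delta(t))$ is uniformly bounded by continuity of $g$. Campbell's formula for the Poisson process of rate $\lambda$ then gives the per-$T$ identity
\begin{equation}
E\Bigl[\sum_{k=1}^{N(T)} g(\Delta(Q_k))\Bigr] = \lambda\, E\Bigl[\int_0^T g(\Delta(t))\,dt\Bigr],
\end{equation}
where $N(T)$ is the query count in $[0,T]$ with $E[N(T)] = \lambda T$. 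Dividing these two equalities gives the PASTA identity at every finite horizon $T$.

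Third, I would transfer this horizon-indexed identity to the $\limsup$ form of the two objectives. The UoW $\limsup$ is indexed by the delivery count $n$ while the PoW $\limsup$ is indexed by the query count $n$. The bounds $B_L \leq Y_j \leq B_U$ and $0 \leq Z_j \leq M$ give linear two-sided bounds $n B_L \leq E[D_n] \leq n(B_U+M)$, so both sequences tend to $\infty$ and a time-changing argument (coupled with bounded convergence on $g(\Delta)$) shows that switching the indexing does not alter the $\limsup$. Finally, conclude by equating the two limsups for every $\pi\in \Pi$ and minimizing.

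The main obstacle is precisely this re-indexing step: PASTA and Campbell's formula cleanly equate the two expectations at a common deterministic horizon $T$, but the objectives in the paper use sequences of stopping-time horizons $D_n$ (for UoW) and query counts $n$ (for PoW). The bounded-delay assumption makes a renewal-reward bound routine, but it is the only part of the argument that genuinely requires work beyond the PASTA identity itself.
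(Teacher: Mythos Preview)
Your proposal is correct and follows essentially the same approach as the paper: establish the PASTA-type identity $E\bigl[\sum_{k=1}^{n} g(\Delta(Q_k))\bigr] = \lambda\, E\bigl[\int_0^{Q_n} g(\Delta(t))\,dt\bigr]$ for every causal policy, then pass to the $\limsup$ and re-index from query count to delivery count using the boundedness of $Y_j$ and $Z_j$. The only cosmetic difference is that the paper derives the per-horizon identity from scratch via upper and lower Darboux sums (using $\Pr(N_{t+\delta}-N_t=1)=\lambda\delta+o(\delta)$ on a partition of $[0,Q_n]$) rather than citing Campbell's formula, and it indexes by the random horizon $Q_n$ rather than a deterministic $T$; the subsequent re-indexing argument is the same as what you outline.
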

\begin{proof}
The proof is provided in Appendix \ref{app:ProofOfPoisson}, and it is based on the \enquote{Poisson arrivals see time averages} property exhibited by the query process. 
\end{proof} 
\begin{table}[t]
\begin{center}
\begin{displaymath}
\begin{array}{|l l|}  \hline 
 R_j & \textrm{Request point of update } j  \\ 
  Y_j & \textrm{Transmission delay observed by update } j  \\  
 D_j & \textrm{Delivery time of update j}  \\  %\hline
 \multirow{2}{*}{$Z_j$} & \textrm{Waiting period after } D_j \textrm{ before requesting} \\
 &\textrm{the next update}\\ %\hline
 A_j &  \textrm{Time elapsed from } D_j \textrm{ until the next query} \\
 B_L & \textrm{Lower bound on transmission delay}  \\  
 B_U & \textrm{Upper bound on transmission delay}  \\  
 Q_k & \textrm{The } k^{th} \textrm{ query time}  \\  
 T & \textrm{Time between queries in the deterministic case}  \\  
 \pi & \textrm{A causal update policy that consists of } Z_j's \\
 \Pi & \textrm{Set of all causal policies } \pi \\ %\hline
 \multirow{2}{*}{$\Pi_{SD}$} & \textrm{Set of all causal and stationary deterministic} \\ & \textrm{policies } \pi \\% \hline
  \multirow{2}{*}{$\bar{h}_{opt}$}  & \textrm{Optimal query average age penalty in PoW}\\& \textrm{problem} \\ %\hline
  \multirow{2}{*}{$\bar{g}_{opt}$}  & \textrm{Optimal time average age penalty in UoW}\\ &\textrm{problem} \\ \hline
\end{array}
\end{displaymath}
\caption{Summary of Notations}\label{NotationTable}
\vspace{-0.5cm}
\end{center}
\end{table}
\section{Problem Formulation and Analysis}\label{sect:problemformulationAnalysis}
In this section, we first analyze the PoW problem under a specific case of single query. Let $Q>0$ be the time at which the query occurs. For this case, Problem \eqref{Problem2} reduces to:
\begin{equation}\label{MinSingleQuery}
   \bar{h}^{one}_{opt}(Q) = \min_{\pi \in \Pi} E\big[ g(\Delta(Q))\big]
\end{equation}

\par Henceforth, we will refer to Problem \eqref{MinSingleQuery} as the \enquote{single query problem}. As we will show in the rest of this section, the solution of the single query problem will be a building block of the solution of the PoW problem, given in \eqref{Problem2}, under periodic query arrivals. 

\par The single query problem belongs to the class of stochastic shortest path problems with uncountable state and action spaces. The state of the problem at stage $j$ is the pair of the remaining time from the delivery point of Update $j$ until the query and the current age at the delivery point of Update $j$, $(Q-D_j, \Delta(D_j))$\footnote{It is shown in Proposition \ref{SuffandExis} that there exists an optimal policy of the single query problem in which $Z_j$ is determined as a function of $Q-D_j$ and $\Delta(D_j)$. As a result, the single query problem can be minimized in the set of deterministic policies. When $Z_j$ is determined as a function of $Q-D_j$ and $\Delta(D_j)$, the pair $(Q-D_j, \Delta(D_j)), j \geq 0$ forms a Markov chain because $\Delta(D_j) = Y_j$, $Y_j$'s are i.i.d., and $Q-D_{j+1} = Q-D_j-Y_j-Z_j$.}. The random disturbance and the control action at stage $j$ are $Y_j$ and $Z_j$, respectively. The absorbing state occurs at stage $j$ when $Q-D_j \leq 0$. State transitions that do not end in the absorbing state are costless. The cost of reaching the absorbing state from a state $(Q-D_j, \Delta(D_j))$ where $Q-D_j > 0$ is $g(Q-D_j+\Delta(D_j))$. This problem class is introduced in \cite{Bertsakas_1991} for a finite state space, compact action space, a transition kernel that is continuous for all actions, under the assumption that an optimal policy must be proper (\ie\, reachability of the termination state in a finite expected time). \cite{Bertsakas_2018} relaxes the assumptions of \cite{Bertsakas_1991} such that the state and action spaces are arbitrary, the transition kernel does not need to be continuous, but the space of the random disturbance is countable. A related problem class is introduced by \cite{Pliska_1978} as transient Markov decision problems with solutions that are \textit{transient policies} (similar, but not identical, to \textit{proper policies}), general state and action spaces, and continuous transition kernel. \cite{James_Collins_2006} further relaxes the assumptions of \cite{Pliska_1978} to the existence of non-transient policies, but keeps the assumption about the continuity of the transition kernel \cite[Assumption~1b]{James_Collins_2006}. None of these results are directly applicable to the single query problem because in our problem the random disturbance $Y_j$ may not come from a countable set and the transition kernel is not restricted to be continuous especially when the random disturbance $Y_j$ has a mixed distribution. 

\par In the rest of this section, we will show the existence of a deterministic optimaş policy for the single query problem, and characterize its first request point in Section \ref{sect:SingleQueryCase}. With the help of this characterization, we will reformulate the PoW problem under periodic query arrivals in terms of the single query problem in Section \ref{sect:PeriodicQueryCase}. Finally, we will provide a complete solution of the single query problem in Section \ref{sect:OptimalSolutionSection}, which concludes the solution of the PoW problem in \eqref{Problem2} under periodic query arrivals.
\vspace{-0.3cm}
\subsection{Existence of a Deterministic Optimal Policy for the Single Query Problem}\label{sect:SingleQueryCase}
\par In this subsection, we first show that there exists an optimal policy, $\pi_1^{opt}$, for the single query problem, that is a deterministic policy. Then, we define the \textit{border point} of $\pi_1^{opt}$ for a query arriving at time $Q$, denoted as $Q^{BP} \in [Q-3B_U, Q-B_U]$. We prove that $Q^{BP}$ is an optimal request point under the policy $\pi_1^{opt}$ for every delivery point $D_j$ satisfying $D_j < Q-3B_U$. This property will help us transform the solution of the single query problem into a solution of the PoW problem under periodic query arrivals. 
\par At any delivery point $D_j$, an optimal update policy seeks to find a request point $R_{j+1}$ to minimize the expected age penalty at the query. To express the expected age penalty at the query in terms of a request point $R_j$, we define the $G^{\pi}_R$ function. In addition to the $G^{\pi}_R$ function, we define the $G^{\pi}_D$ function to express the expected age penalty at the query in terms of a delivery point $D_j$ as the following:
\iffalse
For different waiting periods, the expected age penalty at the query may change even if there will be new deliveries before the query. On the other hand, at any request point $R_j$, the realization of the transmission delay process $Y_j$ may change the expected age penalty as well as the new optimal waiting periods. Hence, there is a need to calculate the effects of the waiting period and the transmission delay on the expected age penalty at the query. The functions $G^{\pi}_R$ and $G^{\pi}_D$ meet the need.
\fi
\begin{deftn}
 For a given query $Q$, let $R_j$ and $D_j$ be any request and delivery points, respectively. $G^{\pi}_R \colon [0, \infty) \times [0, \infty) \rightarrow [0, \infty)$ and $G^{\pi}_D \colon [0, \infty) \times [0, \infty) \rightarrow [0, \infty)$ are defined as follows:
 \begin{equation}
 \begin{aligned}
     G^{\pi}_R\bigg(Q-R_j, \Delta(R_j)\bigg) \delequal E\bigg[g(\Delta(Q)) \bigg| \pi \text{ is applied,} \, \\
     R_j \text{ is a request point,} \text{AoI at } R_j \text{ is } \Delta(R_j)  \bigg] 
 \end{aligned}
 \end{equation}
 \begin{equation}
     \begin{aligned}
      G^{\pi}_D\bigg(Q-D_j, \Delta(D_j)\bigg) \delequal E\bigg[g(\Delta(Q)) \bigg| \pi \text{ is applied,}\, \\
      D_j \text{ is a delivery point}, \text{AoI at } D_j \text{ is } \Delta(D_j) \bigg]
     \end{aligned}
 \end{equation}
These expectations are taken over the possible transmission delays and the waiting period decisions by the policy $\pi \in \Pi$.
\end{deftn}
\iffalse
\par The function $G^{\pi}_R(Q-R_j, \Delta(R_j))$ gives the expected age penalty at the query $Q$ when the causal update policy (or the destination node) requests an update packet from the source node at $R_j$. On the other hand, the function $G^{\pi}_D(Q-D_j, \Delta(D_j))$ gives the expected age penalty at the query $Q$ when the channel delivers an update packet to the destination node at $D_j$. It might not be sufficient to calculate the expected age penalty at the query by only having the information of two arguments for both of the functions. For example, the update policy may determine the new waiting period based on the past values of the transmission delay process. However, 
\fi 
\par It will be shown in Proposition \ref{SuffandExis} that the information of the remaining time until the query $Q-D_j$ and the AoI at the delivery point $\Delta(D_j)$ are sufficient statistics to determine an optimal waiting period. This implies that the minimization of the single query problem can be performed by only considering the set of causal policies that determines the waiting period $Z_j$ based on $Q-D_j$ and $\Delta(D_j)$. Therefore, there is no need to explicitly provide the sequences of $(Y_i)_{i=0}^{j}$ and $(Z_i)_{i=0}^{j-1}$ for the functions $G^{\pi}_R$ and $G^{\pi}_D$.
\par The two functions have a chain relationship with each other. When the destination node requests an update packet from the source node at $R_j$, Update j is delivered to the destination node after a random transmission delay $Y_j$ at time $D_j = R_j + Y_j$. Hence, $\Delta(D_j) = Y_j$. If the delivery occurs before the query i.e.\, $Q-R_j-Y_j \geq 0$, the expected age penalty can be represented with the function $G^{\pi}_D$. If $Q-R_j-Y_j < 0$, the AoI at the query is $Q-R_j+\Delta(R_j)$ for sure. This relationship can be written as follows:
\begin{equation}\label{GR}
    \begin{split}
        G^{\pi}_R\bigg(Q - &R_j, \Delta(R_j)\bigg) \\
     & = E\bigg[ G^{\pi}_D(Q-R_j-Y_j, Y_j) \bigg| Y_j \leq Q-R_j \bigg] \\
   &  \hspace{3.45cm} \times \Pr(Y_j \leq Q-R_j) \\
     & + g\big(Q-R_j+\Delta(R_j)\big) \times \Pr(Y_j > Q-R_j)
    \end{split}
\end{equation}
This expectation is taken over possible transmission delays.
\par On the other hand, when the update packet is delivered to the destination node at $D_j$, the destination node waits for a duration $Z_j$ to request a new update packet. Hence, the request point is $Q-D_j-Z_j$, and the AoI at the request point is $\Delta(D_j) + Z_j$. When the request point is before the query i.e, $Q-D_j-Z_j \geq 0$, the expected age penalty at the query can be represented with the function $G^{\pi}_R$. When $Q-D_j-Z_j < 0$, the AoI at the query is $Q-D_j+\Delta(D_j)$ for sure. This relationship can also be written as follows:
\begin{equation}\label{GD}
    \begin{split}
        &G^{\pi}_D  \bigg(Q - D_j, \Delta(D_j)\bigg)  \\
    &= E\bigg[ G^{\pi}_R(Q-D_j-Z_j, \Delta(D_j) + Z_j) \bigg| Z_j \leq Q-D_j \bigg]  \\
    &\hspace{4.75cm} \times \Pr(Z_j \leq Q-D_j) \\
    &+g\big(Q-D_j+\Delta(D_j)\big) \times \Pr(Z_j > Q-D_j) 
    \end{split}
\end{equation}
This expectation is taken over possible waiting periods that are determined by the policy $\pi$ in order to take randomized policies into account. 
\par Now, we move on to obtain a deterministic optimal policy of the single query problem. The optimal age penalty in this problem can be achieved in a special subset of $\Pi$. In the next proposition, we prove this in detail. 
\iffalse
\par Let us assume that $\pi_1^{opt}(Q) \in \Pi$ is an optimal causal update policy for a given query $Q$. For every delivery point $D_j$, the optimal causal update policy $\pi_1^{opt}(Q)$ finds out the optimal waiting period $Z_j$ in an effort to minimize the expected age penalty at the query. Therefore, $Z_j$ can be expressed as the following:
\begin{equation}\label{MinInf}
    Z_j = \argmin_{t \in [0, \min\{M, Q-D_j\}]} G^{\pi_1^{opt}(Q)}_R(Q-D_j-t, \Delta(D_j)+t)
\end{equation}
\par However, the optimal waiting period $Z_j$ may not exist at all since the function $G^{\pi_1^{opt}(Q)}_R$ in \eqref{MinInf} is not necessarily continuous with respect to $t$, even though the penalty function $g$ is continuous. It means that the function may not attain its infimum point inside the close interval in RHS of \eqref{MinInf}. Thus, there is a need to show the existence of an optimal policy. We prove that there exists a deterministic optimal policy in the set of causal policies.
\fi
\begin{deftn}
\begin{itemize}
    \item []
    \item A policy $\pi \in \Pi$ is said to be a stationary and deterministic policy if there exists decision function $z:[0, \infty) \times [0, \infty) \rightarrow [0, M]$ such that $Z_j = z(Y_j, Q-D_j)$ for $j = 0,1, \dots$
    \item The set of all stationary and deterministic policies is denoted as $\Pi_{SD}$.
\end{itemize}
\end{deftn}
\begin{prop}\label{SuffandExis}
If the transmission delay process $\{Y_j\}_{j=0}^{\infty}$ is i.i.d. such that $\Pr(Y_j \in [B_L, B_U] ) = 1$, $M < \infty$, and the penalty function $g$ is continuous, non-negative, and non-decreasing, then there exists a deterministic update policy that is optimal for the single query problem.
\end{prop}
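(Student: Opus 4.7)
My plan is to prove Proposition~\ref{SuffandExis} via a finite-horizon dynamic-programming argument exploiting three structural features of the single query problem: the positive lower bound $B_L$ on the transmission delay, the compactness of the action set $[0,M]$, and the continuity and monotonicity of $g$.

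First, I would show that the infimum in \eqref{MinSingleQuery} can be achieved within $\Pi_{SD}$. Since $\{Y_j\}_{j\ge 0}$ is i.i.d.\ and $\Delta(D_j)=Y_j$, the pair $(Q-D_j, Y_j)$ is a sufficient statistic for the future evolution of age given any history up to $D_j$. Conditioning the decision $Z_j$ of an arbitrary (possibly randomized and history-dependent) causal policy on this statistic, and invoking the dynamic-programming principle together with the recursive relations \eqref{GR}--\eqref{GD}, produces a Markov policy of the form $Z_j = z(Q-D_j, Y_j)$ whose expected query age penalty is no worse than that of the original. Hence it suffices to exhibit a deterministic minimizer inside $\Pi_{SD}$.

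Second, because $Y_j \ge B_L > 0$ almost surely and $Z_j \ge 0$, every request-delivery cycle consumes at least $B_L$ time units, so at most $N := \lceil Q/B_L \rceil$ requests can be issued before the query. This turns the single query problem into a finite-horizon MDP on the state space $[0,Q] \times [B_L, B_U]$ with compact action set $[0,M]$ and terminal cost $V_0(q,y) := g(q+y)$. Backward induction defines
\begin{equation*}
V_n(q,y) = \inf_{Z \in [0, \min(M,q)]} \Big\{ E\big[V_{n-1}(q-Z-Y, Y)\,\mathbf{1}_{\{Y \le q-Z\}}\big] + g(q+y)\Pr(Y > q-Z) \Big\},
\end{equation*}
and once a measurable selector $z_n(q,y)$ attaining the per-stage infimum is obtained, the composed policy $Z_j = z_{N-j}(Q-D_j, Y_j)$ is stationary, deterministic, and achieves $\bar h^{one}_{opt}(Q) = V_N(Q, y_0)$ for the appropriate initialization.

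The main obstacle is verifying that the per-stage infimum above is attained despite possible discontinuities when $Y$ has a mixed distribution: atoms of $Y$ create jumps both in $Z \mapsto \Pr(Y > q-Z)$ and in the conditional expectation term as the upper limit of integration crosses an atom. My plan is to use right-continuity of the CDF of $Y$ to localize these discontinuities and to show that the two jumps at any atom $y^*$ have opposite signs whose sum, by monotonicity of $g$ together with the fact that $q \ge y^*$ and $y \ge B_L > 0$ at every admissible discontinuity, is nonnegative. This yields lower semicontinuity of the integrand in $Z$, so compactness of $[0, \min(M,q)]$ guarantees that the infimum is attained, and a standard measurable selection theorem (e.g., Kuratowski--Ryll-Nardzewski) provides a measurable $z_n$. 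Propagating lower semicontinuity of $V_n$ in $(q,y)$ across induction steps via dominated convergence is the most delicate part of the argument.
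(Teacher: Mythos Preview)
Your proposal is correct and follows essentially the same route as the paper: both restrict to Markov policies via the sufficient statistic $(Q-D_j,Y_j)$, exploit the finite effective horizon implied by $Y_j\ge B_L>0$, and prove by induction that the per-stage objective is lower semicontinuous in the waiting time so that the infimum is attained on the compact action set. The paper indexes its induction by $n$ with $(n-1)B_L\le Q-D_j<nB_L$ and packages the LSC bookkeeping into an auxiliary lemma, whereas you phrase it as backward dynamic programming on $V_n$ and argue the jump cancellation at atoms of $Y$ directly; the substance is the same.
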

\begin{proof}
In the proof, we need to use the extended version of the functions $G^{\pi}_R$ and $G^{\pi}_D$ that must include the sequences of $(Y_i)_{i=0}^j$ and $(Z_i)_{i=0}^{j-1}$ in order to cover all possible causal update policies. Hence, they are $G^{\pi}_D(Q-D_j, \Delta(D_j), (Y_i)_{i=0}^j, (Z_i)_{i=0}^{j-1})$ and $G^{\pi}_R(Q-R_j, \Delta(R_j), (Y_i)_{i=0}^{j-1}, (Z_i)_{i=0}^{j-1})$. Let us map each $Q-D_j$ to a natural number $n$ satisfying $(n-1)B_L \leq Q-D_j < nB_L$. We perform discrete induction on $n$. The proposition is first proved for every $j$, $(Y_i)_{i=0}^{j}$, and $(Z_i)_{i=0}^{j-1}$ that satisfy $(n-1)B_L \leq Q-D_j < nB_L$ when $n=1$. Then, the proposition is assumed to be correct when $n=2,3, \dots , K$ where $K$ is an arbitrary natural number. Finally, it is proved when $n=K+1$. The details are available in Appendix \ref{app:SuffandExis}.
\end{proof}
According to the previous proposition, there exists a deterministic optimal update policy $\pi_1^{opt} \in \Pi_{SD}$ that decides waiting periods based on the values of $Q-D_j$ and $\Delta(D_j)$ for every $j$, $(Y_i)_{i=0}^j$, and $(Z_i)_{i=0}^{j-1}$. Interestingly, for some specific values of $Q-R_j$, the expected age penalty at the query may not depend on the value of $\Delta(R_j)$. For example, when the destination node is supposed to request an update packet from the source node before $Q-B_U$, the requested update packet must reach the destination node before the query. This is because the transmission delay can be at most $B_U$. Therefore, the AoI at the request point cannot affect the expected age penalty at the query. The next proposition proves this in detail.  
\begin{prop}\label{AoISubNotAffect}
If the elapsed time since a request point until the query is greater than $B_U$, then the AoI at the request point does not affect the expected age penalty at the query under a deterministic policy.
\end{prop}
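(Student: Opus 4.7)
The plan is to chain the boundedness hypothesis on $Y_j$ with the recursion \eqref{GR}. Since $Y_j \in [B_L, B_U]$ almost surely and the hypothesis of the proposition is $Q - R_j > B_U$, we have $Y_j \leq B_U < Q - R_j$ with probability one. Substituting this into \eqref{GR}, the factor $\Pr(Y_j > Q - R_j) = 0$ kills the second summand — which is the only place $\Delta(R_j)$ appears — and $\Pr(Y_j \leq Q - R_j) = 1$ turns the conditional expectation into an ordinary one. What remains is
\[
G^{\pi}_R\bigl(Q-R_j, \Delta(R_j)\bigr) \;=\; E\bigl[G^{\pi}_D(Q - R_j - Y_j,\, Y_j)\bigr],
\]
an expectation taken only with respect to the law of the transmission delay $Y_j$.

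The second step is to show that the integrand above carries no residual dependence on $\Delta(R_j)$. For a deterministic policy of the kind produced by Proposition \ref{SuffandExis}, i.e. a member of $\Pi_{SD}$, the waiting period at every stage $i$ is a function only of $(Y_i, Q - D_i)$, and the age at every delivery point satisfies $\Delta(D_i) = Y_i$ because the arriving update resets the age. Hence, starting from $D_j$, the process $(Q - D_i, \Delta(D_i))_{i \geq j}$ is a Markov chain initialized at $(Q - R_j - Y_j,\, Y_j)$ whose transitions are driven only by the future i.i.d.\ delays $Y_{j+1}, Y_{j+2}, \dots$ and the stationary decision rule $z(\cdot,\cdot)$. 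Consequently $G^{\pi}_D(Q - R_j - Y_j,\, Y_j)$ is a deterministic function of its two displayed arguments, independent of $\Delta(R_j)$. Substituting back gives that $G^{\pi}_R(Q - R_j, \Delta(R_j))$ is a function of $Q - R_j$ alone.

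There is no genuine obstacle; the argument is a one-line consequence of the boundedness hypothesis together with the Markov structure of stationary deterministic policies. The one point that requires care is the reading of the phrase \emph{deterministic policy} in the statement: a non-stationary, history-dependent deterministic rule could in principle encode $\Delta(R_j)$ into future waiting decisions via the full observed past, in which case $G^{\pi}_D$ would carry additional history arguments. However, Proposition \ref{SuffandExis} guarantees the existence of an optimal policy in $\Pi_{SD}$, so it is both sufficient and natural to work within this class, and there the claim is immediate from the two steps above.
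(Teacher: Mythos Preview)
Your proof is correct and follows essentially the same route as the paper's: you use $Y_j \leq B_U \leq Q - R_j$ to kill the second term of \eqref{GR}, leaving $G^{\pi}_R(Q-R_j,\Delta(R_j)) = E[G^{\pi}_D(Q-R_j-Y_j,Y_j)]$, and then observe that for $\pi \in \Pi_{SD}$ the right-hand side depends only on $Q-R_j$ because future decisions are functions of $(Y_i, Q-D_i)$ and the $Y_i$'s are i.i.d. The paper's proof is the same argument stated more tersely, and it likewise restricts the conclusion to $\pi \in \Pi_{SD}$, matching your caveat about how to read ``deterministic policy.''
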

\begin{proof}
This proposition is an immediate result of \eqref{GR}. If $Q-R_j \geq B_U$, then $Y_{j} \leq Q-R_j$ for sure. Therefore, \eqref{GR} becomes
\begin{equation}
     G^{\pi}_R\bigg(Q - R_j, \Delta(R_j)\bigg) =  E\bigg[ G^{\pi}_D\big(Q-R_j-Y_j, Y_j\big) \bigg]
\end{equation}
As the transmission delay process is i.i.d. and $\Delta(R_j) = Y_{j-1} + Z_{j-1}$, $\Delta(R_{j})$ does not affect $G^{\pi}_D(Q-R_j-Y_j, Y_j)$ when $Q-R_j$ is given. Hence, the proof is completed. Note that this property is valid for every $\pi \in \Pi_{SD}$.
\end{proof}
As a result of previous proposition, we can modify the function $G^{\pi_1^{opt}}_R$ when $\pi_1^{opt}$ is a deterministic optimal policy and $Q-R_j$ is greater than or equal to $B_U$. Hence, for every request point $R_j$ and its AoI $\Delta(R_j)$ satisfying $Q-R_j \geq B_U$, we redefine the $G^{\pi_1^{opt}}_R$ function with one argument as the following:
\begin{equation}
    G^{\pi_1^{opt}}_R(Q-R_j) \delequal G^{\pi_1^{opt}}_R\bigg(Q-R_j, \Delta(R_j)\bigg) 
\end{equation}
\par For a given query $Q$ and a deterministic optimal policy $\pi_1^{opt}$, let us define its border point $Q^{BP}$ that satisfies the following:
\begin{equation}\label{BorderPoint}
    G^{\pi_1^{opt}}_R(Q-Q^{BP}) = \inf_{R_j\colon R_j \leq Q-B_U} G^{\pi_1^{opt}}_R(Q-R_j)
\end{equation}
\par In the next proposition, we show the existence of a border point. Then, we specify one of these points as the border point.
\iffalse
Note that this point may not exist at all as we do not know that $G^{\pi_1^{opt}}_R$ function with respect to $R_j$ attains its infimum value in the interval $[0, Q-B_U]$. Even if there exists such a point, the point may not be unique. In the next proposition, we show that there exists such a point and we specify this point.
\fi
\begin{prop}\label{BorderPointProp}
Let $D_j^*$ be a specific delivery point satisfying $D_j^* = Q-3B_U$ and $Y_j^* = B_L$. The request point $R_{j+1}^*$ determined by a deterministic optimal policy $\pi_1^{opt}$ is a border point for the query $Q$. We designate $R_{j+1}^*$ as the \enquote{selected} border point.
\end{prop}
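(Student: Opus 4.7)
Proof Plan. The statement has two substantive pieces: (i) the request point $R_{j+1}^*$ determined by $\pi_1^{opt}$ at state $(Q-D_j^*,\Delta(D_j^*)) = (3B_U,B_L)$ lies in $[Q-3B_U, Q-B_U]$, so that the one-argument form $G^{\pi_1^{opt}}_R(Q-R_{j+1}^*)$ is well defined; and (ii) this $R_{j+1}^*$ attains the infimum \eqref{BorderPoint} defining the border point. By Proposition~\ref{SuffandExis} a deterministic optimal policy $\pi_1^{opt}$ exists, so at state $(3B_U,B_L)$ it chooses a specific waiting period $Z_j^*\in [0,M]$, giving $R_{j+1}^* = Q-3B_U+Z_j^*$ (with the convention that $Z_j^* > 3B_U$ means no further request before the query).

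For (i), I would argue by strict domination. The benchmark strategy picks $Z = 2B_U$ (request at $Q-B_U$) and forgoes further requests. Because $Y_{j+1}\in [B_L,B_U]$ almost surely, the resulting delivery arrives by $Q$ and yields a deterministic age $B_U$ at the query, hence penalty $g(B_U)$. Any $Z\in(2B_U, 3B_U]$ places the next request in $(Q-B_U, Q]$, so \eqref{GR} gives a positive-probability ``miss'' event $\{Y_{j+1} > Q - R_{j+1}\}$ whose age-at-query is $Q-R_{j+1}+B_L+Z = 3B_U+B_L$, contributing $g(3B_U+B_L) > g(B_U)$ by monotonicity of $g$ and $B_L > 0$. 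Expanding $G^{\pi_1^{opt}}_R(3B_U-Z, B_L+Z)$ via \eqref{GR}--\eqref{GD} shows that neither the successful-delivery branch nor optimal continuations on it can offset this strict increase, so $Z = 2B_U$ strictly dominates every $Z>2B_U$. The ``never-request'' option $Z_j^* > 3B_U$ produces $g(3B_U+B_L) > g(B_U)$ outright and is dominated similarly.

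For (ii), set $G^{**} := \inf_{R\in[Q-3B_U, Q-B_U]} G^{\pi_1^{opt}}_R(Q-R)$ and $G^* := \inf_{R\le Q-B_U} G^{\pi_1^{opt}}_R(Q-R)$. By (i), Proposition~\ref{AoISubNotAffect}, and the optimality of $\pi_1^{opt}$ at $(3B_U,B_L)$, we obtain $G^{\pi_1^{opt}}_R(Q-R_{j+1}^*) = G^{**}$. Since $G^* \le G^{**}$ trivially, the claim reduces to $G^{**}\le G^*$, i.e., $G^{\pi_1^{opt}}_R(Q-R')\ge G^{**}$ for every $R' < Q-3B_U$. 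Using \eqref{GR} (for such $R'$ the miss term vanishes), this amounts to $G^{\pi_1^{opt}}_R(Q-R') = E[G^{\pi_1^{opt}}_D(Q-R'-Y,Y)]$, and each realization satisfies $Q - R' - Y > 2B_U$, so it suffices to show $G^{\pi_1^{opt}}_D(t,y) \ge G^{**}$ for all $t > 2B_U$ and $y\in [B_L,B_U]$.

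This last lower bound is the main obstacle. Upper bounds on $G^{\pi_1^{opt}}_D$ are immediate by exhibiting particular strategies, but the matching lower bound must rule out a better optimum reachable from states with large remaining time $t$. I would prove it by discrete induction on the integer $n$ with $(n-1)B_L \le t < nB_L$, mirroring the structure used in the proof of Proposition~\ref{SuffandExis}. The base case (small $n$ where no subsequent delivery can occur before the query) reduces to $G^{\pi_1^{opt}}_D(t,y) = g(t+y)$, which dominates $G^{**}\le g(B_U)$ by monotonicity of $g$. The inductive step expands $G^{\pi_1^{opt}}_D$ via \eqref{GD}, separates the waiting periods that land the next request inside $[Q-3B_U, Q-B_U]$ (where Proposition~\ref{AoISubNotAffect} gives a value $\ge G^{**}$ directly from the definition of the infimum) from those that land outside (to which \eqref{GR} and the inductive hypothesis apply, since the remaining times then drop strictly to a smaller $n$).
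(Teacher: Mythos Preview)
Your two–step decomposition matches the paper's, but each step has a real gap as written.

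\textbf{Part (i).} Comparing every $Z>2B_U$ against the fixed benchmark $Z=2B_U$ forces you to prove that the success branch of the late request (with optimal continuation) cannot compensate for the $g(3B_U+B_L)$ miss term; your sentence ``neither the successful-delivery branch nor optimal continuations on it can offset this strict increase'' is precisely the nontrivial part and is not justified. (Incidentally, the benchmark does not give a deterministic age $B_U$; it gives age $Y_{j+1}\le B_U$.) The paper bypasses this comparison entirely via its Lemma~\ref{RequestUntilQ-3BU}, proved for \emph{every} $D_j\le Q-2B_U$: if the optimal next request were some $R^*>Q-B_U$, modify the policy to request once immediately at $D_j$ (that packet is delivered by $D_j+B_U\le Q-B_U<R^*$) and then request again at $R^*$, imitating $\pi_1^{opt}$ thereafter. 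At $R^*$ the modified policy has AoI $R^*-D_j$, strictly smaller than $\Delta(D_j)+(R^*-D_j)$; by \eqref{GR} this strictly lowers the penalty on the positive-probability miss event and leaves the success event unchanged, contradicting optimality of $\pi_1^{opt}$. No benchmark is needed.

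\textbf{Part (ii).} Your induction is on the right object but the base case is misidentified. You are proving $G^{\pi_1^{opt}}_D(t,y)\ge G^{**}$ only for $t>2B_U$, yet you describe the base as ``small $n$ where no subsequent delivery can occur,'' i.e.\ $t<B_L$. For such $t$ the inequality $g(t+y)\ge G^{**}$ is neither needed nor established (you only know $G^{**}\le g(B_U)$, while $t+y$ can be as small as $B_L$); and for the smallest $t$ you actually care about, namely $t$ just above $2B_U$, a subsequent delivery certainly can occur. The paper handles this by first deriving the key inequality $G^{\pi_1^{opt}}_R(Q-R_{j+1}^*)\le G^{\pi_1^{opt}}_D(Q-D_j,Y_j)$ for all $D_j\in[Q-3B_U,Q-2B_U]$ and $Y_j\in[B_L,B_U]$ (its \eqref{BorderPointGRandGD}), using that from any such state the optimal request (by Lemma~\ref{RequestUntilQ-3BU}) lands in $[Q-3B_U,Q-B_U]$, a window also available to $D_j^*$. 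It then inducts on $n$ with $(n-1)B_L\le Q-3B_U-R<nB_L$ to show $G^{\pi_1^{opt}}_R(Q-R)\ge G^{\pi_1^{opt}}_R(Q-R_{j+1}^*)$ for $R<Q-3B_U$: the base $n=1$ writes $G_R(Q-R)=E[G_D(Q-R-Y,Y)]$ via \eqref{GR} and each realization has $Q-R-Y\in[Q-3B_U,Q-2B_U)$, where \eqref{BorderPointGRandGD} applies directly. Your inductive step is essentially the same idea, but you need this corrected anchor to start it.
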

\begin{proof}
In the proof, we first prove that the request must occur by the time $Q-B_U$ i.e, $R_{j+1}^* \leq Q-B_U$.  This ensures that $R_{j+1}^*$ is in the intended interval of \eqref{BorderPoint}. Then, we show that the optimal request point, $R_{j+1}^*$, attains the infimum in \eqref{BorderPoint}. The details are in Appendix \ref{app:BorderPointProp}.
\end{proof}
\par In the rest, for brevity, we will refer to the selected border point as the border point. The exact location, $Q^{BP}$, of the border point depends on the exact time of the query and the optimal policy $\pi_1^{opt}$. This is because the border point is specified as the request point that is determined by $\pi_1^{opt}$ when the delivery point is $Q-3B_U$ and the age at the delivery point is $B_L$. Hence, the border point can be considered as a function of a query $Q$ and a deterministic optimal policy $\pi_1^{opt}$. Nevertheless, there is a special property of the border point concerning the relation between $Q$ and $Q^{BP}$, proved in the following corollary:    
\begin{corol}\label{DiffRandRBP}
The time duration between a query and its border point does not depend on the exact time of the query for a given deterministic optimal policy.
\end{corol}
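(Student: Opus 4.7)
The plan is to exploit the stationarity of the deterministic optimal policy established in Proposition \ref{SuffandExis}. That proposition gives an optimal $\pi_1^{opt} \in \Pi_{SD}$, so by definition there is a decision function $z:[0,\infty)\times[0,\infty)\to[0,M]$, not depending on $Q$, such that at any delivery stage $j$ the policy produces $Z_j = z(Y_j, Q-D_j)$. This is the only fact I need.

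Next, I would evaluate $\pi_1^{opt}$ at the specific state that defines the selected border point in Proposition \ref{BorderPointProp}, namely the delivery point $D_j^\ast$ with $Q-D_j^\ast = 3B_U$ and $Y_j^\ast = B_L$. Both coordinates $(Y_j^\ast, Q-D_j^\ast) = (B_L, 3B_U)$ are absolute constants determined by the channel bounds alone; in particular they carry no dependence on the exact value of $Q$. Consequently the induced waiting period
\begin{equation}
    Z_j^\ast \;=\; z(B_L, 3B_U)
\end{equation}
is a fixed real number, independent of $Q$.

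From this the claim is one line: by construction $Q^{BP} = R_{j+1}^\ast = D_j^\ast + Z_j^\ast$, so
\begin{equation}
    Q - Q^{BP} \;=\; (Q - D_j^\ast) - Z_j^\ast \;=\; 3B_U - z(B_L, 3B_U),
\end{equation}
which depends only on $B_U$, $B_L$, and the (fixed) decision rule $z$ of the chosen optimal policy, and not on $Q$.

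The only point that requires care, rather than being a real obstacle, is making sure we really are allowed to treat $z$ as a $Q$-independent function. This is precisely what stationarity in $\Pi_{SD}$ buys us in Proposition \ref{SuffandExis}: the rule takes the pair $(Y_j, Q-D_j)$ as its arguments, not $Q$ and $D_j$ separately, so the same map $z$ is used for every realization of the query time. Once that is spelled out, the corollary is immediate.
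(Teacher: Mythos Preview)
Your proof is correct and follows essentially the same line as the paper's own argument: both invoke Proposition~\ref{SuffandExis} to say the waiting time at the designated state $(Y_j^\ast,\,Q-D_j^\ast)=(B_L,\,3B_U)$ is determined by that pair alone, observe these coordinates are $Q$-independent, and conclude $Q-Q^{BP}=3B_U-Z_j^\ast$ is fixed. Your version is slightly more explicit in writing out $Q-Q^{BP}=3B_U - z(B_L,3B_U)$, but the substance is the same.
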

\begin{proof}
This corollary is an immediate result of Proposition \ref{SuffandExis} and the definition of $R_{j+1}^*$. The request point $R_{j+1}^*$ determined by a deterministic optimal policy $\pi_1^{opt}$ is the border point when $D_j^* = Q-3B_U$ and $\Delta(D_j^*) = Y_j^* = B_L$ regardless of the exact time of the query. The optimal waiting period at $D_j^*$ is solely determined by $\pi_1^{opt}$ based on $Q-D_j^*$ and $\Delta(D_j^*)$ by  Proposition \ref{SuffandExis}. As $Q$ changes, $Q-D_j^*$ and $\Delta(D_j^*)$ do not change. Hence, $Z_j^*$ does not change. As $R_{j+1}^* = Q-3B_U+Z_j^*$, the proof is completed.
\end{proof}

We next prove in Lemma \ref{BorderPointSub} that if a delivery point occurs before $Q-3B_U$, then it is optimal to wait until the border point to place a request.

\begin{lemma}\label{BorderPointSub}
Let $Q^{BP}$ be the border point of a query $Q$ and a deterministic optimal policy $\pi_1^{opt}$. Then, for any delivery point $D_j$ satisfying $D_j< Q-3B_U$, the border point $Q^{BP}$ is an optimal request point under the policy $\pi_1^{opt}$.
\end{lemma}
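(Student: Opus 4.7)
The plan is to show, via the optimality equation~\eqref{GD} of the single query problem, that an optimal policy starting from delivery $D_j$ with $D_j < Q - 3B_U$ can be arranged so that a request eventually lands exactly at $Q^{BP}$, and that no policy can do better.

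First I would invoke Proposition~\ref{SuffandExis} to fix a stationary deterministic $\pi_1^{opt} \in \Pi_{SD}$. Under such a policy the waiting period at any delivery depends only on the pair $(Q - D_j, \Delta(D_j))$, and the dynamic-programming recursion~\eqref{GD} reads, on the regime $Z_j \le Q - D_j$,
\begin{equation*}
G_D^{\pi_1^{opt}}(Q - D_j, \Delta(D_j)) = \min_{Z_j \in [0,M]} G_R^{\pi_1^{opt}}(Q - D_j - Z_j,\, \Delta(D_j) + Z_j).
\end{equation*}
The key collapse is Proposition~\ref{AoISubNotAffect}: whenever the candidate request point $R_{j+1} = D_j + Z_j$ satisfies $R_{j+1} \le Q - B_U$, the two-argument function above reduces to the single-argument $G_R^{\pi_1^{opt}}(Q - R_{j+1})$, which by \eqref{BorderPoint} and Proposition~\ref{BorderPointProp} is minimized at $R_{j+1} = Q^{BP}$.

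Next I would argue, by an exchange argument that directly extends the first part of the proof of Proposition~\ref{BorderPointProp}, that an optimal policy never benefits from placing its next request strictly past $Q - B_U$: such a request leaves positive probability that the fresh update arrives after $Q$, and replacing it by an earlier request in $[0, Q - B_U]$ cannot increase the expected age penalty because $g$ is nondecreasing. Combining this with the preceding step, when $Q^{BP} - D_j \le M$ the choice $Z_j = Q^{BP} - D_j$ is both feasible and optimal, so $R_{j+1} = Q^{BP}$ is an optimal next request point.

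The main obstacle is the residual case $Q^{BP} - D_j > M$, where $Q^{BP}$ cannot be reached in a single wait. I would handle this by iteration: the optimal next request still lies in $[D_j, Q - B_U]$ by the previous step, and the subsequent delivery $D_{j+1} = R_{j+1} + Y_{j+1}$ advances toward $Q$ by at least $B_L > 0$, while Corollary~\ref{DiffRandRBP} ensures that $Q^{BP}$ itself does not shift. A finite induction reduces the problem to the base case $Q^{BP} - D_{j'} \le M$, at which point the previous argument applies and places a request at $Q^{BP}$. The delicate point is to verify that the intermediate requests (which may land anywhere in the admissible window) never lower the expected penalty below $G_R^{\pi_1^{opt}}(Q - Q^{BP})$; this follows because only the final request point before $Q$ enters the expected penalty through $G_R^{\pi_1^{opt}}$, and Proposition~\ref{BorderPointProp} bounds its contribution from below by $G_R^{\pi_1^{opt}}(Q - Q^{BP})$.
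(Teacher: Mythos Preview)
Your proposal is broadly correct and uses the same two ingredients as the paper, but the route is different in structure. The paper's proof is a two-line contradiction: suppose the optimal next request $R_{j+1}$ chosen by $\pi_1^{opt}$ from $D_j$ satisfies $G_R^{\pi_1^{opt}}(Q-Q^{BP}) > G_R^{\pi_1^{opt}}(Q-R_{j+1},\Delta(R_{j+1}))$; then $R_{j+1}\notin[0,Q-B_U)$ by the defining infimum property~\eqref{BorderPoint} of the border point, and $R_{j+1}\notin[Q-B_U,Q]$ by Lemma~\ref{RequestUntilQ-3BU} (the exchange argument you sketch in step~3, which the paper proves separately in Appendix~\ref{app:BorderPointProp}). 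That exhausts all possibilities, so no such $R_{j+1}$ exists.

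Your version reconstructs the same conclusion constructively: you rederive Lemma~\ref{RequestUntilQ-3BU} informally, then invoke~\eqref{BorderPoint} to identify $Q^{BP}$ as the minimizer over $[0,Q-B_U]$. The main substantive difference is that you explicitly confront the feasibility issue $Q^{BP}-D_j>M$ and resolve it by iteration, whereas the paper's contradiction argument sidesteps this entirely (it only shows no feasible request does strictly better than $Q^{BP}$, tacitly treating $Q^{BP}$ as reachable). Your handling of this point is more careful, though the inductive step could be sharpened: the cleanest argument is that any sequence of zero-wait requests from $D_j$ produces deliveries spaced by at most $B_U$, so some delivery $D_{j'}$ lands in $(Q^{BP}-B_U,Q^{BP}]$, and since all these intermediate requests lie in $[0,Q-B_U]$, Proposition~\ref{AoISubNotAffect} guarantees the expected penalty from $D_{j'}$ is still governed by the one-argument $G_R^{\pi_1^{opt}}$; a final wait of length $Q^{BP}-D_{j'}$ then achieves $G_R^{\pi_1^{opt}}(Q-Q^{BP})$ exactly. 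What your approach buys is an explicit achieving policy and robustness to the bound $M$; what the paper's buys is brevity, since once Lemma~\ref{RequestUntilQ-3BU} and Proposition~\ref{BorderPointProp} are in hand the lemma is immediate.
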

\begin{proof}
To reach contradiction, suppose that the claim is false. Then, there exists a delivery point $D_j \in [0, Q-3B_U)$ and an AoI at the delivery $\Delta(D_j)$ such that the request point $R_{j+1}$ determined by a deterministic optimal policy $\pi_1^{opt}$ satisfies the following: $G^{\pi^{opt}_1}_R(Q-Q^{BP}) > G^{\pi^{opt}_1}_R(Q-R_{j+1}, \Delta(R_{j+1}))$\footnote{As $R_{j+1}$ can be in the interval $[Q-B_U, Q]$, the $G^{\pi^{opt}_1}_R$ function should be written with AoI argument.}. By \eqref{BorderPoint}, $R_{j+1}$ cannot be in the interval $[0, Q-B_U)$. By Lemma \ref{RequestUntilQ-3BU} that is given in Appendix \ref{app:BorderPointProp}, $R_{j+1}$ cannot be in the interval $[Q-B_U, Q]$ as well. This completes the proof.
\end{proof}
\begin{corol}\label{CharacOfPolicy}
There exists a deterministic optimal policy $\pi_1^{opt}$ for a given query $Q$ satisfying $Q> 3B_U$ such that the first request point is the border point.
\end{corol}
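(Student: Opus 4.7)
The plan is to derive the corollary as a one-step modification of any deterministic optimal policy supplied by Proposition \ref{SuffandExis}, with Lemma \ref{BorderPointSub} providing the key certificate of optimality at the initial stage. The critical observation is that under the hypothesis $Q > 3B_U$, the starting epoch, identified with a virtual zeroth delivery at $D_0 = 0$, satisfies $D_0 = 0 < Q - 3B_U$, so Lemma \ref{BorderPointSub} applies at $j = 0$ and names $Q^{BP}$ as an optimal request point from that state.

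Concretely, I would first invoke Proposition \ref{SuffandExis} to secure some deterministic optimal policy $\hat\pi \in \Pi_{SD}$ with decision function $\hat z$, attaining $\bar h^{one}_{opt}(Q)$. Next, by Lemma \ref{BorderPointSub} applied with $j=0$, requesting at $R_1 = Q^{BP}$ yields the same minimum expected age penalty as the action prescribed by $\hat\pi$ at the initial stage. I would then define the desired $\pi_1^{opt}$ by altering $\hat z$ only at the initial argument so that the first waiting period is $Z_0 = Q^{BP}$, leaving $\hat z$ untouched at every later argument. The new policy remains in $\Pi_{SD}$ by inspection, its first request point equals $Q^{BP}$ by construction, and it remains optimal because the post-initial dynamics use the unchanged $\hat z$ and the i.i.d.\ structure of $\{Y_j\}$ ensures identical conditional expected penalty from stage $1$ onward.

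The main obstacle I expect is not a substantive mathematical one but a bookkeeping step: confirming that Lemma \ref{BorderPointSub}'s statement, which asserts that $Q^{BP}$ is an optimal request point under a generic deterministic optimal policy at a generic post-delivery state with $D_j < Q - 3B_U$, specializes correctly to the initial epoch, and that the single-input modification of the decision function preserves both membership in $\Pi_{SD}$ and the overall expected age penalty. Once this is carefully unpacked by appealing to the stationary Markov structure established in Proposition \ref{SuffandExis}, the corollary follows at once.
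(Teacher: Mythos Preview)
Your proposal is correct and follows essentially the same route as the paper: the paper's proof simply says the corollary is ``an immediate result of Lemma~\ref{BorderPointSub} and the designation of the border point in Proposition~\ref{BorderPointProp},'' which is precisely your argument of specializing Lemma~\ref{BorderPointSub} to the initial virtual delivery $D_0=0<Q-3B_U$ and then modifying the deterministic optimal policy from Proposition~\ref{SuffandExis} at stage~$0$ only. Your additional bookkeeping about $\Pi_{SD}$ membership and the stationary Markov structure merely makes explicit what the paper leaves implicit, so no substantive difference exists.
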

\begin{proof}
This is an immediate result of Lemma \ref{BorderPointSub} and the designation of the border point in Proposition \ref{BorderPointProp}.
\end{proof}
\begin{corol}\label{honeopt}
If $Q > 3B_U$, $\bar{h}^{one}_{opt}(Q)$ is independent of the exact time of the query $Q$. In other words, we can define $\bar{h}^{one}_{opt}$ as the following:
\begin{equation}
    \bar{h}^{one}_{opt} \delequal \bar{h}^{one}_{opt}(Q) = G^{\pi_1^{opt}}_R(Q-Q^{BP}) 
\end{equation}
where $\pi_1^{opt}$ is a deterministic optimal policy and $Q^{BP}$ is their border.
\end{corol}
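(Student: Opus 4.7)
The plan is to combine Corollary \ref{CharacOfPolicy}, Corollary \ref{DiffRandRBP}, and Proposition \ref{SuffandExis} into a short three-step argument. First, I would appeal to Corollary \ref{CharacOfPolicy}: since $Q > 3B_U$, there is a deterministic optimal policy $\pi_1^{opt}$ whose very first request point equals the border point $Q^{BP}$. Because $Q^{BP} \leq Q - B_U$ by Proposition \ref{BorderPointProp}, the one-argument form of $G_R^{\pi_1^{opt}}$ is applicable, and unpacking the definitions of $\bar{h}^{one}_{opt}(Q)$ and $G_R^{\pi_1^{opt}}$ yields the identity $\bar{h}^{one}_{opt}(Q) = G_R^{\pi_1^{opt}}(Q - Q^{BP})$ immediately.

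Next, I would invoke Corollary \ref{DiffRandRBP} to conclude that the gap $Q - Q^{BP}$ equals some constant $c$ that is independent of the exact time of $Q$. What remains is to argue that $G_R^{\pi_1^{opt}}(c)$ itself does not depend on $Q$. For this I would lean on Proposition \ref{SuffandExis}: the optimal policy can be chosen stationary and deterministic with a decision rule $z(Y_j, Q - D_j)$ that reads only the AoI at the delivery and the remaining time to the query, and the transmission delays $\{Y_j\}$ are i.i.d. and independent of $Q$. Thus, starting from a first request made at residual time $c$ before the query, the joint distribution of all subsequent deliveries, requests, and ages-at-query is determined solely by $c$ and the channel statistics, and is translation-invariant in $Q$. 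Consequently $G_R^{\pi_1^{opt}}(c)$, and hence $\bar{h}^{one}_{opt}(Q)$, depends only on $c$ rather than on $Q$.

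The argument chains three previously established results, so I do not anticipate any serious obstacle. The one subtle point is verifying that $G_R^{\pi_1^{opt}}$ viewed as a one-argument function does not implicitly carry $Q$-dependence through the policy $\pi_1^{opt}$; this is precisely what Proposition \ref{SuffandExis} rules out by guaranteeing a decision rule that depends only on the remaining time and the AoI rather than on absolute clock time.
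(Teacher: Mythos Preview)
Your proposal is correct and follows essentially the same approach as the paper: invoke Corollary~\ref{CharacOfPolicy} to get $\bar{h}^{one}_{opt}(Q) = G^{\pi_1^{opt}}_R(Q-Q^{BP})$, then Corollary~\ref{DiffRandRBP} to make $Q-Q^{BP}$ a constant, and finally argue translation-invariance of $G^{\pi_1^{opt}}_R$ at that constant. If anything, you are slightly more explicit than the paper in justifying the last step via Proposition~\ref{SuffandExis} and in noting that $Q^{BP}\le Q-B_U$ so the one-argument form of $G_R^{\pi_1^{opt}}$ is legitimate; the paper leaves both of these implicit.
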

\begin{proof}
From Corollary \ref{CharacOfPolicy}, there exists a deterministic optimal policy $\pi_1^{opt}$ whose first request point is the border for a given query $Q$ satisfying $Q> 3B_U$. This means that $ \bar{h}^{one}_{opt}(Q) = G^{\pi_1^{opt}}_R(Q-Q^{BP})$. Furthermore, the time duration between $Q-Q^{BP}$ does not change when $Q$ is shifted by Corollary \ref{DiffRandRBP}. Hence, the expected age penalty at the border point for any $Q>3B_U$ is the same because the destination node can request an update packet at the border point under an optimal policy. As a result, we can define $\bar{h}^{one}_{opt} \delequal \bar{h}^{one}_{opt}(Q)$. This completes the proof. 
\end{proof}
\par Thus far, we have shown the existence of an optimal policy $\pi_1^{opt}$ that has two important properties:
\begin{itemize}
    \item $\pi_1^{opt}$ is a deterministic optimal policy that decides the waiting period at $D_j$ solely based on the $Q-D_j$ and $\Delta(D_j)$.
    \item The first request point of the policy $\pi_1^{opt}$ is in the interval $[Q-3B_U, Q-B_U]$.
\end{itemize}
These two properties enable us to transform the optimal update policy of the single query problem into an optimal update policy of the PoW problem under periodic query arrivals.
\subsection{Periodic Sequence of Queries}\label{sect:PeriodicQueryCase}
\par In this subsection and next subsection, we assume that the query arrival process $\{Q_k\}_{k=1}^{\infty}$ is deterministic and periodic with $T$. Let $Q_k = kT$, for $k = 1,2, \dots$. Furthermore, we assume that $T>4B_U$.\footnote{Considering the delay in many practical communication links is expected to be much lower than the query period for typical applications, this assumption is not restrictive for many practical cases of interest.} Based on these assumptions, we construct an optimal update policy $\pi^{opt}$ for a periodic sequence of queries in the next proposition. Then, we point out the properties of the update policy $\pi^{opt}$ based on the next proposition.  
\begin{prop}\label{InfRequest}
If the transmission delay process $\{Y_j\}_{j=0}^{\infty}$ is i.i.d. such that $\Pr(Y_j \in [B_L, B_U] ) = 1$ and the query arrival process, $\{Q_k\}_{k=1}^{\infty}$, is deterministic and periodic with $T> 4B_U$, then $\bar{h}_{opt}$ is equal to $\bar{h}^{one}_{opt}$.
\end{prop}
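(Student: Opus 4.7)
The plan is to prove $\bar{h}_{opt} = \bar{h}^{one}_{opt}$ via matching upper and lower bounds.

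For the lower bound $\bar{h}_{opt} \geq \bar{h}^{one}_{opt}$, fix any causal policy $\pi \in \Pi$ and any $k \geq 1$. Since the query arrivals are deterministic and $\{Y_j\}$ is i.i.d., the restriction of $\pi$ to $[0,Q_k]$ is a valid causal policy for the single query problem at $Q_k$. Because $Q_k = kT > 4B_U > 3B_U$, Corollary \ref{honeopt} gives $\bar{h}^{one}_{opt}(Q_k) = \bar{h}^{one}_{opt}$, so $E_\pi[g(\Delta(Q_k))] \geq \bar{h}^{one}_{opt}$. Summing over $k$, dividing by $n$, and taking $\limsup$ yields $\bar{h}_{opt} \geq \bar{h}^{one}_{opt}$.

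For the upper bound $\bar{h}_{opt} \leq \bar{h}^{one}_{opt}$, I would construct an explicit policy $\pi^{opt}$ that attains expected penalty $\bar{h}^{one}_{opt}$ at every query. By Corollary \ref{DiffRandRBP}, the border offset $\tau := Q_k - Q_k^{BP}$ does not depend on $k$ and lies in $[B_U, 3B_U]$. Define $\pi^{opt}$ by partitioning time into disjoint single-query emulation windows: on $[Q_k^{BP}, Q_k]$ run $\pi_1^{opt}$ aimed at $Q_k$ with its first request placed at $Q_k^{BP}$; on the complementary intervals $(Q_{k-1}, Q_k^{BP})$ keep the channel idle. Proposition \ref{BorderPointProp} guarantees that all requests under $\pi_1^{opt}$ occur no later than $Q_k - B_U$, so the last delivery of the $k$-th emulation occurs by time $Q_k$. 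Combined with $Q_{k+1}^{BP} - Q_k = T - \tau \geq T - 3B_U > B_U > 0$, this shows that consecutive windows do not overlap and the first request of each new window can be placed on an idle channel, so $\pi^{opt}$ is well defined.

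It remains to verify $E_{\pi^{opt}}[g(\Delta(Q_k))] = \bar{h}^{one}_{opt}$. The age at the border point $Q_k^{BP}$ under $\pi^{opt}$ is in general large, as it is determined by deliveries in previous emulation windows, so one cannot directly invoke $\pi_1^{opt}$'s value starting from a clean state. The workaround is Proposition \ref{AoISubNotAffect}: since $Q_k - Q_k^{BP} = \tau \geq B_U$, the age at $Q_k^{BP}$ does not affect the expected age penalty at $Q_k$ under any deterministic policy, so
\begin{equation*}
E_{\pi^{opt}}[g(\Delta(Q_k))] = G_R^{\pi_1^{opt}}(\tau) = \bar{h}^{one}_{opt}
\end{equation*}
by Corollary \ref{honeopt}. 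Averaging over $k$ and taking $\limsup$ gives $\bar{h}_{opt} \leq \bar{h}^{one}_{opt}$, closing the sandwich.

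I anticipate the main obstacle to be the careful scheduling bookkeeping showing that consecutive emulation windows never conflict; this is exactly where $T > 4B_U$ (as opposed to a weaker separation) is indispensable, since $\tau$ can be as large as $3B_U$ and an extra $B_U$ of slack is needed to absorb the transmission delay of the final request of the preceding window. Once these constraints are checked, the invocation of Proposition \ref{AoISubNotAffect} to strip away the history-dependence of the age at $Q_k^{BP}$ is the conceptual step that decouples the per-query analysis and reduces the long-run average to the single-query value.
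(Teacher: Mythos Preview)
Your proposal follows essentially the same strategy as the paper: the lower bound is immediate, and the upper bound is obtained by replaying $\pi_1^{opt}$ in each query interval, checking that $Q_{k+1}^{BP}\ge Q_{k+1}-3B_U>Q_k+B_U$ so the channel is free when the next border point is reached, and then invoking the insensitivity of $G_R^{\pi_1^{opt}}$ to the initial age at $Q_k^{BP}$. One inaccuracy: Proposition~\ref{BorderPointProp} does \emph{not} guarantee that all requests of $\pi_1^{opt}$ lie no later than $Q_k-B_U$ (it only locates the border point), so your claim that the last delivery occurs by $Q_k$ is unjustified; the correct statement, which the paper uses and which your own closing paragraph reflects, is that the last request is no later than $Q_k$ and hence the channel is idle by $Q_k+B_U$, which is exactly what your inequality $Q_{k+1}^{BP}-Q_k>B_U$ covers.
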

\begin{proof}
It is clear that $\bar{h}^{one}_{opt} \leq \bar{h}_{opt}$. Otherwise, it would contradict the optimal solution of the single query problem. Therefore, it is enough to construct an update policy $\pi^{opt}$ achieving $\bar{h}^{one}_{opt}$ of expected age penalty for the periodic sequence of queries.
\par Let $\pi_1^{opt}$ be the optimal policy of the single query problem characterized in Corollary \ref{CharacOfPolicy}. Let $Q_i^{BP}$ be the border point of $Q_i$ and $\pi_1^{opt}$. From the starting point, $\pi^{opt}$ can follow $\pi_1^{opt}$ between $[0, Q_1]$. This can be performed because $\pi_1^{opt}$ decides to wait until $Q_1^{BP}$ and $Q_1^{BP} \geq Q_1 - 3B_U > 0$. From Corollary \ref{honeopt}, the expected age penalty at $Q_1$ is $G^{\pi_1^{opt}}_R(Q_1 - Q_1^{BP})$. As the policy $\pi^{opt}$ follows $\pi_1^{opt}$ until the point $Q_1$, the channel must be idle before $Q_1+B_U$ as the transmission delay can be at most $B_U$. When the channel is idle, $\pi^{opt}$ can follow $\pi_1^{opt}$ again, but this time the policy is performed for the query $Q_2$. The act of following the policy $\pi_1^{opt}$ is possible because $Q_2^{BP} \geq Q_2-3B_U > Q_1 + B_U$. Hence, the expected age penalty at $Q_2$ is $G^{\pi_1^{opt}}_R(Q_2-Q_2^{BP})$ by Corollary \ref{honeopt}. For the remaining queries $Q_3, Q_4, \dots$, it can be replicated similar to $Q_2$. Then, the expected age penalty at every query $Q_k$ is  $G^{\pi_1^{opt}}_S(Q_k-Q_k^{BP})$. From Corollary \ref{DiffRandRBP}, all of the expected age penalties are equal to $\bar{h}^{one}_{opt}$.
\end{proof}
\par The previous proposition allows us to decouple the immediate next query from the set of all the queries while constructing an optimal policy $\pi^{opt}$ for the PoW problem under periodic query arrivals. As a result, the update policy $\pi^{opt}$ takes only the immediate next query into account. This decoupling property enables us to solve the PoW problem without a discount factor. The next corollary presents another result of the decoupling property.   
\begin{corol}\label{StatandDet}
Let $A_j \delequal Q - T \left \lfloor \frac{D_j}{T} \right \rfloor$ that represents the remaining time until the next query at a delivery point $D_j$. The update policy $\pi^{opt}$ constructed in Proposition \ref{InfRequest} is a stationary and deterministic policy, which is a function of $\Delta(D_j) = Y_j$ and $A_j$.
\end{corol}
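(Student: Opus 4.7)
The plan is to read both claimed properties straight off the construction of $\pi^{opt}$ given in the proof of Proposition \ref{InfRequest}, combined with the deterministic sufficient-statistic structure already established for $\pi_1^{opt}$ in Proposition \ref{SuffandExis}.

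First I would note that, by the construction in Proposition \ref{InfRequest}, at every delivery point $D_j$ the policy $\pi^{opt}$ simply invokes the single-query optimal policy $\pi_1^{opt}$ aimed at whichever query $Q_k$ comes next. Because $T > 4B_U$ and because the border point of $Q_k$ lies in $[Q_k - 3B_U,\, Q_k - B_U]$ by Proposition \ref{BorderPointProp}, two consecutive single-query episodes cannot interfere: the latest delivery that can be generated while aiming at $Q_{k-1}$ occurs by $Q_{k-1} + B_U$, which is strictly less than $Q_k - 3B_U$, the earliest possible request associated with $Q_k$ under $\pi_1^{opt}$. Hence each $D_j$ belongs unambiguously to the single-query episode of a unique next query, and at that delivery $\pi^{opt}$ picks the waiting period prescribed by $\pi_1^{opt}$ for that query.

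Next I would apply Proposition \ref{SuffandExis}: it tells us that $\pi_1^{opt}$ is deterministic and that its waiting period at a delivery is a fixed function $z$ of the pair (remaining time to the target query, age at the delivery). In the periodic setting the age at $D_j$ equals $Y_j$, and the remaining time to the next upcoming query is exactly $A_j$ as defined in the statement. Therefore $Z_j = z(Y_j, A_j)$, which is precisely the form required by the definition of $\Pi_{SD}$. Crucially, the mapping $z$ does not depend on the stage index $j$ nor on the history prior to $D_j$, because a single fixed single-query policy $\pi_1^{opt}$ is re-used at every stage; this gives stationarity, and combined with determinism yields $\pi^{opt} \in \Pi_{SD}$.

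The only genuinely subtle step will be verifying the non-overlap of consecutive single-query episodes, so that invoking $\pi_1^{opt}$ afresh at each $D_j$ is well defined and coincides with the decoupled construction of Proposition \ref{InfRequest}. This step relies entirely on the assumption $T > 4B_U$ and the localization of the border point in $[Q_k - 3B_U, Q_k - B_U]$; everything else in the corollary is a direct consequence of the deterministic sufficient-statistic structure of $\pi_1^{opt}$.
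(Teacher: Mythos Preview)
Your proposal is correct and follows essentially the same line as the paper's own proof: both argue that $\pi^{opt}$ is a repeated application of $\pi_1^{opt}$, then invoke Proposition~\ref{SuffandExis} to conclude that the waiting time at each delivery is a deterministic function of $(Y_j, Q_k - D_j) = (Y_j, A_j)$. Your explicit justification of the non-overlap of consecutive single-query episodes (via $T > 4B_U$ and the localization of the border point) is a welcome elaboration that the paper leaves implicit from the construction in Proposition~\ref{InfRequest}, but it does not change the argument's structure.
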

\begin{proof}
The update policy $\pi^{opt}$ is a repetitive employment of the update policy $\pi_1^{opt}$, that is characterized in Corollary \ref{CharacOfPolicy}. Therefore, $\pi^{opt}$ possesses all the properties of $\pi_1^{opt}$. As $\pi_1^{opt}$ is solely determined based on $Q-D_j$ and $\Delta(D_j)$ by Proposition \ref{SuffandExis}, $\pi^{opt}$ is stationary and deterministic function of $A_j$ and $\Delta(D_j) = Y_j$. 
\end{proof}
\par Note that we prove in Corollary \ref{StatandDet} that the constructed policy $\pi^{opt}$ is a stationary and deterministic policy, which is a function of $A_j$ and $Y_j$. We also show in Proposition \ref{InfRequest} that the optimal update policy for the PoW problem under periodic query arrivals turns out to \textit{myopic} in the sense that at any delivery point, the decision about the optimal waiting time does not depend on future queries other than the immediate next one. Therefore, what remains to solve the PoW problem is to find an optimal policy for the single query problem, and apply it at each consecutive query interval.

\subsection{Explicit Solution of PoW Problem}\label{sect:OptimalSolutionSection}

\par In the previous subsection, we exploited the decoupling property Proposition \ref{InfRequest} to show that one can construct a solution of the PoW problem under periodic query arrivals through employing a sequence of deterministic policies that each solve the single query problem. In this subsection, we provide an explicit solution of the single query problem by generating a sequence of update policies that are solutions of stochastic shortest path problems with finite state and action spaces obtained by quantization. Then, we show that the sequence of update policies converges to an optimal policy of the single query problem with increasingly fine quantization. The quantization argument is given next.
\iffalse
\par In this section, we show our solution of Problem \eqref{MinSingleQuery}. We create a finite state and action spaces through quantization. Then, we convert the optimal update policy for the finite state and action spaces into an update policy for the uncountable state and action spaces. Finally, we show that the update policies converted from finite state and action spaces converge to an optimal policy for Problem \eqref{MinSingleQuery} with increasingly fine quantization.
\fi
\par We divide the real line interval $[0, Q]$ into $N$ equal sub-intervals, and define two new transmission delay processes: 
\begin{enumerate}
    \item \underline{Upper Quantized Transmission Delay Process:}  If a transmission delay $Y_j$ occurs with a probability in a transmission delay process, the transmission delay is quantized to $\frac{Q}{N}\left \lceil \frac{Y_j}{Q/N} \right \rceil $ with the same probability in its upper quantized transmission delay process. In other words, for every $m \in \mathbb{N}$, we have the following:
    \begin{equation}\small
        \Pr\bigg(Y_j^{upp} = m\frac{Q}{N}  \bigg) = \Pr\Bigg( Y_j \in \bigg((m-1)\frac{Q}{N}, m\frac{Q}{N}\bigg]\Bigg)
    \end{equation}
    \item \underline{Lower Quantized Transmission Delay Process:} If a transmission delay $Y_j$ occurs with a probability in a transmission delay process, the transmission delay is quantized to $\frac{Q}{N}\left \lfloor \frac{Y_j}{Q/N} \right \rfloor $ with the same probability in its lower quantized transmission delay process. In other words, for every $m \in \mathbb{N}$, we have the following:
    \begin{equation}\small
        \Pr\bigg(Y_j^{low} = (m-1)\frac{Q}{N}  \bigg) = \Pr\Bigg( Y_j \in \bigg[(m-1)\frac{Q}{N}, m\frac{Q}{N}\bigg)\Bigg) \\
    \end{equation}
\end{enumerate}
\par Even though the transmission delays are quantized, an optimal policy can determine waiting periods in the real interval $[0, M]$. Hence, the state space is still an uncountable set. The next proposition allows us to restrict the state space to a finite set.
\begin{prop}\label{CountState}
When a quantization on the transmission delay is performed for any number of sub-intervals $N$, there exists an optimal update policy whose request points are in the set $\Big\{ 0, \frac{Q}{N}, \frac{2Q}{N}, \dots, Q \Big\}$. 
\end{prop}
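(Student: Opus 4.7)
The plan is to take any feasible deterministic policy $\pi$ (whose existence is ensured by Proposition \ref{SuffandExis}), construct a policy $\pi'$ whose request points all lie on the grid $\mathcal{G}_N := \{0,\tfrac{Q}{N},\tfrac{2Q}{N},\ldots,Q\}$, and show pathwise that $\pi'$ incurs no larger age penalty than $\pi$. Existence of an optimal grid policy then follows from standard finite stochastic shortest path theory, since the quantized dynamics restricted to $\mathcal{G}_N$ form a finite-state, finite-action MDP.

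I would first observe that requests issued after $Q$ do not affect $\Delta(Q)$, so without loss $R_j \in [0,Q]$ for every $j$. Coupling $\pi$ and $\pi'$ to the same realization of $\{Y_j\}$, I define pathwise $R_j' := \lceil R_j/(Q/N)\rceil \cdot Q/N \in \mathcal{G}_N$. Two properties of this rounding-up map need to be verified, each leveraging that every quantized $Y_j$ is an integer multiple of $Q/N$.

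\emph{(i) Causality.} From $R_{j+1}\geq R_j + Y_j$ and the fact that $Y_j/(Q/N)\in \N$, the ceiling distributes over the integer shift to give
\begin{equation*}
R_{j+1}' = \Big\lceil \tfrac{R_{j+1}}{Q/N}\Big\rceil \tfrac{Q}{N} \geq \Big\lceil \tfrac{R_j}{Q/N}\Big\rceil \tfrac{Q}{N} + Y_j = R_j' + Y_j = D_j',
\end{equation*}
so $\pi'$ is causal on the grid.

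\emph{(ii) Pathwise age monotonicity.} Let $j^*$ denote the last update delivered by time $Q$ under $\pi$, so $R_{j^*} \leq Q - Y_{j^*}$. Since $Q - Y_{j^*}$ itself lies in $\mathcal{G}_N$, rounding up preserves this inequality, giving $R_{j^*}' \leq Q - Y_{j^*}$; hence packet $j^*$ is also delivered by $Q$ under $\pi'$. The last successful index $j^{**}$ under $\pi'$ therefore satisfies $R_{j^{**}}' \geq R_{j^*}' \geq R_{j^*}$, so $\Delta(Q)|_{\pi'} \leq \Delta(Q)|_{\pi}$ on every sample path. Monotonicity of $g$ and taking expectations then yield $E[g(\Delta(Q))\mid \pi'] \leq E[g(\Delta(Q))\mid \pi]$.

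Once attention is restricted to policies with request points in $\mathcal{G}_N$, the reachable states $(Q-D_j,Y_j)$ and the admissible actions (waiting times that are multiples of $Q/N$ within $[0,M]$) are both drawn from finite sets; since $B_L>0$ forces termination (i.e., $D_j>Q$) in finite expected time under every policy, standard finite stochastic shortest path theory then delivers a deterministic optimal policy whose request points lie in $\mathcal{G}_N$. The delicate step is (ii): it crucially relies on $Q-Y_{j^*}$ being an exact grid point, which is precisely why the quantization step size $Q/N$ is chosen to be a divisor of $Q$; without this alignment, rounding up a request could push a previously successful packet past the deadline and spoil the pathwise comparison.
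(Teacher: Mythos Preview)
Your argument is correct and reaches the result by a different route than the paper. The paper proceeds by induction on $\lceil (Q-D_j)/B_L\rceil$: assuming the optimal continuation from every delivery point closer to $Q$ already lives on the grid, it rounds a single off-grid optimal request $R_{j+1}$ up to $R_{j+1}^{mod}$ and checks that the resulting $D_{j+1}^{mod}$ lies on the same side of every grid threshold $m\,Q/N$ as $D_{j+1}$ did, so the (inductively grid-valued) continuation and its cost are unchanged. You instead round up \emph{every} request simultaneously and compare sample paths directly. The observation you isolate in step~(ii)---that $Q-Y_{j^*}$ is itself a grid point, so rounding up cannot push a previously successful packet past the query---is precisely the mechanism hidden inside the paper's local modification, but your global coupling exposes it in one line and avoids the layered bookkeeping. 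The price is that your $\pi'$ is causal but not Markov in $(Y_j,\,Q-D_j)$, so you need the separate finite-SSP appeal at the end to recover a stationary optimal grid policy, whereas the paper's induction stays inside $\Pi_{SD}$ throughout.

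One small technical point, which your write-up shares with the paper's own proof: after rounding, $Z_j'=R_{j+1}'-D_j'$ can overshoot $M$ by up to one grid step when $M$ is not a multiple of $Q/N$, so $\pi'$ need not lie in $\Pi$ as defined. This is immaterial in the regime the paper actually uses (the bound $Z_j\le M$ is never binding for the optimal single-query policy), but strictly speaking both arguments would need either to take $M$ on the grid or to add a short remark showing the extra grid action is never needed.
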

\begin{proof}
The proof is provided in Appendix \ref{app:CountState}.
\end{proof}
\par The state and action spaces for lower and upper quantizations of a transmission delay process becomes finite because the ages at the delivery points are quantized and the possible delivery points form a finite set as a result of Proposition \ref{CountState}. Then, we can define the spaces of $A_j, Y_j,$ and $Z_j$ as follows:
\begin{deftn}
For a given query $Q$, let us define the following sets:
\begin{itemize}
    \item $\mathcal{A}^N \delequal  \bigg\{0, \frac{Q}{N}, \frac{2Q}{N}, \dots, Q \bigg\}$
    \item $\mathcal{Z}^N \delequal \bigg\{ 0, \frac{Q}{N}, \frac{2Q}{N}, \dots, \frac{\left \lfloor \frac{M}{Q/N} \right \rfloor Q}{N} \bigg\}$
    \item $\mathcal{Y}^N \delequal \bigg\{ \frac{\left \lfloor \frac{B_L}{Q/N} \right \rfloor Q}{N}, \frac{\big(\left \lfloor \frac{B_L}{Q/N} \right \rfloor+1 \big) Q}{N}, \dots, \frac{\left \lceil \frac{B_U}{Q/N} \right \rceil Q}{N} \bigg\}$
\end{itemize}
\end{deftn}
Up to now, we have only analyzed the optimal update policy for quantized transmission delays. The next proposition puts an upper and a lower bound to the optimal expected age penalty for an unquantized transmission delay process. Furthermore, it proposes an update policy whose expected age penalty lays between the upper and lower bounds with the help of characterization in Section \ref{sect:SingleQueryCase}. 
\begin{prop}\label{QuantizedProp}
For any given transmission delay process and the number of sub-intervals $N$, the following hold:
\begin{enumerate}[(i)]
    \item There exists an update policy for an unquantized transmission delay process whose expected age penalty is less than or equal to the optimal age penalty for the upper quantized transmission delay process. \label{QuantizedPropi}
    \item The optimal expected age penalty for lower quantization of a transmission delay process is less than or equal to the optimal expected age penalty for the unquantized transmission delay process.\label{QuantizedPropii}
\end{enumerate}
\end{prop}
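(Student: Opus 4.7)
The plan is to establish both inequalities by coupling the unquantized transmission delay process with its quantized counterpart and constructing mimic policies whose behaviors can be compared pointwise.

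For part (i), I couple $Y_j$ with its upper quantized version $Y_j^{upp} \delequal \frac{Q}{N}\bigl\lceil Y_j/(Q/N)\bigr\rceil$, so that $Y_j \leq Y_j^{upp} \leq Y_j + Q/N$ pointwise while $Y_j^{upp}$ retains the prescribed upper quantization distribution. Given an optimal policy $\pi^{upp}$ for the upper quantized system, I construct a policy $\tilde\pi$ for the unquantized system that at each delivery $D_j = R_j + Y_j$ computes $Y_j^{upp}$ (a deterministic function of the observation $Y_j$) and then waits for $(Y_j^{upp} - Y_j) + Z_j^{upp}$, where $Z_j^{upp}$ is the waiting time $\pi^{upp}$ would have chosen from the state $(Q - R_j - Y_j^{upp},\, Y_j^{upp})$. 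By induction on $j$, this synchronizes the request points $R_{j+1}$ across the two coupled systems, while $D_j = R_j + Y_j$ occurs no later than $D_j^{upp} = R_j + Y_j^{upp}$. Consequently, the most recent request whose update has arrived by time $Q$ is at least as recent under $\tilde\pi$ as under $\pi^{upp}$, which yields $\Delta(Q) \leq \Delta^{upp}(Q)$ pointwise and hence $E[g(\Delta(Q))] \leq E[g(\Delta^{upp}(Q))]$ by monotonicity of $g$.

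For part (ii), I couple $Y_j$ with $Y_j^{low} \delequal \frac{Q}{N}\bigl\lfloor Y_j/(Q/N)\bigr\rfloor$, so that $Y_j^{low} \leq Y_j$. Starting from a stationary deterministic optimal policy $\pi^{unq}$ for the unquantized process, which exists by Proposition \ref{SuffandExis}, I construct a randomized policy for the lower quantized process: upon observing $Y_j^{low}$ the policy samples $\hat Y_j$ from the conditional law of $Y_j$ given $Y_j^{low}$ and then waits $(\hat Y_j - Y_j^{low}) + \pi^{unq}(Q - R_j - \hat Y_j,\, \hat Y_j)$. The joint law of $(Y_j^{low}, \hat Y_j)$ matches that of $(Y_j^{low}, Y_j)$ under the original coupling, so the request points again synchronize while $D_j^{low} = R_j + Y_j^{low} \leq R_j + \hat Y_j$, and the same pointwise comparison gives $\Delta^{low}(Q) \leq \Delta(Q)$ and thus expected penalty at most $\bar{h}^{one}_{opt}$. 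Since the optimal lower quantized value is an infimum over all causal policies, it is no larger than this.

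The main technical obstacle I foresee is the waiting-time constraint $Z_j \in [0, M]$, since each mimic adds up to $Q/N$ extra wait beyond the decision it copies. I would handle this by noting that Proposition \ref{SuffandExis} only requires $M < \infty$, so we may take $M$ large enough that the additional $Q/N$ slack is always feasible; this is harmless for the intended use of the quantization, in which $N$ is ultimately sent to infinity so $Q/N \to 0$. A secondary subtlety is that the construction in part (ii) formally produces a randomized policy, so the underlying filtration must be enlarged by an independent source of randomness; this is standard and does not affect the inequality.
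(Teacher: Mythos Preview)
Your coupling argument for part (i) is exactly the paper's: synchronize the request times $R_{j+1}$ across the two coupled systems by padding the wait with $Y_j^{upp}-Y_j$, so that deliveries under the true delays occur no later than under the upper-quantized delays, and conclude pointwise via monotonicity of $g$.

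For part (ii) the paper takes a shorter route: it simply applies the unquantized optimal decision function $\pi_1^{opt}$ directly to the lower-quantized states and asserts the bound, without spelling out why this works. Your construction is different and more careful: because the lower-quantized observer sees only $Y_j^{low}$ and cannot compute the waiting time $\pi_1^{opt}$ would have chosen from $(Q-R_j-Y_j,\,Y_j)$, you introduce auxiliary randomness to sample $\hat Y_j$ from the conditional law of $Y_j$ given $Y_j^{low}$ and then mimic the unquantized policy. This keeps the request times equal in distribution to those of the unquantized optimal run, so the pointwise comparison $\Delta^{low}(Q)\le \Delta(Q)$ goes through cleanly. The paper's version is shorter but leaves implicit exactly the step your randomization makes explicit; your version buys a transparent sample-path inequality at the cost of enlarging the filtration, which, as you note, is harmless. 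Your handling of the $M$ constraint (absorb the extra $Q/N$ slack, which vanishes as $N\to\infty$) is also fine and matches the paper's informal treatment.
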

\begin{proof}
For the proof of (i), we construct an update policy for an unquantized transmission delay process whose expected age penalty is less than or equal to the optimal expected age penalty for the upper quantized transmission delay process. There exists an optimal update policy for the upper quantized transmission delay process by Proposition \ref{SuffandExis}. Let $\pi^{opt}_1$ be a deterministic optimal policy that is characterized in Corollary \ref{CharacOfPolicy}. Let $z^{opt}(.,.)$ be the decision function of the update policy $\pi^{opt}_1$. The constructed optimal policy determines $R_{j+1}$ for $j \geq 0$ as the following:
\begin{equation}\small
    R_{j+1} \delequal Q - \frac{Q}{N}\left \lceil \frac{A_j}{Q/N}\right \rceil - z^{opt}\bigg(\frac{Q}{N}\left \lceil \frac{\Delta(Y_j)}{Q/N}\right \rceil,\frac{Q}{N}\left \lceil \frac{A_j}{Q/N}\right \rceil \bigg)
\end{equation}
\par Now, let us prove that this constructed policy gives the desired expected age penalty. Let $(Y_i)_{i=1}^J$ where $J$ is an arbitrary natural number be a transmission delay sequence from the unquantized transmission delay process when an update packet at the border point is requested. The correspondence of the transmission delay sequence on the upper quantized transmission delay process is $\bigg(\frac{Q}{N}\left \lceil \frac{Y_i}{Q/N}\right \rceil \bigg)_{i=1}^J$. If the constructed policy follows the steps above, then the request points are the same for $(Y_i)_{i=1}^J$ and $\bigg(\frac{Q}{N}\left \lceil \frac{Y_i}{Q/N}\right \rceil\bigg)_{i=1}^J$. Thus, for any $D_j$ where $1 \leq j \leq J$, the AoI in the interval $\bigg[Q-D_j, \frac{Q}{N}\left \lceil \frac{(Q-D_j)}{Q/N}\right \rceil \bigg)$ is smaller for the unquantized transmission delay process. For every point outside this interval, the AoI will be the same for both of the transmission delay processes. This is valid for every transmission delay sequence $(Y_i)_{i=1}^J$, hence the expected age penalty for the unquantized transmission delay process is less than or equal to the optimal expected age penalty for the upper quantized transmission delay process.
\par The proof of (ii) is similar to the previous part. Let $\pi^{opt}_1$ be a deterministic optimal policy for the unquantized transmission delay process. By Proposition \ref{SuffandExis}, $\pi^{opt}_1$ can find the optimal waiting period for every $\Delta(D_j)$ and $Q-D_j$. If the destination nodes follow the same update policy $\pi^{opt}_1$ for the lower quantized transmission delay process, the obtained expected age penalty is less than or equal to the optimal expected age penalty for the unquantized transmission delay process. This completes the proof.
\end{proof}
\iffalse
\begin{corol}
For any given transmission delay process and the number of sub-intervals $N$, the following hold:
\begin{enumerate}[(i)]

    \item The optimal expected age penalty for upper quantization of the transmission delay process with $2N$ sub-intervals is less than or equal to the optimal expected age penalty for upper quantization of the transmission delay process with $N$ sub-intervals.  
    \item The optimal expected age penalty for lower quantization of the transmission delay process with $N$ sub-intervals is less than or equal to the optimal expected age penalty for lower quantization of the transmission delay process with $2N$ sub-intervals. 
\end{enumerate}
\end{corol}
\begin{proof}
If the upper quantized of a transmission delay process with $2N$ sub-intervals is again quantized to upper with $N$ sub-intervals, the resulting transmission delay process is equivalent to the upper quantized of the transmission delay process with $N$ sub-intervals. It is valid for lower quantization as well. Then, the proof is immediately coming from Proposition \ref{QuantizedProp}.
\end{proof}
\fi
\par The optimal update policy for the upper quantization of a transmission delay process enables us to construct an update policy for the transmission delay process. The expected age penalty resulting from this constructed update policy is proved to lays between the optimal expected age penalties of the upper and lower quantized transmission delay processes. Furthermore, we show in the next proposition that the upper and lower bounds converge to each other as $N$ increases. Thus, we can find an update policy whose expected age penalty is arbitrarily close to the optimal expected age penalty for any transmission delay process and age penalty function. 
\begin{prop}\label{UpperLowerApproach}
For $\epsilon >0$, there exists $N_1 \in \mathbb{N}$ such that the difference between optimal expected age penalties of upper and lower quantizated transmission delay processes is less than $\epsilon$ if the quantization is performed with $N \geq N_1$ sub-intervals.  
\end{prop}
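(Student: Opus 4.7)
The plan is to bound the gap between the optimal expected age penalties $h^{upp}_N$ and $h^{low}_N$ of the upper and lower quantized single query problems by coupling the two delay processes pathwise and translating the $O(Q/N)$ coupling error into an $O(\epsilon)$ gap via uniform continuity of $g$. I would first observe that under any causal policy $\Delta(Q) \in [0, Q + \Delta(0)]$, so $g$ is uniformly continuous on this compact interval; given $\epsilon > 0$, pick $\delta > 0$ with $|a-b| < \delta \Rightarrow |g(a) - g(b)| < \epsilon/2$. Since consecutive delays are at least $B_L$, the number of deliveries in $[0, Q]$ is bounded by $J_{\max} := \lceil Q/B_L\rceil$ under any policy.

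On a common probability space, I draw $Y_j$ from the unquantized distribution and set $Y_j^{upp} := \frac{Q}{N}\lceil NY_j/Q\rceil$ and $Y_j^{low} := \frac{Q}{N}\lfloor NY_j/Q\rfloor$, so $0 \le Y_j^{upp} - Y_j^{low} \le Q/N$ pathwise. Let $\pi^{low,*}$ be an optimal stationary deterministic policy for the lower quantized problem (guaranteed by Propositions \ref{SuffandExis} and \ref{CountState}), producing requests $R_j^{low}$ and deliveries $D_j^{low}$ on the coupled lower path. I would define a tracking policy $\tilde\pi$ on the upper path by $R_{j+1}^{upp} := \max\{D_j^{upp}, R_{j+1}^{low}\}$, i.e., $\tilde\pi$ matches the lower schedule whenever the heavier upper delay allows. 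A short induction on $j$ then yields $0 \le R_j^{upp} - R_j^{low} \le jQ/N$ and $0 \le D_j^{upp} - D_j^{low} \le jQ/N$. On the event that no $D_j^{low}$ falls in the window $[Q - J_{\max}Q/N, Q]$, the two processes deliver the same updates before $Q$ and $|\Delta^{upp}(Q) - \Delta^{low}(Q)| \le J_{\max}Q/N$.

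To close, I pick $N_1$ so that $J_{\max}Q/N_1 < \delta$ and so that the probability of the missed-delivery event (when some $D_j^{low}$ lies in the window) times $g(Q+\Delta(0))$ is less than $\epsilon/2$. Uniform continuity then yields $E[g(\Delta^{upp}(Q))] \le h^{low}_N + \epsilon$, and by optimality $h^{upp}_N \le h^{low}_N + \epsilon$. A symmetric argument (applying the optimal upper policy to the coupled lower path of pathwise smaller delays, in the spirit of the proof of Proposition \ref{QuantizedProp}(ii)) gives $h^{low}_N \le h^{upp}_N$, so the proposition follows. The hard part will be bounding the missed-delivery probability uniformly in $N$: the distribution of $D_j^{low}$ depends on $\pi^{low,*}$ and is not available in closed form, so one must control the probability mass it places in a length-$J_{\max}Q/N$ window near $Q$. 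A cleaner alternative, avoiding this delicacy, is a finite-horizon Bellman-operator perturbation on the at-most-$J_{\max}$-stage MDP: for bounded $V$, $\|T^{upp}V - T^{low}V\|_\infty = O(\omega_g(Q/N))$ with $\omega_g$ the modulus of continuity of $g$, and iterating $J_{\max}$ times gives the desired bound.
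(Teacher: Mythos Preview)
Your coupling idea is on the right track, and indeed the paper also couples the upper and lower delay sequences and transports an optimal policy from one side to the other. However, your tracking construction $R_{j+1}^{upp}:=\max\{D_j^{upp},R_{j+1}^{low}\}$ lets the upper schedule drift \emph{forward}, which is exactly what creates the missed-delivery event you flag: deliveries that land just before $Q$ on the lower path can land just after $Q$ on the upper path, and the resulting jump in $\Delta(Q)$ is of order $B_U+M$, not $O(Q/N)$. You are right that bounding this event uniformly in $N$ is awkward, because the distribution of $D^{low}_{j^*}$ depends on $\pi^{low,*}$.

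The paper avoids this difficulty altogether by a pre-shift rather than adaptive tracking. It takes the lower-optimal policy (starting at its border point $Q^{BP}$, so at most $J_{\max}\le 3B_U/B_L$ stages remain), and on the upper path it places the \emph{first} request earlier by the deterministic buffer $\frac{Q}{N}\cdot\frac{Q-Q^{BP}}{B_L}\ge J_{\max}\cdot\frac{Q}{N}$, thereafter applying the \emph{same} waiting times $Z_j$ (as an open-loop sequence read off the coupled lower path). The recursion $R_j-R_j^{mod}=(R_1-R_1^{mod})-\sum_{i<j}(Y_i^{upp}-Y_i^{low})$ then gives $0\le R_j-R_j^{mod}<\delta$ for all $j$, and crucially $D_j^{mod}-D_j = -(R_{j+1}-R_{j+1}^{mod})\le 0$: every modified-upper delivery occurs \emph{no later} than the corresponding lower delivery. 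Hence any packet delivered before $Q$ on the lower path is also delivered before $Q$ on the modified upper path, so the last-delivered index can only be the same or larger under the modified upper policy. In either case $\Delta^{mod}(Q)\le\Delta^{low}(Q)+\delta$, and since $g$ is nondecreasing and uniformly continuous on the relevant compact, $g(\Delta^{mod}(Q))\le g(\Delta^{low}(Q))+\epsilon$. No probabilistic control of a ``missed-delivery window'' is needed.

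In short: replace your $\max$-tracking by the paper's backward pre-shift of the first request (using the border-point bound on the horizon). This turns the problematic stochastic event into a deterministic pathwise inequality and closes the gap you identified. Your Bellman-perturbation alternative would also work in principle, but it is heavier machinery than needed here.
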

\begin{proof}
The proof is provided in Appendix \ref{app:UpperLowerApproach}.
\end{proof}
\par Propositions \ref{QuantizedProp} and \ref{UpperLowerApproach} employ optimal solutions of the upper and lower quantized transmission delay processes while constructing an update policy for the unquantized transmission delay process. Hence, the remaining part of this subsection is to solve the stochastic shortest path problem for quantized transmission delay processes. When the transmission delay process is quantized, the problem turns out to be a stochastic shortest path problem with finite state and action spaces as a result of Proposition \ref{CountState}. This problem class can be solved by the value iteration method given the explicit cost of each action in each state\cite{Bertsakas_2005}.
\par To provide an explicit cost of each action in each state, we again use the function $G^{\pi}_R$. We prove in Proposition \ref{SuffandExis} that there exists a deterministic policy $\pi^{opt}_1=z(Y_j, A_j)$ that is optimal for a given transmission delay process $\{Y_i\}$. Then, the following can be obtained by incorporating \eqref{GD} into \eqref{GR}:
 \begin{equation}\small \label{GRonly}
     \begin{split}
         &G^{\pi^{opt}_1}_R\bigg(Q - R_j, \Delta(R_j)\bigg) \\
    & = E\bigg[ G^{\pi^{opt}_1}_R(Q-R_j-Y_j-Z_j, Y_j+Z_j)  
   \bigg| Y_j+Z_j \leq Q-R_j \bigg] \\ 
   &\hspace{3.2cm}\times \Pr(Y_j + Z_j \leq Q-R_j) \\
    &+g\big(Q-R_j+\Delta(R_j)\big) \times \Pr(Y_j + Z_j > Q-R_j) 
     \end{split}
 \end{equation}
 where $Z_j = z(Y_j, A_j)$.
\begin{algorithm}[ht]
 \caption{Solution of the Single Query Problem}\label{algorithm}
 \begin{algorithmic}[1]
  \STATE \textbf{given} tolerance $\epsilon$ and sufficiently large $N$ 
  \STATE \textbf{repeat}
  \FOR {$i = 1$ to $\text{length}(\mathcal{A}^N)$}
  \FOR {$j = 1$ to $\text{length}(\mathcal{Y}^N)$}
  \FOR {$k = 1$ to $\text{length}(\mathcal{Z}^N)$}
  \STATE Calculate 
  \begin{equation}\small
      \begin{split}
          &{}^{upper}G^{\pi^{opt}}_R\bigg(\mathcal{A}^N(i) - \mathcal{Z}^N(k), \mathcal{Y}^N(j) + \mathcal{Z}^N(k)\bigg), \\
          &{}^{lower}G^{\pi^{opt}}_R\bigg(\mathcal{A}^N(i) - \mathcal{Z}^N(k), \mathcal{Y}^N(j) + \mathcal{Z}^N(k)\bigg) \nonumber
      \end{split}
  \end{equation}
  by using \eqref{GRonly}
  \ENDFOR
  \STATE 
  \begin{equation}\small
      \begin{split}
          &{}^{upper}z\bigg(\mathcal{Y}^N(j), \mathcal{A}^N(i)\bigg) = \\
          & \max \Bigg\{ \argmin_{x \in \mathcal{Z}^N}  {}^{upper}G^{\pi^{opt}}_R\bigg(\mathcal{A}^N(i) - x, \mathcal{Y}^N(j) + x\bigg) \Bigg\}\nonumber
      \end{split}
  \end{equation}
  \STATE 
  \begin{equation}\small
      \begin{split}
          &{}^{lower}z\bigg(\mathcal{Y}^N(j), \mathcal{A}^N(i)\bigg) = \\
          & \max \Bigg\{ \argmin_{x \in \mathcal{Z}^N}  {}^{lower}G^{\pi^{opt}}_R\bigg(\mathcal{A}^N(i) - x, \mathcal{Y}^N(j) + x\bigg) \Bigg\}\nonumber
      \end{split}
  \end{equation}
  \ENDFOR
  \ENDFOR
  \STATE  $N = 2N$
  \STATE \textbf{until} 
  {\small ${}^{upper}G^{\pi^{opt}}_R(Q) - {}^{lower}G^{\pi^{opt}}_R(Q) < \epsilon$}
 \RETURN {\small ${}^{upper}z$}  
 \end{algorithmic}
 \end{algorithm}
 
\par The single query problem is explicitly solved in Algorithm \ref{algorithm}. In this algorithm, the functions ${}^{upper}G^{\pi^{opt}}_R$ and ${}^{lower}G^{\pi^{opt}}_R$ denote the expected age penalties for the upper and lower quantized transmission delays, respectively. These functions are recursively calculated by using \eqref{GRonly} similar to the value iteration method. This calculation is performed through the loop in $\mathcal{A}^N$ with ascending order. The optimal waiting time for a pair $(Y_j, A_j) \in \mathcal{Y}^N \times \mathcal{A}^N$ is determined by minimizing the function $G_R^{\pi}$ in the set $\mathcal{Z}^N$. Note that the set $\mathcal{A}^N, \mathcal{Y}^N,$ and $\mathcal{Z}^N$ is employed in the algorithm as if they are arrays. 
\iffalse
Proposition \ref{UpperLowerApproach} ensures that the optimal age penalties for lower and upper quantized transmission delay converge to each other as $N$ increases. We prefer to find the appropriate $N$ with iterations because we observe that this way is computationally cheaper than the way that Proposition \ref{UpperLowerApproach} finds the appropriate $N$.  
\fi
\par The output of Algorithm \ref{algorithm} is a decision function of $Y_j$ and $A_j$ that characterizes an optimal update policy of the single query problem for the upper quantized transmission delay process. An optimal policy of the single query problem for the unquantized transmission delay process is constructed by an optimal update policy for the upper quantized transmission delay process as it is shown in Proposition \ref{QuantizedProp}(\ref{QuantizedPropi}). Then, the constructed update policy is applied to each consecutive query interval, which is optimal for the PoW problem under periodic query arrivals.  

\section{PoW Problem Dominates UoW Problem} \label{sect:SuperioritySection}
\par Different from the UoW problem, designed as a push-based communication model, we define the PoW problem, designed as a pull-based communication model. The pull-based communication model has an extra knowledge of when the destination node utilizes the upcoming packets. Thanks to this extra knowledge, we are motivated in this study to find an update policy whose query average age penalty is less than or equal to the optimal time average age penalty with the same power constraint. Until this section, we provide a method to achieve the optimal query average age penalty. However, there is no close form expression to compare the optimal query average age penalty with the optimal time average age penalty.
\par In this section, for every i.i.d. transmission delay process, and periodic query arrival process; we construct an update policy whose query average age penalty is less than or equal to the optimal time average age penalty that is found in \cite{updateorwait}. Furthermore, the constructed policy satisfies the same power constraint. Let us redefine the set of stationary and deterministic policies similar to the definition of them in \cite{updateorwait}, and state the theorem, which is the main idea of this section.
\begin{deftn}
\begin{itemize}
    \item []
    \item A policy $\pi \in \Pi$ is said to be a stationary and deterministic policy with the function of $Y_j$ if there exists a decision function $z^{UoW}:[0, \infty) \rightarrow [0, M]$ such that $Z_j = z^{UoW}(Y_j)$ for all $j = 0,1, \dots$.
    \item The set of all stationary and deterministic policies with the function of $Y_j$ is denoted as $\Pi^{UoW}_{SD}$.
\end{itemize}
\end{deftn}
\begin{theorem}\label{superiorityTheorem}
If the transmission delay process $\{Y_j\}_{j=0}^{\infty}$ is i.i.d. such that $\Pr(Y_j \in [B_L, B_U] ) = 1$ and the query arrival process $\{Q_k\}_{k=1}^{\infty}$ is deterministic and periodic, then $\bar{h}_{opt} \leq \bar{g}_{opt}$ with the same power constraint for every period $T$.
\end{theorem}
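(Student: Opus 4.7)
The plan is to exhibit, for any fixed power budget, a PoW policy whose query average age penalty is bounded above by $\bar{g}_{opt}$, thereby establishing $\bar{h}_{opt}\le\bar{g}_{opt}$. The key reduction is Proposition~\ref{InfRequest}, which gives $\bar{h}_{opt}=\bar{h}^{one}_{opt}$ under periodic queries with $T>4B_U$, together with Corollary~\ref{honeopt}, which asserts that $\bar{h}^{one}_{opt}(Q)$ is independent of $Q$ whenever $Q>3B_U$. Consequently it suffices to construct, for arbitrarily large $Q$, a causal single-query PoW policy whose expected age penalty at $Q$ is at most $\bar{g}_{opt}+o(1)$.

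Let $\pi^*_{UoW}\in\Pi^{UoW}_{SD}$ be an optimal UoW policy under the given power constraint, with decision function $z^{UoW}(\cdot)$, and let $X_j=Y_j+z^{UoW}(Y_j)$ be the resulting inter-delivery times. Since the $X_j$ are i.i.d.\ and bounded, the deliveries form a renewal process and $g(\Delta(t))$ is a bounded regenerative reward, so the renewal reward theorem (together with bounded convergence) gives
\begin{equation*}
\lim_{Q\to\infty}\frac{1}{Q}\int_{0}^{Q}E\bigl[g(\Delta(t))\bigr]\,dt\;=\;\bar{g}_{opt}.
\end{equation*}
For $Q>3B_U$ I would then define a randomised single-query PoW policy $\tilde{\pi}_Q$ as follows: at time $0$, draw $U\sim\mathrm{Unif}[0,Q]$ independently of the channel process, remain idle on $[0,Q-U]$, and from the first request point $R_1=Q-U$ onward execute $\pi^*_{UoW}$. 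Because $\tilde{\pi}_Q$ is simply a deterministic time shift of $\pi^*_{UoW}$ by $Q-U$, conditionally on $U=u$ the age $\Delta(Q)$ has the same distribution as the age at time $u$ under $\pi^*_{UoW}$ started from zero. Fubini therefore yields
\begin{equation*}
E\bigl[g(\Delta(Q))\bigr]\;=\;\frac{1}{Q}\int_{0}^{Q}E\bigl[g(\Delta(u))\bigr]\,du\;\xrightarrow[Q\to\infty]{}\;\bar{g}_{opt}.
\end{equation*}
Since $\bar{h}^{one}_{opt}(Q)$ is the minimum of the left-hand side over all causal policies, $\bar{h}^{one}_{opt}(Q)\le E[g(\Delta(Q))]$; letting $Q\to\infty$ and invoking Corollary~\ref{honeopt} gives $\bar{h}^{one}_{opt}\le\bar{g}_{opt}$, and Proposition~\ref{InfRequest} then delivers $\bar{h}_{opt}\le\bar{g}_{opt}$.

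The final check is that the same power constraint is preserved: since $\tilde{\pi}_Q$ is idle on $[0,Q-U]$ and thereafter issues requests only at the rate of $\pi^*_{UoW}$, its long-run transmission rate does not exceed that of $\pi^*_{UoW}$, and the decoupling in Proposition~\ref{InfRequest} permits replicating the construction independently on each inter-query interval while preserving this bound. The main obstacle I anticipate is the lattice case: when the distribution of $X_j$ is lattice (as in Example~1), the pointwise key renewal theorem fails and $E[g(\Delta(Q))]$ need not converge as $Q\to\infty$. The uniform randomisation over $U$ is introduced precisely to convert this possibly oscillating pointwise quantity into a sliding-window (Cesàro) average, whose convergence to $\bar{g}_{opt}$ is guaranteed by the renewal reward theorem alone; this is the step where the pull-based freedom to choose the starting epoch is essential and the reason a direct application of $\pi^*_{UoW}$ to the PoW system need not attain the bound.
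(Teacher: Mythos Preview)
Your approach differs from the paper's and is clean where it applies, but it has a real gap: the theorem is stated \emph{for every period $T$}, whereas your reduction to the single-query problem via Proposition~\ref{InfRequest} (and hence the identity $\bar h_{opt}=\bar h^{one}_{opt}$) is only available when $T>4B_U$. The paper's proof does not pass through the single-query problem at all. It fixes $Y_0$, forms the one-parameter family of PoW policies obtained by shifting only the first waiting time ($Z_0=z^{UoW}(Y_0)+x$, $Z_j=z^{UoW}(Y_j)$ for $j\ge 1$, $x\in[0,T]$), shows via a Riemann-sum and limit-interchange argument that the $x$-average of the resulting query-average penalties equals $g_{opt,Y_0}$, and then uses lower semicontinuity in $x$ on the compact set $[0,T]$ to extract a single shift $x$ achieving $h(Y_0,f(x,\pi^{opt}))\le g_{opt,Y_0}$. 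This argument makes no assumption on $T$, and because only $Z_0$ is modified the long-run request rate is unchanged, so the power constraint is inherited automatically.

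Your randomised shift $U\sim\mathrm{Unif}[0,Q]$ is the continuous analogue of the paper's averaging over $x\in[0,T]$, and the Ces\`aro/renewal-reward step is precisely the right cure for the lattice case you flag. What fails is the transfer back to the periodic problem: you send $Q\to\infty$ inside the single-query problem and then invoke Proposition~\ref{InfRequest}, but that proposition is unavailable when $T\le 4B_U$, so the argument does not cover the full statement. Your power-constraint paragraph also needs repair. The policy actually deployed in the periodic problem (via Proposition~\ref{InfRequest}) is the single-query optimal policy, not $\tilde\pi_Q$, so citing $\tilde\pi_Q$'s request rate does not control it; and ``replicating the construction independently on each inter-query interval'' would force $Q=T$, which collapses the $Q\to\infty$ limit on which your inequality $\bar h^{one}_{opt}\le \bar g_{opt}$ rests. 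The paper avoids both difficulties by working directly with the periodic PoW objective and a shift bounded by $T$, at the cost of heavier analysis (existence of the limit under $\Pi^{UoW}_{SD}$, Moore--Osgood, dominated convergence).
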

\begin{proof}
It was shown in \cite{updateorwait} that there exists an optimal policy $\pi^{opt} \in \Pi^{UoW}_{SD}$ for the UoW problem. We construct an update policy that determines waiting periods as follows: $Z_0=z^{UoW}(Y_0) + x$ and $Z_j = z^{UoW}(Y_j), j \geq 1$ where $z^{UoW}$ is the decision function for the policy $\pi^{opt}$. We will show in Appendix \ref{app:superiorityTheorem} that for every i.i.d. transmission delay process, the first observation of transmission delay $Y_0$, and the query period $T$, there exists $x \in [0, T]$ such that the constructed update policy achieves a better or equal query average age penalty than the optimal time average age penalty. As the constructed optimal policy modifies only the first waiting time, the same power constraint is satisfied. The details are in Appendix \ref{app:superiorityTheorem}.
\end{proof} 
\vspace{-0.3cm}
\section{Numerical Results}\label{sect:NumericalResults}
\par Throughout the section, we exhibit the behavior of the average age penalties for the PoW and UoW problems under different transmission delay processes. To be consistent with our system model which assumes finite valued transmission delay, we will utilize truncated versions of certain transmission delay distributions such as exponential and log-normal distributions. Specifically, we truncate the values to start at $0.01$ and go up to a maximum value chosen such that the cumulative distribution of the transmission delay at this value is $0.95$. We choose $T=4B_U$.
\begin{table}[ht]
\renewcommand{\arraystretch}{1.2}
 		\centering
 	\resizebox{.45\textwidth}{!}{%
	\begin{tabular}
		{|c|*{7}{>{\centering\arraybackslash}p{1cm}|}}
		\hline 
		\centering 
		%\rule{-3.5pt}{3.5ex}
		\diagbox{{$\frac{Q}{N}$}}{{$\lambda$}} & $1$ &$1.2$ &$1.4$ &$1.6$ &$1.8$ &$2$  \\
		\cline{1-7} 
		\centering 
$0.16$&$1.297$&$1.097$&$0.897$&$0.792$&$0.702$&$0.624$\\ \cline{1-7} $0.08$&$1.359$&$1.159$&$0.958$&$0.854$&$0.763$&$0.684$\\ \cline{1-7} $0.04$&$1.391$&$1.191$&$0.99$&$0.885$&$0.795$&$0.715$\\ \cline{1-7} $0.02$&$1.407$&$1.207$&$1.006$&$0.901$&$0.811$&$0.731$\\ \cline{1-7} 
		\hline 
	\end{tabular}}
	\caption{Lower bounds on the query average age with i.i.d. truncated exponential distributed service times} \label{LowerBoundAgePenalty}
	\renewcommand{\arraystretch}{1.2}
 		\centering
 	\resizebox{.45\textwidth}{!}{%
	\begin{tabular}
		{|c|*{7}{>{\centering\arraybackslash}p{1cm}|}}
		\hline 
		%\rule{-3.5pt}{3.5ex}
		\diagbox{{$\frac{Q}{N}$}}{{$\lambda$}} &$1$ &$1.2$ &$1.4$ &$1.6$ &$1.8$ &$2$  \\
		\cline{1-7} 
$0.16$&$1.457$&$1.257$&$1.057$&$0.952$&$0.862$&$0.784$\\ \cline{1-7} $0.08$&$1.439$&$1.239$&$1.038$&$0.934$&$0.843$&$0.764$\\ \cline{1-7} $0.04$&$1.431$&$1.231$&$1.03$&$0.925$&$0.835$&$0.755$\\ \cline{1-7} $0.02$&$1.427$&$1.227$&$1.026$&$0.921$&$0.831$&$0.751$\\ \cline{1-7} 
		\hline 
	\end{tabular}}
	\caption{Upper bounds on the query average age with i.i.d. truncated exponential distributed service times} \label{UpperBoundAgePenalty}
\renewcommand{\arraystretch}{1.2}
 		\centering
 	\resizebox{.45\textwidth}{!}{%
	\begin{tabular}
		{|c|*{7}{>{\centering\arraybackslash}p{1cm}|}}
		\hline 
		%\rule{-3.5pt}{3.5ex}
		\diagbox{{$\frac{Q}{N}$}}{{$\lambda$}} &$1$ &$1.2$ &$1.4$ &$1.6$ &$1.8$ &$2$  \\
		\cline{1-7} 
$0.16$&$7\times 10^4$&$5\times 10^4$&$3\times 10^4$&$2\times 10^4$&$2\times 10^4$&$1\times 10^4$\\ \cline{1-7} $0.08$&$6\times 10^5$&$4\times 10^5$&$2\times 10^5$&$2\times 10^5$&$1\times 10^5$&$1\times 10^5$\\ \cline{1-7} $0.04$&$5\times 10^6$&$3\times 10^6$&$2\times 10^6$&$1\times 10^6$&$1\times 10^6$&$8\times 10^5$\\ \cline{1-7} $0.02$&$4\times 10^7$&$3\times 10^7$&$1\times 10^7$&$1\times 10^7$&$7\times 10^6$&$6\times 10^6$\\ \cline{1-7}
		\hline 
	\end{tabular}}
	\caption{Number of $G^{\pi}_R$ calculations to determine an optimal update policy when service times are i.i.d. truncated exponential distribution and the penalty function is identity.} \label{NumberOfCalculations}
\end{table}
\par We compare three different update policies: the zero-wait policy, the optimal policy of the UoW problem found in \cite{updateorwait}, and the optimal policy of the PoW problem found in Algorithm \ref{algorithm}. The optimal solutions of the UoW problem and the PoW problem are referred to as UoW-optimal policy and PoW-optimal policy, respectively. The average age penalty of the PoW-optimal policy is calculated by averaging the age penalties at the query instants. The average age penalties of the zero-wait policy and UoW-optimal policy are calculated as time average age penalties. Perhaps surprisingly, in all of our simulations, the time-average AoI and QAoI are identical for the zero-wait and UoW-optimal policies. The reason is, in all of our examples $X_j = Y_j + Z_j$ obeys the \enquote{Case 1 i.i.d.} random variable definition in \cite{robbins_1953}. Case 1 random variables are all the random variables except the cases that there exists $\beta \in \mathbb{R}$ such that $\Pr(X_j \in \{ k \beta : \ k \in \mathbb{N}\}) = 1$ or $\Pr(X_j = 0) = 1$. The proof for the equivalence of the time-average AoI and QAoI is subject to our future works.\footnote{A related discussion is provided in the third scenario of Appendix \ref{app:ExistenceOfLimitQueryAverageProblem}.} Note that the random variable $X_j$ under the PoW-optimal policy may not be an i.i.d. random variable, that is why the PoW-optimal policy can result in a lower age than the time-average age of the UoW-optimal policy.

\par Tables \ref{LowerBoundAgePenalty}, \ref{UpperBoundAgePenalty}, and \ref{NumberOfCalculations} illustrate the change in lower bounds of the query average age, upper bounds of the query average age, and the number of calculations to find an optimal policy, respectively for different numbers of sub-intervals $N$ under i.i.d. truncated exponentially distributed services times. Observing the tables \ref{LowerBoundAgePenalty} and \ref{UpperBoundAgePenalty}, we detect that the upper bound is much stricter than the lower bound. This is also the case for the other transmission delay processes such as log-normal, beta, uniform distributions. Even though the number of calculations is exponentially increasing as the number of sub-intervals $N$ increases, the upper bounds of the query average age are rapidly converging. It means that reaching a satisfactory approximate solution for the PoW problem does not require an excessive number of $G^{\pi}_R$ calculations. As a result, we decide to present only the upper bound of the query average age to avoid confusion in the following figures. 
\begin{figure}[ht] 
\centering
\psfrag{Optimal Policy of UoW Problem}[ll][ll][0.8]{Optimal Policy of UoW Problem}
\psfrag{Upper Bound on}[ll][ll][0.8]{Upper Bound on}
\psfrag{Optimal Policy of PoW Problem}[ll][ll][0.8]{Optimal Policy of PoW Problem}
\psfrag{Zero-Wait Policy}[ll][ll][0.8]{Zero-Wait Policy}
\psfrag{Average Age at Query Instances}[cc][cc][0.9]{Average Age at Query Instants}
\psfrag{alpha}[cc][cc]{$\alpha = \beta$}
\psfrag{1}[ll][ll][0.9]{1}
\psfrag{0.95}[ll][ll][0.9]{0.95}
\psfrag{0.9}[ll][ll][0.9]{0.9}
\psfrag{0.85}[ll][ll][0.9]{0.85}
\psfrag{0.8}[ll][ll][0.9]{0.8}
\psfrag{0.75}[ll][ll][0.9]{0.75}
\psfrag{0.7}[ll][ll][0.9]{0.7}
\psfrag{0.65}[ll][ll][0.9]{0.65}
\psfrag{0.6}[ll][ll][0.9]{0.6}
\psfrag{0.4}[ll][ll][0.9]{0.4}
\psfrag{0.2}[ll][ll][0.9]{0.2}
\psfrag{0}[ll][ll][0.9]{0}
\includegraphics [scale = 1]{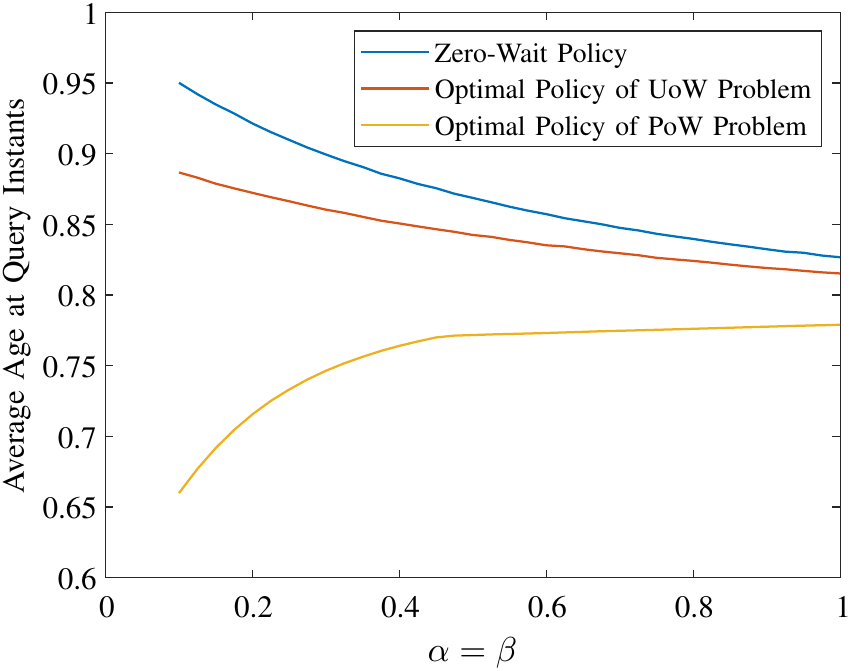}
\caption{Average age at query instants with i.i.d. beta distributed services times. The optimal policies of the PoW problems are found with $\frac{Q}{N} = 0.05$.}

\label{fig:BetaDistribution}
\end{figure}

\begin{figure}[ht] 
\centering
\psfrag{Optimal Policy of UoW Problem}[ll][ll][0.8]{Optimal Policy of UoW Problem}
\psfrag{Upper Bound on}[ll][ll][0.8]{Upper Bound on}
\psfrag{Optimal Policy of PoW Problem}[ll][ll][0.8]{Optimal Policy of PoW Problem}
\psfrag{Zero-Wait Policy}[ll][ll][0.8]{Zero-Wait Policy}
\psfrag{Average Age at Query Instances}[cc][cc][0.9]{Average Age at Query Instants}
\psfrag{sigma}[cc][cc]{$\sigma$}
\psfrag{0.8}[ll][ll][0.9]{0.8}
\psfrag{0.9}[ll][ll][0.9]{0.9}
\psfrag{1}[ll][ll][0.9]{1}
\psfrag{1.1}[ll][ll][0.9]{1.1}
\psfrag{1.2}[ll][ll][0.9]{1.2}
\psfrag{1.3}[ll][ll][0.9]{1.3}
\psfrag{1.4}[ll][ll][0.9]{1.4}
\psfrag{1.5}[ll][ll][0.9]{1.5}
\psfrag{1.6}[ll][ll][0.9]{1.6}
\psfrag{1.7}[ll][ll][0.9]{1.7}

\psfrag{5.5}[ll][ll][0.9]{5.5}
\psfrag{5}[ll][ll][0.9]{5}
\psfrag{4.5}[ll][ll][0.9]{4.5}
\psfrag{4}[ll][ll][0.9]{4}
\psfrag{3.5}[ll][ll][0.9]{3.5}
\psfrag{3}[ll][ll][0.9]{3}
\psfrag{2.5}[ll][ll][0.9]{2.5}
\psfrag{2}[ll][ll][0.9]{2}
\psfrag{1.5}[ll][ll][0.9]{1.5}

\includegraphics [scale = 1]{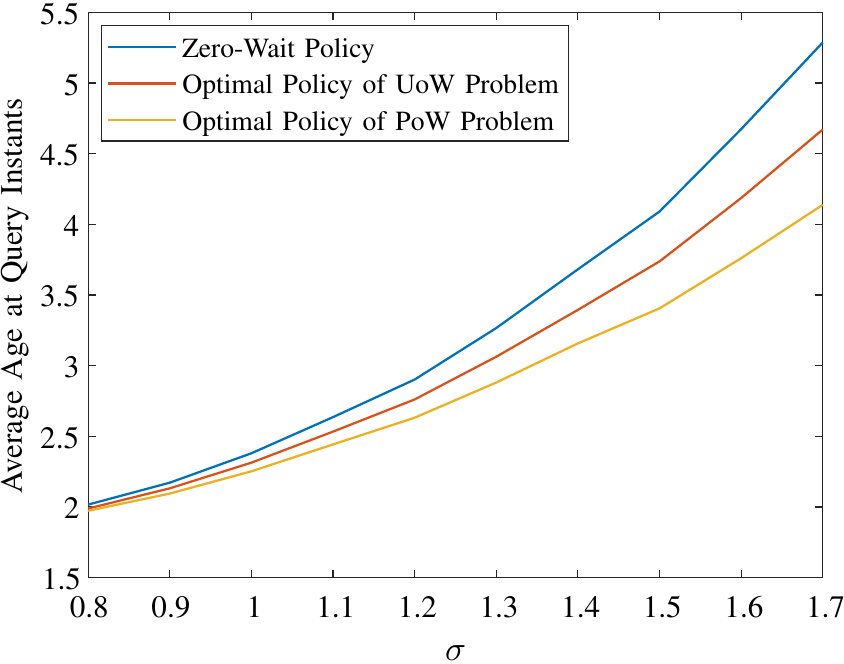}
\caption{Average age at query instants with i.i.d. truncated log-normal distributed services times with parameters $(\sigma, \mu)$ where $\mu=0$. The optimal policies of the PoW problems are found with $\frac{Q}{N} = 0.2$.}

\label{fig:LogNormalDistribution}
\end{figure}

\begin{figure}[ht] 
\centering
\psfrag{Optimal Policy of UoW Problem}[ll][ll][0.8]{Optimal Policy of UoW Problem}
\psfrag{Upper Bound on}[ll][ll][0.8]{Upper Bound on}
\psfrag{Optimal Policy of PoW Problem}[ll][ll][0.8]{Optimal Policy of PoW Problem}
\psfrag{Average Age at Query Instances}[cc][cc][0.9]{Average Age at Query Instants}
\psfrag{alpha}[cc][cc]{$\alpha$}
\psfrag{1.2}[ll][ll][0.9]{1.2}
\psfrag{1.4}[ll][ll][0.9]{1.4}
\psfrag{1.6}[ll][ll][0.9]{1.6}
\psfrag{1.8}[ll][ll][0.9]{1.8}
\psfrag{2}[ll][ll][0.9]{2}
\psfrag{2.2}[ll][ll][0.9]{2.2}

\psfrag{3}[ll][ll][0.9]{3}
\psfrag{4}[ll][ll][0.9]{4}
\psfrag{5}[ll][ll][0.9]{5}
\psfrag{6}[ll][ll][0.9]{6}
\psfrag{7}[ll][ll][0.9]{7}
\psfrag{8}[ll][ll][0.9]{8}
\psfrag{9}[ll][ll][0.9]{9}
\psfrag{10}[ll][ll][0.9]{10}

\includegraphics [scale = 1]{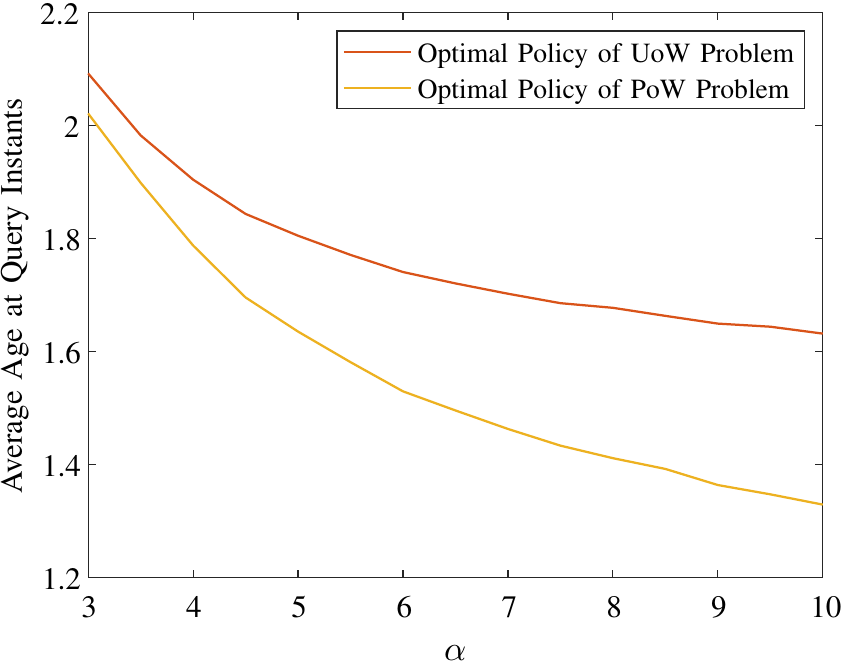}
\caption{Average age at query instants with i.i.d. truncated Pareto distributed services times with $(x_m, \alpha)$ where $x_m = 1$. The optimal policies of the PoW problems are found with $\frac{Q}{N} = 0.05$. Note that the optimal policy of the UoW problem is equivalent to the zero wait policy when $x_m=1$ and $\alpha \geq 3$. }
\label{fig:ParetoDistribution}
\end{figure}

\begin{figure}[ht] 
\centering
\psfrag{Optimal Policy of UoW Problem}[ll][ll][0.8]{Optimal Policy of UoW Problem}
\psfrag{Upper Bound on}[ll][ll][0.8]{Upper Bound on}
\psfrag{Optimal Policy of PoW Problem}[ll][ll][0.8]{Optimal Policy of PoW Problem}
\psfrag{Zero-Wait Policy}[ll][ll][0.8]{Zero-Wait Policy}
\psfrag{Average Age Penalty at Query Instances}[cc][cc][0.9]{Average Age Penalty at Query Instants}
\psfrag{alpha}[cc][cc]{$\alpha$}
\psfrag{2}[ll][ll][0.9]{2}
\psfrag{1}[ll][ll][0.9]{1}
\psfrag{1.2}[ll][ll][0.9]{1.2}
\psfrag{1.4}[ll][ll][0.9]{1.4}
\psfrag{1.6}[ll][ll][0.9]{1.6}
\psfrag{1.8}[ll][ll][0.9]{1.8}

\psfrag{0}[ll][ll][0.9]{0}
\psfrag{50}[ll][ll][0.9]{50}
\psfrag{100}[ll][ll][0.9]{100}
\psfrag{150}[ll][ll][0.9]{150}
\psfrag{200}[ll][ll][0.9]{200}
\psfrag{250}[ll][ll][0.9]{250}

\includegraphics [scale = 1]{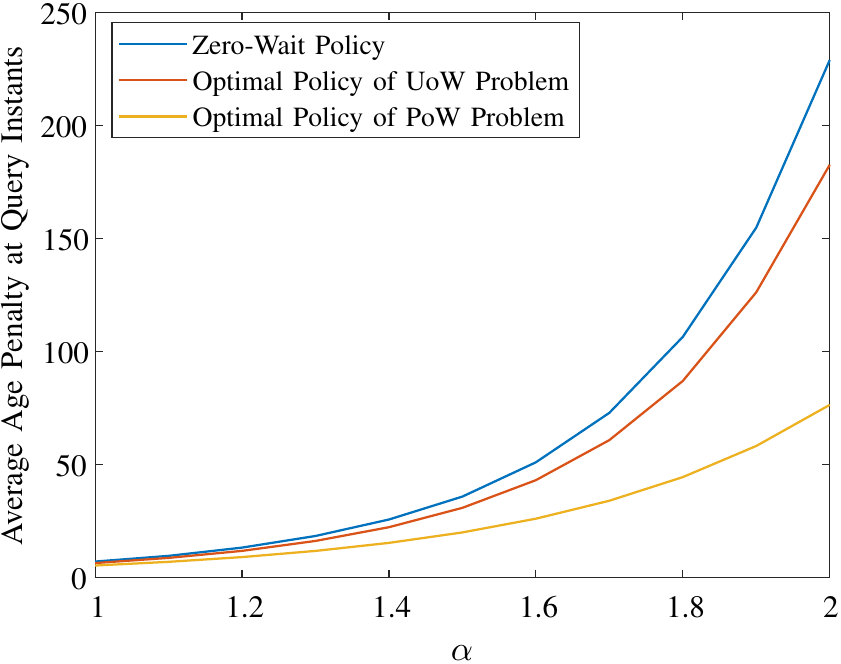}
\caption{Average age penalty at query instants with the age penalty function $g(x) = e^{\alpha x}-1$ and i.i.d. truncated exponential distributed services times where $\lambda = 1$. The optimal policies of the PoW problems are found with $\frac{Q}{N} = 0.05$.}

\label{fig:NonLinearAgePenalty}
\end{figure}

\begin{figure}[ht] 
\centering
\psfrag{Optimal Policy of UoW Problem}[ll][ll][0.8]{Optimal Policy of UoW Problem}
\psfrag{with Transmission Constraint}[ll][ll][0.8]{with Transmission Constraint}
\psfrag{Upper Bound on}[ll][ll][0.8]{Upper Bound on}
\psfrag{Optimal Policy of PoW Problem}[ll][ll][0.8]{Optimal Policy of PoW Problem}
\psfrag{Average Age at Query Instances}[cc][cc][0.9]{Average Age at Query Instants}
\psfrag{lambda}[cc][cc]{$\lambda$}
\psfrag{6}[ll][ll][0.9]{6}
\psfrag{5}[ll][ll][0.9]{5}
\psfrag{4}[ll][ll][0.9]{4}
\psfrag{3}[ll][ll][0.9]{3}
\psfrag{2}[ll][ll][0.9]{2}
\psfrag{1.5}[ll][ll][0.9]{1.5}
\psfrag{1}[ll][ll][0.9]{1}
\psfrag{0.5}[ll][ll][0.9]{0.5}
\psfrag{0}[ll][ll][0.9]{0}

\includegraphics [scale = 1]{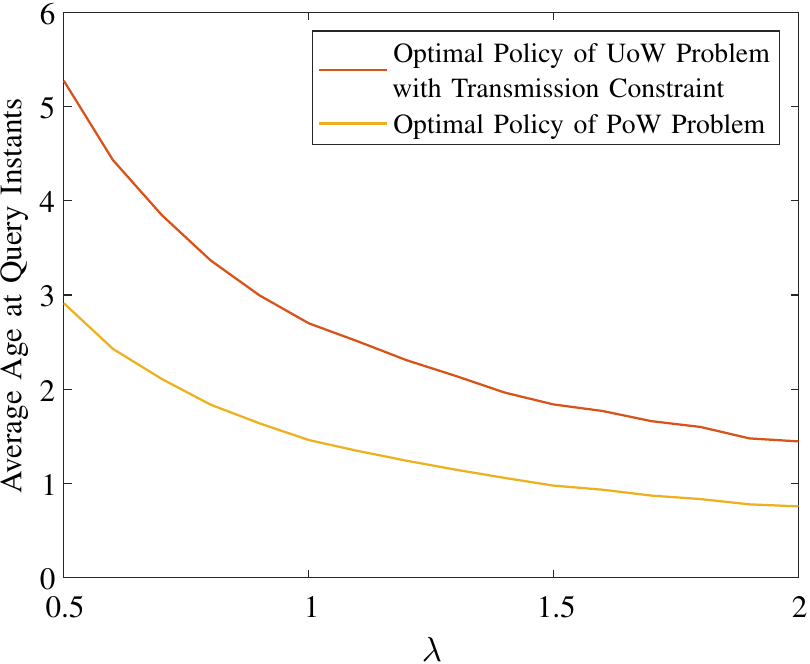}
\caption{Average age at query instants with i.i.d. truncated exponential distributed services times with the parameter $\lambda$ when the optimal policy of the UoW problem is constrained to transmit the same number of update packets as the optimal policy of PoW the problem. The optimal policies of the PoW problems are found with $\frac{Q}{N} = 0.05$.}

\label{fig:ExponentialDistributionConstraint}
\end{figure}

\begin{figure}[ht] 
\centering
\psfrag{Optimal Policy of UoW Problem}[ll][ll][0.8]{Optimal Policy of UoW Problem}
\psfrag{with Transmission Constraint}[ll][ll][0.8]{with Transmission Constraint}
\psfrag{Upper Bound on}[ll][ll][0.8]{Upper Bound on}
\psfrag{Optimal Policy of PoW Problem}[ll][ll][0.8]{Optimal Policy of PoW Problem}
\psfrag{Average Age at Query Instances}[cc][cc][0.9]{Average Age at Query Instants}
\psfrag{alpha}[cc][cc]{$\alpha$}
\psfrag{1}[ll][ll][0.9]{1}
\psfrag{1.5}[ll][ll][0.9]{1.5}
\psfrag{2}[ll][ll][0.9]{2}
\psfrag{2.5}[ll][ll][0.9]{2.5}
\psfrag{3.5}[ll][ll][0.9]{3.5}
\psfrag{4.5}[ll][ll][0.9]{4.5}

\psfrag{3}[ll][ll][0.9]{3}
\psfrag{4}[ll][ll][0.9]{4}
\psfrag{5}[ll][ll][0.9]{5}
\psfrag{6}[ll][ll][0.9]{6}
\psfrag{7}[ll][ll][0.9]{7}
\psfrag{8}[ll][ll][0.9]{8}
\psfrag{9}[ll][ll][0.9]{9}
\psfrag{10}[ll][ll][0.9]{10}

\includegraphics [scale = 1]{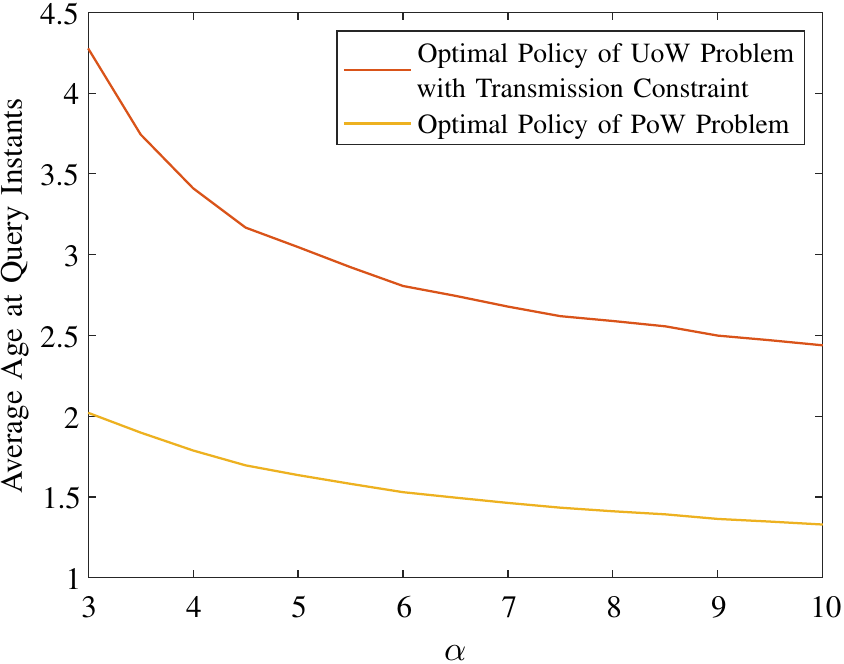}
\caption{Average age at query instants with i.i.d. truncated Pareto distributed services times with the parameters $(x_m, \alpha)$ where $x_m=1$, when the optimal policy of the UoW problem is constrained to transmit the same number of update packets as the optimal policy of the PoW problem. The optimal policies of the PoW problems are found with $\frac{Q}{N} = 0.05$.}

\label{fig:ParetoDistributionConstraint}
\end{figure}

\par Figures \ref{fig:BetaDistribution}, \ref{fig:LogNormalDistribution}, and \ref{fig:ParetoDistribution} illustrate the behavior of the average ages under i.i.d. beta distributed service times with equal $\alpha, \beta$ parameters, i.i.d. truncated log-normal distributed service times, i.i.d. truncated Pareto distributed service times, respectively. When $\alpha = \beta = 1$, the Beta distribution becomes a uniform distribution between $0$ and $1$. As $\alpha = \beta$ approaches $0$, it approaches a bimodal distribution concentrated around $0$ and $1$ with probability close to $0.5$ each. Interestingly, as $\alpha$ and $\beta$ increase, the average ages of the zero-wait policy and the UoW-optimal policy decrease whereas the average age of the PoW-optimal policy increases even though the mean of the beta distribution is constant, $\frac{\alpha}{\alpha + \beta} = \frac{1}{2}$. The benefit of using the PoW-optimal policy is pronounced when the transmission delay is bi-modal distributed. The log-normal distribution is a heavy-tailed distribution especially for large $\sigma$. We observe in Figure \ref{fig:LogNormalDistribution} that the PoW-optimal policy performs better than the other policies in heavy-tailed distribution as well. On the other hand, as $\alpha$ goes to $\infty$, the Pareto distribution converges to the dirac delta function $\delta(t-x_m)$, similar to the example 1 in Section \ref{sect:introduction}. We choose $x_m = 1$ which leads that UoW-optimal policy is equivalent to the zero wait policy for $\alpha \geq 3$ \cite[Theorem~5]{updateorwait}. We observe in Figure \ref{fig:ParetoDistribution} that PoW-optimal policy performs well as the transmission delay distribution approaches the dirac delta function.

\par Figure \ref{fig:NonLinearAgePenalty} exhibits the behavior of the average age penalties for different $\alpha$ when the age penalty function $g(x) = e^{\alpha x}-1$ and service times are exponentially distributed with $\lambda = 1$. This nonlinear age penalty function represents destination nodes that demand very fresh update packets and harshly penalize stale update packets. In the figure, we observe that the PoW-optimal policy works much better than the other policies especially for high $\alpha$ values. It means that the pull-based communication model is beneficial to utilize when the destination node demands very fresh update packets. 

\par Up to now, we have not put any constraint on the number of transmissions for the policies. Figures \ref{fig:ExponentialDistributionConstraint} and \ref{fig:ParetoDistributionConstraint} illustrate the behavior of the average ages under truncated i.i.d. exponential distributed service times and Pareto distributed service times, respectively, when the number of transmissions in the UoW-optimal policy is constrained by the number of transmissions made by the PoW-optimal policy. We observe that the average age of the PoW-optimal policy is much lower than the average age of the UoW-optimal policy for an equal number of transmissions. This implies that in a practical situation, applying the PoW solution can be significantly more energy-efficient, for the same age performance.

\vspace{-0.3cm}
\section{Conclusions and Future Directions }\label{sect:conclusion}
\par We studied the optimal control of the status update system in which the destination node requests the source node to submit an update packet to the channel. We defined a continuous, non-decreasing, and non-negative penalty function to represent the level of dissatisfaction on data staleness. While solving the PoW problem, we first identified the PoW problem under the single query case as a stochastic shortest path problem with uncountable state and action spaces. For this specific SSP problem, we obtained an optimal policy. Using the solution of the SSP problem, we found out an optimal policy for the PoW problem under periodic query arrival processes. Furthermore, we provided an analytical comparison between the UoW and PoW problems: (i) An optimal policy that minimizes the UoW problem also minimizes the PoW problem under Poisson query arrivals. Furthermore, their average age penalties are equivalent. (ii) The optimal query average age penalty under periodic query arrivals is always less than or equal to the optimal time average age penalty. An interesting by product is that for a large class of distributions, the QAoI achieved by Zero-Wait and the UoW-optimal policies are identical to the time-average AoI achieved by these policies, and both are remarkably higher than the QAoI achieved by the PoW-optimal policy, even when the former two are allowed an unconstrained number of transmissions. For the same number of tranmissions, the PoW-optimal result achieves a more significant lowering of QAoI, which in turn implies the potential energy efficiency of a PoW-optimal solution for a desired Query AoI performance.

Future directions for this work include the general solution of the PoW problem (i.e., for general query arrival processes, and delay processes with memory), and exhibiting the superiority of the result to those obtained by previous push-based solutions.
\appendices
\section{Proof of Proposition \ref{Poisson}}\label{app:ProofOfPoisson}
\par Let $\Pi^{UoW}_{Pois}, \Pi^{PoW}_{Pois}$ be the sets of optimal causal policies for the UoW and PoW problems, respectively, for a given transmission delay process under a Poisson query arrival process. We prove in this proof that $\Pi^{UoW}_{Pois} \subset \Pi^{PoW}_{Pois}$ and $\Pi^{PoW}_{Pois} \subset \Pi^{UoW}_{Pois}$ for every transmission delay process, which completes the first part of the proposition. 
\par Let $N_t$ be a Poisson counting process with a parameter $\lambda$. Then, $N_t$ has the stationary and independent increments property. By Taylor expansion, we can state that
\begin{equation}
    \Pr(N_{t+\delta} - N_t = 1) = \lambda \delta e^{-\lambda \delta} = \lambda \delta + o(\delta)
\end{equation}
\par Let us divide the time interval $[0, Q_n]$ into small interval with length $\delta$. Let $\mathbb{P}$ be the partition that consists of these small intervals. Then, an upper Darboux sum can be derived as follows:
\begin{equation}\small
    \begin{split}
    &E\bigg[\sum_{k=1}^{n} g(\Delta(Q_k))\bigg]  \\
    & \leq E\bigg[\sum_{j=0}^{Q_n/\delta}P(N_{t+\delta}-N_t = 1) \times \sup_{t \colon t \in [0, \delta)} g\big(\Delta(j\delta + t)\big) \bigg] \\
    &= E\bigg[\sum_{j=0}^{Q_n/\delta}(\lambda \delta + o(\delta)) \times \sup_{t \colon t \in [0, \delta)} g\big(\Delta(j\delta + t)\big) \bigg] \\
    &\delequal U(g(\Delta), \mathbb{P})
    \end{split}
\end{equation}
Similar to the upper Darboux sum, a lower Darboux sum can be derived as follows:
\begin{equation}\small
\begin{split}
    &E\bigg[\sum_{k=1}^{n} g(\Delta(Q_k))\bigg] \\
    & \geq  E\bigg[\sum_{j=0}^{Q_n/\delta}P(N_{t+\delta}-N_t = 1) \times \inf_{t \colon t \in [0, \delta)} g\big(\Delta(j\delta + t)\big) \bigg] \\
    &= E\bigg[\sum_{j=0}^{Q_n/\delta}(\lambda \delta + o(\delta)) \times \inf_{t \colon t \in [0, \delta)} g\big(\Delta(j\delta + t)\big) \bigg] \\
    & \delequal L(g(\Delta), \mathbb{P})
\end{split}
\end{equation}
\par Let $\mathbb{P}_1$ and $\mathbb{P}_2$ be the partitions consisted of small intervals with length $\delta$ in which there is no delivery and there is a delivery, respectively. This means that $\mathbb{P} = \mathbb{P}_1 \cup \mathbb{P}_2$ and $\mathbb{P}_1 \cap \mathbb{P}_2 = \emptyset$. Then, there exists $\delta_1 > 0$ such that the partition $\mathbb{P}_1$ is organized with $\delta_1$ length small intervals and   $U(g(\Delta), \mathbb{P}_1)-L(g(\Delta), \mathbb{P}_1 ) < \epsilon/2$ since $g(\Delta(.))$ is continuous on the partition $\mathbb{P}_1$. Inside any interval in $\mathbb{P}_2$, the supremum point is less than or equal to $g(B_U + M)$ while the infimum point is greater than or equal to $g(B_L)$. On the other hand, the number of small intervals in $\mathbb{P}_2$ can be at most $Q_n/B_L$. Therefore if the partition $\mathbb{P}_2$ is organized with small intervals with length $\delta_2$ equal to $\frac{\epsilon \times B_L}{3Q_n\times(g(M+B_U)-g(B_L))}$, then $U(g(\Delta), \mathbb{P}_2)-L(g(\Delta), \mathbb{P}_2) \leq \epsilon/3 < \epsilon/2$. As a result, if the partition $\mathbb{P}$ is organized with $\delta = \min(\delta_1, \delta_2)$ length small intervals, then $U(g(\Delta), \mathbb{P}) - L(g(\Delta), \mathbb{P}) < \epsilon$ because $U(g(\Delta), \mathbb{P}_1) + U(g(\Delta), \mathbb{P}_2) = U(g(\Delta), \mathbb{P})$ and $L(g(\Delta), \mathbb{P}_1) + L(g(\Delta), \mathbb{P}_2) = L(g(\Delta), \mathbb{P})$. Hence, for every $n \in \mathbb{N}$, we have proved the following by \cite[Theorem~6.6]{rudin_1976}:
\begin{equation}\label{PoissonIntegration}
    E\bigg[\sum_{k=1}^{n} g(\Delta(Q_k))\bigg] = E\bigg[\lambda \int^{Q_n}_0 g(\Delta( t)) dt\bigg] 
\end{equation}
Note that $\lambda$ is just a constant and \eqref{PoissonIntegration} holds for every $n \in \mathbb{N}$. Then, we can obtain the following:
\begin{equation}\label{PoissonIntegration1}
    \begin{split}
        \limsup_{n \rightarrow \infty} \frac{ E\bigg[\mathlarger{\sum_{k=1}^{n}} g(\Delta(Q_k))\bigg]}{n} 
        &\stackrel{(a)}{=}\limsup_{n \rightarrow \infty} E\Bigg[\frac{ \int^{Q_n}_0 g(\Delta( t)) dt}{Q_n}\Bigg] \\ 
        &\stackrel{(b)}{=} \limsup_{n \rightarrow \infty} E\Bigg[\frac{ \int^{D_n}_0 g(\Delta( t)) dt}{D_n}\Bigg]
    \end{split}
\end{equation}
where (a) follows from $E[Q_n] = n/\lambda$. (b) can be shown by using the following two facts: (i) Let $D_i$ be the closest delivery point to a query $Q_k$. Then, $|D_i-Q_k| < (B_U + M)/2$. (ii) the function $g(\Delta)$ has upper and lower bounds. As a result of \eqref{PoissonIntegration1}, minimizing \eqref{Problem2} and \eqref{Problem1} are equivalent, which implies that $\Pi^{UoW}_{Pois}$ and $\Pi^{PoW}_{Pois}$ are equivalent. Furthermore, their average age penalties are equivalent by \eqref{PoissonIntegration1}.

\section{Proof of Proposition \ref{SuffandExis}} \label{app:SuffandExis}
We first prove that $Q-D_j$ and $\Delta(D_j)$ are sufficient statistics to obtain an optimal $Z_j$ for every $j$, $(Y_i)_{i=0}^{j}$, and $(Z_i)_{i=0}^{j-1}$. We perform induction on $Q-D_j$. Let us map each $Q-D_j$ to a natural number $n$ such that $(n-1)B_L \leq Q-D_j < nB_L$. If $n=1$, then $Q-D_j < B_L$. For every waiting period $Z_j$, the age penalty at the query is constant because a new update cannot arrive until the query. Then, the age penalty at the query is $g(Q-D_j+\Delta(D_j))$. Thus, if $n = 1$, $Q-D_j$ and $\Delta(D_j)$ are sufficient statistics to obtain an optimal $Z_j$ for every $j$, $(Y_i)_{i=0}^{j}$, and $(Z_i)_{i=0}^{j-1}$. Let us assume that $Q-D_j$ and $\Delta(D_j)$ are sufficient statistics to obtain an optimal $Z_j$ for $n = 2,3, \dots K$ where $K$ is an arbitrary natural number. Let $\Pi_K$ be the set of all causal waiting policies such that if $\pi \in \Pi_K$; then $\pi$ determines waiting times at delivery points $D_j: Q-D_j < KB_L$ solely based on $Q-D_j$ and $\Delta(D_j)$, for the delivery points $D_j : Q-D_j \geq KB_L$, the waiting policy may not determine the waiting time based on $Q-D_j$ and $\Delta(D_j)$. Due to the induction assumption, the single query problem can be minimized in the set of $\Pi_K$. Let us prove that the single query problem can be minimized in the set of $\Pi_{K+1}$ as well. For every $\pi \in \Pi_K$, we can obtain the following:
\begin{equation}\small \label{penaltyterm} 
\begin{split}
    &G_R^{\pi}\bigg(Q-D_j-Z_j, \Delta(D_j)+Z_j, (Y_i)_{i=0}^{j},  (Z_i)_{i=0}^{j}\bigg) \\
    &\stackrel{(a)}{=} E\bigg[ G^{\pi}_D\bigg(Q-D_j-Z_j-Y_{j+1}, Y_{j+1}, (Y_i)_{i=0}^{j+1},  (Z_i)_{i=0}^{j}\bigg)  \\
    & \hspace{0.75cm} \bigg| Y_{j+1} + Z_j \leq Q-D_j \bigg]  \times \Pr\bigg(Y_{j+1} + Z_j \leq Q-D_j\bigg)  \\
    & \hspace{0.3cm} +g\bigg(Q-D_j +\Delta(D_j)\bigg) \times \Pr\bigg(Y_{j+1} + Z_j> Q-D_j\bigg)
\end{split}
\end{equation}
where (a) follows from \eqref{GR}. $Q-D_{j+1} = Q-D_j-Z_j-Y_{j+1} < K B_L$ as $Y_{j+1} \geq B_L$ and $Q-D_j < (K+1)B_L$. This means that we can exploit the induction assumption in the RHS of \eqref{penaltyterm} to claim that $(Y_i)_{i=0}^{j}$ and $(Z_i)_{i=0}^{j-1}$ does not affect the value of the term with expectation given $Q-D_{j+1} = Q-D_j-Y_{j+1}-Z_j$ and $\Delta(D_{j+1})=Y_{j+1}$. This is because $\pi \in \Pi_K$. In the term with penalty function, only $Q-D_j$ and $\Delta(D_j)$ appear. This means that the optimal control problem of choosing an optimal $Z_j$ at the delivery point $D_j$ does not depend on $(Y_i)_{i=0}^{j}$ and $(Z_i)_{i=0}^{j-1}$. This completes the induction. Once the single query problem can be minimized in the set of $\bigcup_{K=1}^{\infty}\Pi_{K}$, it is easy to show that the calculation of the functions $G^{\pi}_D$ and $G^{\pi}_R$ can be performed by only knowing $Q-D_j$ and $\Delta(D_j)$ for every $\pi \in \bigcup_{K=1}^{\infty}\Pi_K$. The proof can be performed with a similar induction.   
\par From now on, we can omit $(Y_i)_{i=0}^{j}$ and $(Z_i)_{i=0}^{j-1}$ from $G^{\pi}_R$ and $G^{\pi}_D$. For the part related to the existence of a deterministic optimal policy, we construct a deterministic optimal policy by performing another induction on $Q-D_j$. Before move on to the induction, we state some simple observation.
\begin{lemma}\label{AppendixBLemma}
Let us assume that there exists a deterministic optimal policy $\pi_1^{opt}$.
\begin{enumerate}[(i)]
    \item \label{LemmaB1}Let $h \colon \mathbb{R} \rightarrow \mathbb{R}$ such that $h(\epsilon) \delequal max_{x \in [0, M+B_U]}g(x+\epsilon)-g(x)$. Then, we can obtain the following for every $t_1, t_2 \in \mathbb{R}$
    \begin{equation}
        0 \leq G_R^{\pi_1^{opt}}(t_1, t_2 + \epsilon) - G_R^{\pi_1^{opt}}(t_1, t_2) \leq h(\epsilon)
    \end{equation}
    \item \label{LemmaB2}If $f(x) \delequal G^{\pi_1^{opt}}_R(Q-D_j-x, \Delta(D_j))$ is a lower semi-continuous function for a given $Q-D_j$ and $\Delta(D_j)$, then $f'(x) \delequal G^{\pi_1^{opt}}_R(Q-D_j-x, \Delta(D_j)+x)$ is a lower semi-continuous function as well.
    \item \label{LemmaB3}If $f(x) \delequal G^{\pi_1^{opt}}_R(Q-D_j-x, \Delta(D_j))$ is a lower semi-continuous function for every $Q-D_j$ and $\Delta(D_j)$ satisfying $Q-D_j<C$, where $C$ is an arbitrary real number, then $f''(x) \delequal G^{\pi_1^{opt}}(Q-D_i-Y_{i+1}-x, Y_{i+1})$ is a lower semi-continuous function as well for every $Q-D_i$ and $\Delta(D_i)$ satisfying $Q-D_i<C$.
    \item \label{LemmaB4}For every $\epsilon > 0$, there exists $\delta > 0$ such that
    \begin{equation}
        \Pr(t_1 < Y_j \leq t_1 + \delta) < \epsilon 
    \end{equation}
    where $t_1$ is a given real number satisfying $t_1 \in [B_L, B_U]$.
    \item \label{LemmaB5}For every $Q-D_j$, $\Delta(D_j)$, and $\epsilon > 0$, there exists $\delta > 0$ such that 
    \begin{equation}
        \begin{split}
            G^{\pi_1^{opt}}_D(Q-D_j-&Z_j-Y_{j+1}, Y_{j+1}) - \\ 
            &g(Q-D_j + \Delta(D_j)-\delta) < \epsilon
        \end{split}
    \end{equation}
\end{enumerate}
\end{lemma}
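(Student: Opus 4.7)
My plan is to dispatch the five parts of Lemma~\ref{AppendixBLemma} with tailored arguments, relying on the recursions \eqref{GR} and \eqref{GD}, continuity and monotonicity of $g$, and right-continuity of the CDF of $Y_j$. Part~(\ref{LemmaB1}) follows directly from \eqref{GR}: only the summand $g(Q-R_j+\Delta(R_j))\Pr(Y_j>Q-R_j)$ involves $t_2=\Delta(R_j)$, so a shift by $\epsilon$ changes $G_R^{\pi_1^{opt}}$ by $[g(t_1+t_2+\epsilon)-g(t_1+t_2)]\Pr(Y_j>t_1)$; monotonicity of $g$ yields the lower bound $0$, and using $\Pr(Y_j>t_1)\leq 1$ together with the admissible range $\Delta(R_j)=Y_{j-1}+Z_{j-1}\in[0,M+B_U]$ gives the upper bound $h(\epsilon)$.

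For part~(\ref{LemmaB2}) I plan to compute $f'(x)-f(x)$ explicitly. Setting $R_j=D_j+x$ in \eqref{GR}, the conditional-expectation summand coincides between $f$ and $f'$, so
\begin{equation*}
f'(x)-f(x)=\bigl[g(Q-D_j+\Delta(D_j))-g(Q-D_j-x+\Delta(D_j))\bigr]\Pr\bigl(Y_j>Q-D_j-x\bigr).
\end{equation*}
The bracket is continuous and non-negative in $x$. Since the CDF of $Y_j$ is right-continuous, $x\mapsto\Pr(Y_j>Q-D_j-x)$ is left-continuous and non-decreasing, hence lower semi-continuous. The non-negative product of a continuous function and an LSC function is LSC, so the difference is LSC, and $f'=f+(f'-f)$ inherits LSC from $f$. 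For part~(\ref{LemmaB3}) I would observe that, for each realization of $Y_{i+1}\geq B_L$, $f''(x)$ is just $f$ evaluated at the new delivery point $D_i+Y_{i+1}$ with age $Y_{i+1}$, still satisfying $Q-(D_i+Y_{i+1})<C$; the hypothesis then gives pointwise LSC of $f''$. If $f''$ is intended as an expectation over $Y_{i+1}$, I would upgrade pointwise LSC to LSC of the expectation via Fatou's lemma, using boundedness of $g$ on the relevant compact range as an integrable envelope.

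Parts~(\ref{LemmaB4}) and~(\ref{LemmaB5}) are short continuity statements. For (\ref{LemmaB4}), right-continuity of the CDF at $t_1$ gives $\Pr(t_1<Y_j\leq t_1+\delta)\to 0$ as $\delta\downarrow 0$, so any positive $\epsilon$ can be matched. For (\ref{LemmaB5}), my plan is to chain the no-further-update bound $G_D^{\pi_1^{opt}}(Q-D_j-Z_j-Y_{j+1},Y_{j+1})\leq g(Q-D_j-Z_j)\leq g(Q-D_j+\Delta(D_j))$ with continuity of $g$: for $\delta$ small enough, $g(Q-D_j+\Delta(D_j))-g(Q-D_j+\Delta(D_j)-\delta)<\epsilon$, and the claimed bound follows. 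The main obstacle I expect is part~(\ref{LemmaB3}) together with its use inside the discrete induction of Proposition~\ref{SuffandExis}: when $Y_j$ has atoms the transition kernel is not continuous, so upgrading pointwise LSC to LSC of the expectation, and then carrying this through the $K\to K+1$ induction step alongside the attainment of the infimum in the waiting-time minimization, is where the delicate bookkeeping will live.
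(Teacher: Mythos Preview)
Your proposal is correct and essentially matches the paper's own proof, which consists only of one-line hints: (i) from \eqref{GR} and monotonicity/continuity of $g$; (ii) from the definition of lower semi-continuity together with (i); (iii) from (ii) and the fact that $\pi_1^{opt}$ is deterministic optimal; (iv) from Borel measurability of $Y_j$; (v) from \eqref{GD} and monotonicity/continuity of $g$. Your arguments are exactly the natural expansions of these hints.

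The one place you diverge slightly is part~(\ref{LemmaB2}): the paper routes through part~(\ref{LemmaB1}) (the uniform modulus $h(\epsilon)$ in the second argument) combined with the definition of lower semi-continuity, whereas you compute $f'(x)-f(x)$ explicitly from \eqref{GR} and show it is the product of a continuous non-negative factor and a left-continuous non-decreasing factor, hence lower semi-continuous. Your route is a bit more constructive and sidesteps a quantifier ambiguity in the paper (whether the hypothesis is for one fixed $\Delta(D_j)$ or for all of them), at the cost of invoking right-continuity of the CDF of $Y_j$ directly rather than hiding it inside~(\ref{LemmaB1}). Both approaches are valid; yours is arguably cleaner for this particular item. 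You also correctly flag part~(\ref{LemmaB3}) as the place where the real bookkeeping lives; the paper's proof there is just as terse as yours, and the substance is carried by the surrounding induction in Proposition~\ref{SuffandExis}.
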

\begin{proof}
\begin{enumerate}[(i)]
    \item It follows from \eqref{GR} and the facts that the penalty function $g$ is continuous and non-decreasing.
    \item It follows from the definition of lower semi-continuity and Lemma \ref{AppendixBLemma}(\ref{LemmaB1})
    \item It follows from Lemma \ref{AppendixBLemma}(\ref{LemmaB2}) and the fact that $\pi_1^{opt}$ is a deterministic optimal policy. 
    \item The transmission delay is measurable on Borel algebra on the real line.
    \item It follows from \eqref{GD} and the fact that the penalty function $g$ is continuous and non-decreasing.
\end{enumerate}
\end{proof}
\par The idea which will be proven by the induction is that  $f(x) \delequal G^{\pi_1^{opt}}_R(Q-D_j-x, \Delta(D_j))$ is a lower semi-continuous function. From Lemma \ref{AppendixBLemma}(\ref{LemmaB2}), $f'(x) \delequal G^{\pi_1^{opt}}_R(Q-D_j-x, \Delta(D_j)+x)$ is a lower semi-continuous function as well. Therefore, it attains its infimum for every $Q-D_j$ and $\Delta(D_j)$ due to the extension of Extreme Value Theorem to semi-continuity. Then, this infimum point can be determined as the waiting time at the delivery point $D_j$. This policy is a deterministic optimal policy that decides the waiting periods solely based on $Q-D_j$ and $\Delta(D_j)$. Now, let us move on to the induction. When $Q-D_j < B_L$, then all waiting periods result in the same age penalty. This means that there exists a deterministic optimal policy $\pi_1^{opt}$ for $n=1$. Additionally, $f(x) \delequal G^{\pi_1^{opt}}_R(Q-D_j-x, \Delta(D_j))$ is lower semi-continuous for every $D_j$ satisfying $Q-D_j < B_L$. Let us assume for $n=2$ that $f(x) \delequal G^{\pi_1^{opt}}_R(Q-D_j-x, \Delta(D_j))$ is lower semi-continuous for every $D_j$ satisfying $B_L \leq Q-D_j < 2B_L$. Note that the superscript $\pi_1^{opt}$ refers in the definition of the function $f$ that the deterministic optimal policy $\pi_1^{opt}$ is performed starting with $(j+2)^{th}$ request because $(j+1)^{th}$ request has already determined as $D_j+x$. The delivery point $D_{j+1}$ must satisfy $Q-D_{j+1} < (2-1)B_L$ in which there exists a deterministic optimal policy. As $f(x) \delequal G^{\pi_1^{opt}}_R(Q-D_j-x, \Delta(D_j))$ is lower semi-continuous for $n=2$, $f'(x) \delequal G^{\pi_1^{opt}}_R(Q-D_j-x, \Delta(D_j)+x)$ is a lower semi-continuous function as well by Lemma \ref{AppendixBLemma}(\ref{LemmaB2}). Hence, the function $f'$ attains its minimum for every $B_L \leq Q-D_j < 2B_L$ and $\Delta(D_j)$. Therefore, there exists a deterministic optimal policy for $n=2$ as well. Similar to the transition from $n=1$ to $n=2$, let us assume one by one that the function $f(x) \delequal G^{\pi_1^{opt}}_R(Q-D_j-x, \Delta(D_j))$ is lower semi-continuous and there exists a deterministic optimal policy for $n=2,3,\dots, K$ where $K$ is an arbitrary natural number. Let us prove that $f(x) \delequal G^{\pi_1^{opt}}_R(Q-D_j-x, \Delta(D_j))$ is a lower semi-continuous function for $n=K+1$. To reach contradiction, suppose that the claim is false. Then, there exists $Q-D_j, \Delta(D_j),$ and $x_0$ satisfying $KB_L \leq Q-D_j < (K+1)B_L$ such that $f(x) \delequal G^{\pi_1^{opt}}_R(Q-D_j-x, \Delta(D_j))$ is not lower semi-continuous at $x_0$. Hence, there exist either an increasing or a decreasing sequence $(x_n)$ and $C > 0$ such that $\lim_{n \rightarrow \infty} x_n = x_0$ and $f(x_n) - f(x_0) < -C$ for every $n \in \mathbb{N}$. 
\par \underline{If $(x_n)$ is a increasing sequence}, then we obtain the following by \eqref{GR}:
\begin{equation}\label{SubstractionOfGR}\small
\begin{split}
    f(x_n) - &f(x_0) = A\times \Pr(Y_{j+1} \leq Q-D_j-x_0)   \\
    &+ B\times \Pr(Q-D_j-x_0 < Y_{j+1} \leq Q-D_j-x_n) \\
    &+C\times \Pr(Q-D_j-x_n < Y_{j+1})
\end{split}
\end{equation}
where $A,B,$ and $C$ are the following:
\begin{equation}\small
    \begin{split}
        A =& E\bigg[G^{\pi_1^{opt}}_D\big(Q-D_j-x_n-Y_{j+1}, Y_{j+1}\big) \\ 
        &- G^{\pi_1^{opt}}_D\big(Q-D_j-x_0-Y_{j+1}, Y_{j+1}\big) \bigg| Y_{j+1} \leq Q-D_j-x_0\bigg]
    \end{split}
\end{equation}
\begin{equation}\small
    \begin{split}
        B =& g(Q-D_j+\Delta(D_j)) - E\bigg[G^{\pi_1^{opt}}_D\big(Q-D_j-x_n \\
        &-Y_{j+1}, Y_{j+1}\big) \bigg| Q-D_j-X_n < Y_{j+1} \leq Q-D_j-x_0\bigg]\hspace{0.7cm}
    \end{split}
\end{equation}
\begin{equation}\small
    C = g(Q-D_j-x_n+\Delta(D_j)) - g(Q-D_j-x_0+\Delta(D_j))
\end{equation}
\par From the induction assumption and Lemma \ref{AppendixBLemma}(\ref{LemmaB3}), $A$ can be arbitrarily small. $B$ is upper bounded by $g(M+B_U)$ and the multipliers of $B$ in \eqref{SubstractionOfGR} can be arbitrarily small by Lemma \ref{AppendixBLemma}(\ref{LemmaB4}). $C$ can be arbitrarily small due to the continuity of the penalty function $g$. Therefore, there exists $x_n$ such that $f(x_n)-f(x) \geq -C$, which is a contradiction. 
\par \underline{If $(x_n)$ is a decreasing sequence}, then an equation similar to \eqref{SubstractionOfGR} can be written. The terms that are similar to $A$ and $C$ can be analyzed similarly. The term that is similar $B$ can be analyzed with the help of Lemma \ref{AppendixBLemma}(\ref{LemmaB5}). After the analysis, a similar contradiction can be achieved. 
\par As a result, the function $f(x)$ is lower semi-continuous. From Lemma \ref{AppendixBLemma}(\ref{LemmaB2}), the function $f'(x)$ is lower semi-continuous for every $Q-D_j$ and $\Delta(D_j)$ satisfying $KB_L \leq Q-D_j < (K+1)B_L$. Thus, the function attains its infimum, and the infimum point can be determined as a deterministic optimal waiting period $Z_j$, which completes the induction. 
\section{Proof of Proposition \ref{BorderPointProp}} \label{app:BorderPointProp}
\par We start this proof with a lemma:
\begin{lemma}\label{RequestUntilQ-3BU}
For any delivery point $D_j \in [0, Q-2B_U]$ and its AoI $\Delta(D_j)$, an optimal request point $R_{j+1}$ must be until $Q-B_U$ \ie\, $R_{j+1} \leq Q-B_U$.
\end{lemma}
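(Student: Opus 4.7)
The plan is to argue by contradiction. Suppose a deterministic optimal policy $\pi_1^{opt}$ issues $R_{j+1} \in (Q - B_U, Q]$ from a delivery $D_j \in [0, Q - 2B_U]$; I will exhibit an admissible alternative waiting time producing $R' \in [D_j, Q - B_U]$ whose expected age penalty at $Q$ is no larger, contradicting optimality of $R_{j+1}$.

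The starting observation is that $Q - R_{j+1} < B_U$ lies strictly below the essential supremum of $Y_{j+1}$, so the non-delivery event $\{Y_{j+1} > Q - R_{j+1}\}$ has strictly positive probability $p > 0$. On that event no new update arrives by $Q$, so $\Delta(Q) = Q - D_j + Y_j \geq 2B_U + B_L$ and the incurred penalty is at least $g(2B_U + B_L)$. Substituting into \eqref{GR},
\[
G^{\pi_1^{opt}}_R\big(Q - R_{j+1}, \Delta(R_{j+1})\big) \geq p\, g(2B_U + B_L) + (1-p)\, E\big[G^{\pi_1^{opt}}_D(Q - R_{j+1} - Y, Y) \,\big|\, Y \leq Q - R_{j+1}\big].
\]
The natural first-cut alternative is $R' = Q - B_U$, feasible because $D_j \leq Q - 2B_U$, which guarantees delivery by $Q$; the ``do nothing'' bound $G^{\pi_1^{opt}}_D(s,a) \leq g(s+a)$ then yields $G^{\pi_1^{opt}}_R(B_U) \leq g(B_U)$.

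Splitting both costs via \eqref{GR} along the delivery dichotomy, the original's non-delivery sub-case is settled by $g(B_U) \leq g(2B_U + B_L)$ and monotonicity of $g$. On the joint-delivery sub-case the alternative's post-delivery state $(B_U - Y_{j+1}, Y_{j+1})$ has strictly more remaining time than the original's $(Q - R_{j+1} - Y_{j+1}, Y_{j+1})$; however $G^{\pi_1^{opt}}_D(\cdot, a)$ is \emph{not} globally monotonic in its first argument, since each time the remaining time crosses $B_L, 2B_L, \dots$ a fresh safe re-request becomes feasible and $G$ can drop.

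The main obstacle is that, due to this non-monotonicity, the naive alternative $R' = Q - B_U$ can itself be more costly than $R_{j+1}$, so a more aggressive choice of $R'$ is required. My plan is to push $R'$ further back, for instance to $R' = R_{j+1} - kB_L$ for the smallest $k \geq 1$ ensuring that after the alternative's first delivery the state admits one additional safe re-request beyond what the original's state permits. Coupling the two policies on the same transmission-delay realizations, the alternative's extra slack is spent on this additional safe re-request, which lands in a state with post-delivery age at most $B_U$; invoking the AoI-monotonicity of $G^{\pi_1^{opt}}_R$ (Lemma \ref{AppendixBLemma}(\ref{LemmaB1})) to absorb the age growth during any idling, the resulting expected penalty is at most that of $R_{j+1}$. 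Verifying that this cascade terminates with $R' \in [D_j, Q - B_U]$ under the feasibility constraints $Z \in [0, M]$ and $R_{k+1} \geq D_k$, and that it uniformly dominates $R_{j+1}$ over all subsequent transmission-delay sample paths, is the delicate step of the proof.
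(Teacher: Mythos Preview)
Your contradiction setup is fine, and you are right that the naive alternative $R' = Q - B_U$ runs into the non-monotonicity of $G^{\pi_1^{opt}}_D(\cdot,a)$ in its first argument. But the cascade you then propose is vague and never completed: you do not specify how to choose $k$, you do not verify the feasibility constraints, and the claimed pathwise domination ``over all subsequent transmission-delay sample paths'' is asserted, not proved. As written this is a plan, not an argument, and the ``delicate step'' you flag is in fact the entire content of the lemma.

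The paper avoids the non-monotonicity issue altogether by a much simpler device: rather than replacing $R_{j+1}$ by a single earlier request $R'$, it \emph{inserts an additional request}. From $D_j$ the modified policy requests immediately (so $R^{\text{mod}}_{j+1}=D_j$), and since $R^* - D_j > B_U$ this extra update is guaranteed to arrive before $R^*$; the modified policy then requests again at $R^{\text{mod}}_{j+2}=R^*$ and thereafter imitates $\pi^{opt}$. At the common request time $R^*$ the original policy has age $\Delta(D_j)+(R^*-D_j)$ while the modified policy has age exactly $R^*-D_j$ (the $Y_{j+1}$ cancels), which is strictly smaller since $\Delta(D_j)\geq B_L>0$. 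Because $Q-R^*<B_U$, the non-delivery event $\{Y>Q-R^*\}$ has positive probability, so by \eqref{GR} the function $G^{\pi^{opt}}_R(Q-R^*,\cdot)$ is strictly increasing in its age argument, yielding the strict improvement and the contradiction. No comparison of $G_D$ at different remaining times is ever needed. If you want to phrase this in your single-$R'$ framework, take $R'=D_j$ and bound the resulting cost from above by this specific (suboptimal) continuation that re-requests at $R^*$.
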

\begin{proof}
Let us assume that this lemma is not true: There exist a delivery point $D_j \in [0, Q-2B_U]$ and its AoI at the delivery $\Delta(D_j)$ such that an optimal request point is $R_{j+1}> Q-B_U$. Let this policy follows $\pi^{opt}$ and let $R^* \delequal R_{j+1}$. We will show that there exists $\pi^{modified}$ such that $G^{\pi^{modified}}_D(Q-D_j, \Delta(D_j)) \leq G^{\pi^{opt}}_D(Q-D_j, \Delta(D_j))$. Let $\pi^{modified}$ determine ${R}^{mod}_{j+1} = D_j$ and ${R}^{mod}_{j+2} = R^*$. As the time duration between $D_j$ and $R^*$ is greater than $B_U$, $\pi^{modified}$ can determine ${R}^{mod}_{j+2}$ as $R^*$ regardless of the transmission delay of the $(j+1)^{th}$ update. After the request at $R^*$, let $\pi^{modified}$ imitate $\pi^{opt}$. This means that $G^{\pi^{opt}}_R(Q-t_1, t_2) = G^{\pi^{modified}}_R(Q-t_1, t_2)$ for every $t_1 \geq R^*$ and $t_2 \in [B_L, B_U+M]$. As a result of the modification, we can state that
\begin{equation}\label{modifiedGD}
    \begin{split}
         G^{\pi^{modified}}_D&\bigg(Q-D_j, \Delta(D_j)\bigg) \\
         &\stackrel{(a)}{=} G^{\pi^{modified}}_R\bigg(Q-{R}^{mod}_{j+2}, \Delta(R^{mod}_{j+2})\bigg)\\ &\stackrel{(b)}{=} G^{\pi^{opt}}\bigg(Q-R^*, \Delta({R}^{mod}_{j+2})\bigg)
    \end{split}
\end{equation}
where (a) follows from the decision of $\hat{R}_{j+1}$ and $\hat{R}_{j+2}$, and (b) follows from the fact that $\pi^{modified}$ imitates $\pi^{opt}$ starting from the point $R^*$.
\par On the other hand, as $\pi^{opt}$ determines the request point $R_{j+1}$ as $R^*$, we can state that
\begin{equation}\label{optGD}
    G^{\pi^{opt}}_D\bigg(Q-D_j, \Delta(D_j)\bigg) = G^{\pi^{opt}}_R\bigg(Q-R^*, \Delta(R_{j+1})\bigg)
\end{equation}
As $\Delta(R_{j+1}) > \Delta({R}^{mod}_{j+2})$, and $Q-R^* < B_U$, we can say that $G^{\pi^{opt}}_R(Q-R^*, \Delta(R_{j+1})) > E[G^{\pi^{opt}}(Q-R^*, \Delta({R}^{mod}_{j+2}))]$\footnote{If there exists $x < B_U$ such that $\Pr(Y_j \in (x, B_U]) = 0$, then $B_U$ can be shifted to $x$. Thus, we can assume that $\Pr(Y_j \in (x, B_U]) > 0$ for every $x < B_U$. As a result, we can claim that $G^{\pi^{opt}}_R(Q-R^*, \Delta(R_{j+1}))$ is strictly greater than $E[G^{\pi^{opt}}(Q-R^*, \Delta({R}^{mod}_{j+2}))]$.} by \eqref{GR}. As a result of \eqref{modifiedGD} and \eqref{optGD}, $G^{\pi^{modified}}_D(Q-D_j, \Delta(D_j)) < G^{\pi^{opt}}_D(Q-D_j, \Delta(D_j))$ that contradicts with the fact that $\pi^{opt}$ is the optimal policy. Hence, there is no such $D_j$, which completes the proof.
\end{proof}
As a result of Lemma \ref{RequestUntilQ-3BU}, $R_{j+1}^* \leq Q-B_U$. From Proposition \ref{AoISubNotAffect}, AoI at $R_{j+1}^*$ does not affect the age penalty at the query. Next, we prove that there is no $R_{j+1} \in [0, Q-B_U]$ such that $G^{\pi_1^{opt}}_R(Q-R_{j+1}^*) > G^{\pi_1^{opt}}_R(Q-R_{j+1})$. If there existed such $R_{j+1} \in [Q-3B_U, Q-B_U]$, then the destination node would determine the optimal request point for the delivery point $D_j^*$ as $R_{j+1}$. Hence, we can state the following for every delivery point $D_j$ and its transmission delay $Y_j$ satisfying $D_j \in [Q-3B_U, Q-2B_U]$ and $Y_j \in [B_L, B_U]$:
\begin{equation}\label{BorderPointGRandGD}
    G_R^{\pi^{opt}}(Q-R_{j+1}^*) \leq G_D^{\pi^{opt}}(Q-D_j, Y_j)
\end{equation} 
\par On the other hand, such $R_{j+1}$ cannot be in the interval $[0, Q-3B_U]$ as well. This statement is proved by induction. Similar to the proof of Proposition \ref{SuffandExis}, $R_{j+1}$ is mapped to a natural number $n$ if it satisfies $(n-1)B_L \leq Q-3B_U-R_{j+1} < nB_L$. It is true for $n=1$ \ie\, such $R_{j+1}$ cannot be in the interval $0 \leq Q-3B_U-R_{j+1} < B_L$ because of the following:
\begin{equation}\small
    \begin{split}
        G^{\pi_1^{opt}}_R\big(Q-R_{j+1}\big) &\stackrel{(a)}{=} E\bigg[G^{\pi_1^{opt}}_D\big(Q-R_{j+1}-Y_{j+1}, Y_{j+1}\big)\bigg] \\
        & \stackrel{(b)}{\geq} G^{\pi_1^{opt}}_R\big(Q-R_{j+1}^*\big)
    \end{split}
\end{equation}
where (a) follows from \eqref{GR}, and (b) follows from \eqref{BorderPointGRandGD}. 
 Let us assume that the induction statement is true for $n= 2, 3,\dots, K$ where $K$ is an arbitrary natural number. This statement assumes the following for every request point $R_{j+1}$ satisfying $0 \leq Q-3B_U-R_{j+1} < K B_L$:
 \begin{equation}\label{BorderPointInductionAssumption}
     G_R^{\pi^{opt}}(Q-R_{j^*+1}) \leq G_R^{\pi^{opt}}(Q-R_{j+1})
 \end{equation}
 Let us prove the induction statement for $n = K+1$. For every $R_{j+1}$ satisfying $K B_L \leq Q-3B_U-R_{j+1} < (K+1) B_L$, we have the following:
 \begin{equation}\small
     \begin{split}
         G^{\pi_1^{opt}}_R\big(Q-R_{j+1}\big) &\stackrel{(a)}{=} E\bigg[G^{\pi_1^{opt}}_D\big(Q-R_{j+1}-Y_{j+1}, Y_{j+1}\big)\bigg] \\
         &\stackrel{(b)}{\geq} G^{\pi_1^{opt}}_R\big(Q-R_{j+1}^*\big)
     \end{split}
 \end{equation}
where (a) follows from \eqref{GR}, and (b) follows from \eqref{BorderPointInductionAssumption}. This implies that the induction is completed. 
\par As a result, for every $R_{j+1} \in [0, Q-B_U]$, we have the following:
\begin{equation}
    G^{\pi_1^{opt}}_R(Q-R_{j+1}^*) \leq G^{\pi_1^{opt}}_R(Q-R_{j+1})
\end{equation}
It means that $R_{j+1}^*$ attains its infimum value on the interval $[0, Q-B_U]$. This completes the proof.

\section{Proof of Proposition \ref{CountState}} \label{app:CountState}
We perform a similar induction included in the proof of Proposition \ref{SuffandExis}. Let us map each $Q-D_j$ to a natural number $n$ that satisfies $(n-1)B_L \leq Q-D_j < nB_L$. If $n=1$, the request point does not affect the expected age penalty at the query. Thus, requesting at the query is an optimal request point that proves the proposition statement for $n=1$. Let us assume that the optimal request point is in the set $\{ 0, \frac{N}{Q}, \frac{2N}{Q}, \dots, Q \}$ when a delivery occurs at time $D_j$ satisfying $(n-1)B_L \leq Q-D_j < nB_L$ for $n= 1, 2, \dots, K$ where $K$ is an arbitrary natural number. Let us prove that the optimal request point is in the set $\{ 0, \frac{N}{Q}, \frac{2N}{Q}, \dots, Q \}$ when a delivery occurs at time $D_j$ satisfying $KB_L \leq R-D_j < (K+1)B_L$. Let us assume the inverse. There exists a delivery point $D_j$ such that $KB_L \leq Q-D_j < (K+1)B_L$ and the is no optimal request point in the set $\{ 0, \frac{N}{Q}, \frac{2N}{Q}, \dots, Q \}$. As there exists an optimal policy from Proposition \ref{SuffandExis}, there exists an optimal request point $R_{j+1} \not\in \{ 0, \frac{N}{Q}, \frac{2N}{Q}, \dots, Q \}$. For every quantized transmission delay $Y_{j+1}$, the next delivery point satisfies $Q-D_{j+1} < KB_L$. If $Q-D_{j+1} > 0$, the optimal next request point should be in the set $\{ 0, \frac{N}{Q}, \frac{2N}{Q}, \dots, Q \}$ due to the induction assumption. Instead of requesting at $R_{j+1}$, if the request was performed at $R_{j+1}^{mod} \delequal \frac{\left \lceil \frac{R_{j+1}}{N/Q} \right \rceil N}{Q}$, there would be two cases based on the transmission delay $Y_{j+1}$. For every $m \in \mathbb{N}$; if $D^{mod}_{j+1} > m \frac{Q}{N}$, then $D_{j+1} > m \frac{Q}{N}$; if $D^{mod}_{j+1} < m \frac{Q}{N}$, then $D_{j+1} < m \frac{Q}{N}$ because of the quantized transmission delay process, where $m \frac{Q}{N}$ represents the possible next request point or the query. Therefore, requesting an update packet at $R_{j+1}^{mod}$ is optimal given that $R_{j+1}$ is an optimal request point. This conclusion contradicts with the assumption. Hence, there exists an optimal request point in the set $\{ 0, \frac{N}{Q}, \frac{2N}{Q}, \dots, Q \}$ for every delivery point, which completes the proof.

\section{Proof of Proposition \ref{UpperLowerApproach}} \label{app:UpperLowerApproach}
\par Let $\delta = \frac{1}{2} \max_{x \in [0, M+B_U]} g(x+\epsilon) - g(x)$. Let $N_1$ be a natural number satisfying $N_1 > \frac{3 Q^2}{4 \delta B_L}$. There exists a deterministic optimal policy $\pi_1^{opt}$ whose first request point is the border point for lower quantization of the transmission delay with $N\geq N_1$ by Corollary \ref{CharacOfPolicy}. We construct an update policy $\pi_1^{mod}$ for upper quantization of the transmission delay with $N$ by utilizing $\pi_1^{opt}$. Let the border point corresponding to the lower quantized transmission delay and $\pi_1^{opt}$ be $Q^{BP}$. Let $\pi_1^{mod}$ pull its first request at $R_1^{mod} \delequal Q^{BP}-\frac{Q}{N}\times \frac{Q-Q^{BP}}{B_L}$. Note that $R_1^{mod}-Q^{BP} < \delta$ as $Q^{BP} \geq Q-3B_U  > \frac{Q}{4}$ and $N \geq N_1$. Let $(Y_j)_{j=1}^J$ be an arbitrary transmission delay sequence from the unquantized transmission delay process where $\sum_{j=1}^J Y_j > Q$. Let $(Y_j^{upp})_{j=1}^J$ and $(Y_j^{low})_{j=1}^J$ be the sequences that correspond to upper and lower quantized of $(Y_j)_{j=1}^J$, respectively. Let $(Z_j)_{j=1}^J$ be the waiting time sequences that is causally determined by $\pi_1^{opt}$ based on $(Y_j^{low})_{j=1}^J$. If $\pi_1^{mod}$ determines the waiting periods the same as $\pi_1^{opt}$ after the first request point \ie\, $R_{j+1}^{mod} = R_j^{mod} + Y_j^{upp} + Z_j, j \geq 1$, then the difference between age penalties under $\pi_1^{opt}$ and $\pi_1^{mod}$ is less than $\epsilon$. This is because $ 0 < R_j - R_j^{mod} < \delta$ for every $j$ where $R_j$ is the $j^{th}$ request point under $\pi_1^{opt}$. As this is valid for every transmission delay sequence $(Y_j)_{j=1}^J$, its expected difference is less than $\epsilon$. This completes the proof.
\section{Proof of Theorem \ref{superiorityTheorem}} \label{app:superiorityTheorem}
The objective of the UoW problem in \eqref{Problem1} attains its limit under any policy $\pi \in \Pi^{UoW}_{SD}$ from \cite[Eq~13]{updateorwait}. Therefore, we can define $g_{opt, Y_0}$ with a limit operation to measure the optimal time average age penalty when the first realization of transmission delay is $Y_0$.
\begin{equation}\label{newGFunc}
    g_{opt, Y_0} = \min_{\pi \in \Pi_{SD}^{UoW}} \lim\limits_{n\rightarrow \infty} \frac{E_{\mathcal{Y}}\big[\int^{D_n}_0 g(\Delta( t)) dt \big| Y_0\big]}{E[D_n]}
\end{equation}
where the expectation is taken with respect to transmission delay sequences. Let us define a function $f \colon ([0, T], \Pi_{SD}^{UoW}) \rightarrow \Pi$ in an effort to generate a constructed update policy.
\begin{align} \label{Piout}
    f(x, \pi^{in}) = \pi^{out}, \qquad x \in [0,T]
\end{align}
where $\pi^{in}$ is a stationary deterministic policy in which $Z_j = z(Y_j)$, $z : [0, \infty) \rightarrow [0,M]$. Then $\pi^{out}$ is a causal policy in which $Z_j = z(Y_j)$ with the same z function for $i > 0$ and $Z_0 = z(Y_0) + x$. Note that $\pi^{out}$ is stationary and deterministic policy, which is a function of $Y_j$ for every $Z_j$ for $j > 0$. $j = 0$ does not satisfy stationary and deterministic property. In Appendix \ref{app:ExistenceOfLimitQueryAverageProblem}, it is proved that the limit of the PoW problem exists when the performed update policy is a stationary and deterministic policy, which is a function of $Y_j$ for every $Z_j$ for $j > 0$. Therefore, we can define a function $h \colon (\mathbb{R}^+, \Pi_{SD}) \rightarrow \mathbb{R}^+$ with limit operation in order to measure query average age penalty when the first realization of transmission delay is $Y_0$ and the performed update policy is $\pi$.
\begin{equation}\label{hDef}
    h(Y_0, \pi) = \lim\limits_{n\rightarrow \infty} \frac {E_{\mathcal{Y}}\big[ \sum^n_{k=1} g(\Delta (Q_k)) \big| Y_0\big]}{n}
\end{equation}
where the expectation is taken with respect to transmission delay sequences.
\par Let $m$ be a natural number such that $(m+1)T > M+T+B_U$. Let $\Delta_{\pi, \mathbf{Y}}(t)$ denote AoI at time $t$, when a stationary and deterministic policy $\pi$ is performed on a transmission delay sequence $\mathbf{Y} = (Y_0, Y_1, \dots)$. When the performed policy $\pi$ is a stationary and deterministic policy which is function of $Y_j$, $\Delta_{\pi_\mathbf{Y}}(t)$ is a function of $t$. Then, it is obvious from the definition of the function $f$ in \eqref{Piout} that
\begin{equation} \label{TimeShiftDelta1}
    \Delta_{\pi^{opt}, \mathbf{Y}}(t) = \Delta_{f(x, \pi^{opt}), \mathbf{Y}}(t+x), \quad \ for \ t > M+B_U
\end{equation}
\par As \eqref{TimeShiftDelta1} holds for every transmission delay sequence, we take expectation on transmission delay sequences. Thus, the following equation holds for every $t>M+B_U$:
\begin{equation} \label{TimeShiftDelta}
    E_{\mathcal{Y}}\bigg[\Delta_{\pi^{opt}, \mathbf{Y}}(t)\bigg] = E_{\mathcal{Y}}\bigg[\Delta_{f(x, \pi^{opt}), \mathbf{Y}}(t+x)\bigg]
\end{equation}
\par Now, we are going to prove that for every starting point of $Y_0 \in [B_L, B_U]$, there exists $x \in [0,T]$ such that $h(Y_0, f(x, \pi^{opt})) \leq g_{opt, Y_0}$ where $\pi^{opt}$ the optimal update policy for the UoW problem. Suppose that this is not true. Then, there exists $Y_0 \in [B_L, B_U]$ such that for all $x \in [0,T]$, $h(Y_0, f(x, \pi^{opt})) > g_{opt, Y_0}$. Furthermore, $g(\Delta_{\pi^{opt}, \mathbf{Y}}(t))$ is lower semi-continuous with respect to $t$ because $g$ is continuous and non-decreasing. As $g$ is uniformly continuous on the interval $[B_L, B_U+M]$ and bounded in this compact interval; lower semi-continuity of $E_{\mathcal{Y}}[g(\Delta_{f(x,\pi^{opt}), \mathbf{Y}}(Q_k))]$ with respect to $x$ can be easily shown by its definition.\footnote{A clear discussion is provided in Appendix \ref{app:SuffandExis}} Then, $h(Y_0, f(x, \pi^{opt})$ turns out to be sum of countable lower semi-continuous functions with respect to $x$. Countable sum of lower semi-continuous functions is lower semi-continuous when they are lower bounded \cite[Chapter~2]{Rudin87}. As the variable $x$ is in a compact set $[0,T]$, the function attains $h(Y_0, f(x, \pi^{opt})$ its infimum. Hence, there exists $C > 0$ such that 
\begin{equation}\label{ContradictionThrm}
    h(Y_0, f(x, \pi^{opt})) \geq g_{opt, Y_0} + C
\end{equation}
\par Let us rewrite \eqref{newGFunc} in terms of the function $\Delta_{\pi, \mathbf{Y}}(t)$:
\begin{equation}
    g_{opt, Y_0} = \lim\limits_{n\rightarrow \infty} \frac{E_{\mathcal{Y}}\big[\int^{D_n}_0 g(\Delta_{\pi^{opt}, \mathbf{Y}}( t)) dt \big| Y_0\big]}{E[D_n]} \label{SameG}
\end{equation}
\par Let us rewrite \eqref{hDef} in terms of the function $\Delta_{\pi, \mathbf{Y}}(t)$:
\begin{equation}\small \label{DeltaRel}
    \begin{split}
        h\big(Y_0, f(x, \pi^{out})\big) &= \lim\limits_{n\rightarrow \infty} \frac {E_{\mathcal{Y}}\bigg[ \mathlarger{\sum^n_{k=1}} g\big(\Delta_{f(x,\pi^{opt}), \mathbf{Y}} (Q_k)\big) \bigg| Y_0 \bigg]}{n} \\ 
        &\stackrel{(a)}{=} \lim\limits_{n\rightarrow \infty} \frac {E_{\mathcal{Y}}\bigg[ \mathlarger{\sum^n_{k=m+1}} g\big(\Delta_{f(x,\pi^{opt}), \mathbf{Y}} (Q_k)\big) \bigg| Y_0 \bigg]}{n-m} \\
        &\stackrel{(b)}{=}\lim\limits_{n\rightarrow \infty} \frac {E_{\mathcal{Y}}\bigg[ \mathlarger{\sum^n_{k=m+1}} g\big(\Delta_{\pi^{opt}, \mathbf{Y}} (Q_k-x)\big)\bigg| Y_0\bigg]}{n-m} 
    \end{split}
\end{equation} 
where (a) follows from the properties of limit and (b) follows from \eqref{TimeShiftDelta}.
\par Let us divide the interval $[0,T]$ into small intervals with length $\delta$. Then, we obtain the following:
\begin{equation}\small \label{InfSumThrm}
    \begin{split}
        &\frac{\mathlarger{\sum_{i = 1}^{T/\delta}}h\bigg(Y_0, f\big((i-1)\delta, \pi^{opt}\big)\bigg)}{T/\delta} \\
        &\hspace{0.2cm}\stackrel{(a)}{=} \frac{1}{T/\delta} \sum_{i = 1}^{T/\delta} \lim\limits_{n\rightarrow \infty} \frac {E_{\mathcal{Y}}\Bigg[\mathlarger{\sum^n_{k=m+1}} g\bigg(\Delta_{\pi^{opt}, \mathbf{Y}} \big(Q_k-(i-1)\delta\big)\bigg)\Bigg]}{n-m}  \\
        &\hspace{0.2cm} \stackrel{(b)}{=} \lim\limits_{n\rightarrow \infty} \sum_{i = 1}^{T/\delta} E_{\mathcal{Y}}\Bigg[\frac{ \mathlarger{\sum^n_{k=m+1} }g\bigg(\Delta_{\pi^{opt}, \mathbf{Y}} \big(Q_k-(i-1)\delta\big)\bigg)}{\big(n-m\big) T/\delta}\Bigg]\\
        & \hspace{0.2cm} \stackrel{(c)}{=} \lim\limits_{n\rightarrow \infty} E_{\mathcal{Y}}\Bigg[\frac{\mathlarger{\sum_{i = Q_m/\delta + 1}^{Q_n/\delta}} g\bigg(\Delta_{\pi^{opt}, \mathbf{Y}} \big(i\delta\big)\bigg)}{\big(Q_n-Q_m\big)/\delta}\Bigg]
    \end{split}
\end{equation}
where (a) follows from \eqref{DeltaRel}, (b) follows from interchanging the order of the limit and summation by Lebesgue's Dominated Convergence Theorem as all of the terms are upper bounded by $g(B_u+M)$, and (c) follows from exchanging summation and expectation. 
\par As $\delta$ goes to $0$, we obtain the following:
\begin{equation}\small \label{ChangeLimitOrderThrm}
    \begin{split}
        \lim_{\delta \rightarrow 0}&\frac{\mathlarger{\sum_{i = 1}^{T/\delta}}h\bigg(Y_0, f\big((i-1)\delta, \pi^{opt}\big)\bigg)}{T/\delta} \\
        & \hspace{0.0cm} \stackrel{(a)}{=} \lim_{\delta \rightarrow 0} \lim\limits_{n\rightarrow \infty} E_{\mathcal{Y}}\bigg[\frac{\sum_{i = Q_m/\delta + 1}^{Q_n/\delta} g\big(\Delta_{\pi^{opt}, \mathbf{Y}} (i\delta)\big)}{\big(Q_n-Q_m\big)/\delta}\bigg] \\
        &\hspace{0.0cm}\stackrel{(b)}{=} \lim\limits_{n\rightarrow \infty} \lim_{\delta \rightarrow 0}  E_{\mathcal{Y}}\bigg[\frac{\sum_{i = Q_m/\delta + 1}^{Q_n/\delta} g(\Delta_{\pi^{opt}, \mathbf{Y}} (i\delta))}{\big(Q_n-Q_m\big)/\delta}|Y_0 \bigg]\\
        &\hspace{0.0cm} \stackrel{(c)}{=} \lim\limits_{n\rightarrow \infty} E_{\mathcal{Y}}\bigg[ \lim_{\delta \rightarrow 0} \frac{\sum_{i = Q_m/\delta}^{Q_n/\delta} g(\Delta_{\pi^{opt}, \mathbf{Y}} (i\delta))}{\big(Q_n-Q_m\big)/\delta}\bigg|Y_0\bigg] \\
        &\hspace{0.0cm} \stackrel{(d)}{=}\lim\limits_{n\rightarrow \infty} E_{\mathcal{Y}}\bigg[\frac{\int^{Q_n}_{Q_m} g(\Delta_{\pi^{opt}, \mathbf{Y}}( t)) dt}{\big(Q_n-Q_m\big)}\bigg|Y_0 \bigg] \\
        &\hspace{0.0cm} \stackrel{(e)}{=}\lim\limits_{n\rightarrow \infty} E_{\mathcal{Y}}\bigg[\frac{\int^{Q_n}_{0} g(\Delta_{\pi^{opt}, \mathbf{Y}}( t)) dt}{Q_n}\bigg|Y_0 \bigg] \\
        &\hspace{0.0cm} \stackrel{(f)}{=} \lim\limits_{n\rightarrow \infty} E_{\mathcal{Y}}\bigg[\frac{\int^{D_n}_0 g(\Delta_{\pi^{opt}, \mathbf{Y}}( t)) dt}{D_n}\bigg|Y_0 \bigg] \stackrel{(g)}{=} g_{{opt}, Y_0}
    \end{split}
\end{equation}
In \eqref{ChangeLimitOrderThrm}, (a) follows from \eqref{InfSumThrm}. (b) follows from Moore Osgood Theorem as the term with the expectation is proved to be uniformly convergent in Appendix \ref{app:UniformCon}. (c) follows from Lebesgue's Dominated Convergence Theorem as $g(\Delta_{\pi^{opt}, \mathbf{Y}}(.))$ is upper bounded by $g(B_U + M)$. (d) follows from Riemann Integration that is proved in \cite[Theorem~6.10]{rudin_1976}. (e) and (f) are obtained from the following facts: (i) $[0,Q_m]$ is a bounded interval. (ii) $B_L < \Delta_{\pi^{opt}, \mathbb{Y}}(t) < M+B_U$ for all $t\in \mathbb{R}^{+}$. (iii) Let $D_i$ be the closest delivery point to a query $Q_k$. Then $|D_i-Q_k| < (B_U+M)/2$. (g) follows from \eqref{SameG}.
\par On the other hand, the following can be obtained from \eqref{ContradictionThrm}:
\begin{equation}\label{ContradictionThrm1}
    \lim_{\delta \rightarrow 0} \frac{\mathlarger{\sum_{i = 1}^{T/\delta}}h\bigg(Y_0, f\big((i-1)\delta, \pi^{opt}\big)\bigg)}{T/\delta}  \geq g_{opt, Y_0} + C
\end{equation}
\eqref{ContradictionThrm1} contradicts \eqref{ChangeLimitOrderThrm}. Therefore, for every $Y_0 \in [B_L , B_U]$, there exists $x \in [0,T]$ such that $h(Y_0, f(x, \pi^{opt})) \leq g_{opt, Y_0}$. As a result, based on the first realization of the transmission delay $Y_0$, we can find an $x$ that makes the optimal time average age penalty smaller than the optimal query average age penalty. Then we can define this determination as a new update policy. This determination does not change the number of transmitted packet as it only modifies the first waiting time. Therefore, the same power constraint is satisfied. This completes the proof.

\section{Proof of Existence of the Limit for the PoW problem under any policy in $\Pi_{SD}^{UoW}$}\label{app:ExistenceOfLimitQueryAverageProblem}
\begin{proof}
In this proof, under a stationary and deterministic policy $\pi \in \Pi_{SD}^{UoW}$, we show that the limit $a_n$ exists as $n$ goes to $\infty$ where $a_n$ is the following:
\begin{align}\label{QueryAverageLimSup}
    a_n &\delequal \frac{E\big[\sum_{k=1}^n g(\Delta(Q_k)) \big| Y_0 \big]}{n}
\end{align}
\par $Y_0$ is given in the expectation, $Z_0 = z(Y_0) + x$ where x is constant; hence $Z_0$ is constant. Let $X_j = Y_j + Z_j$ for $j \in \mathbb{N}$. The transmission delays are i.i.d., and the update policy $\pi$ is a stationary and deterministic policy, which is a function of $Y_j$ for $j > 0$; thus $X_j$ is i.i.d.. The probabilities of $X_j$ can be calculated based on the probabilities of $Y_j$.
\begin{align}
    \Pr( X_j \in \mathcal{X}) = \Pr(Y_j \in \mathcal{Y}) \ \textrm{where} \ \mathcal{Y} = \bigcup_{x \in \mathcal{X}} \bigcup_{y \in \mathcal{Y} : y + z(y) = x} y
\end{align}
where $z(.)$ is the decision function of the stationary and deterministic policy. The request points can be represented as $R_{j+1} = Z_0 + \sum_{i=1}^j X_i$ for $j \geq 1$ and $R_1 = Z_0$. Let the stopping time $\tau_k = \min \{ \tau_k : R_{\tau_k} > Q_k\}$. The modulo operation is defined as the following:
\begin{equation}
    mod(R_{\tau_k}, T) = R_{\tau_k} - T \times \max\{k: Tk < R_{\tau_k}\}
\end{equation}
\par We can construct a Markov chain whose states are $\{mod(R_{\tau_k}, T), k \geq 1\}$ where $T$ is the period of query instants. $E[g(\Delta(Q_k)) | mod(R_{\tau_k}, T) ]$ can be calculated from the conditional expectation of $X_j$ given $Y_j$, independent of $k$. Throughout the proof, we consider $X_j$ in three different scenarios similar to \cite{robbins_1953}. 
\par \underline{The first scenario} is $X_j = 0$ for all $j \in \mathbb{N}$. It cannot be the case for this problem since $Y_j \geq B_L > 0$. 
\par \underline{The second scenario} is $\Pr(X_j \in \{ k \beta \colon B_L \leq k\beta \leq B_U + M \ \text{and} \ k \in \mathbb{N}\}) = 1$ such that $\beta$ is a rational multiple of $T$. In this scenario, the markov chain has a finite number of states. Let these states be $1,2, \dots, N$ These states communicate with each other. Therefore, it has a steady state distribution \cite[Section~4.3.1]{gallager_2013} and the limiting time-average fraction of time spent in each state can be calculated from \cite[Theorem~7.2.6]{gallager_2013}. Let these fractions be $p_1, p_2, \dots, p_N$. Then all the subsequences of $a_n$ goes to the same value equal to $\sum_{i=1}^N E[g(\Delta(Q_k)) | mod(R_{\tau_k}, T) = i^{th} \textrm{state}] \times p_i$. Since all the subsequences of $a_n$ goes to the same limit, the limit of $a_n$ exists. 
\par \underline{The third scenario} is all the random variables $X_j$ except the previous scenarios. Given $R_j = mT + a$ where $m \in \mathbb{N}$ and $a < T$, the probability of which $R_{j}$ is a stopping time is equal to $\Pr(X_{j-1} >a)$. From \cite[Theorem~1 and 2]{robbins_1953}, $R_j$ is equidistributed in modulo $T$ with probability 1. Due to the equidistriution, the limiting time-average fraction of time spent in the state of $mod(R_{\tau_k}, T) = a$ exists and it is proportional to $\Pr(X_j > a)$. Once the limiting time-average fraction exists, all the subsequences of $a_n$ goes to the same value similar to the second scenario. Thus, the limit of $a_n$ exists. This completes the proof.

\end{proof}
\section{Proof Of Uniform Convergence}\label{app:UniformCon}
\begin{proof}

Let $f_k \colon \mathbb{N} \rightarrow \mathbb{R}$, $k \in \mathbb{N}$ be a function  such that
\begin{equation}
    f_k(n) = \frac{\mathlarger{\sum_{i = 0}^{Q_n/2^{-k}}} g\bigg(\Delta_{\pi^{opt}, \mathbf{Y}} \big(i \times 2^{-k}\big)\bigg)}{Q_n/2^{-k}}
\end{equation}
Let $f \colon \mathbb{N} \rightarrow \mathbb{R}$ be a function such that
\begin{equation}
    f(n) = \frac{\int^{Q_n}_{0} g\big(\Delta_{\pi^{opt}, \mathbf{Y}}( t)\big) dt}{Q_n}
\end{equation}
\par If we prove that $f_k \rightarrow f$ uniformly for every $\{Y_j\}$ and $\{Z_j\}$ sequence providing that $Y_j \in [B_L,B_U]$ and $Z_j \in [0,M]$, we can ignore the expectation since it is uniformly convergent for every possible sequence. Then, the proof is completed.
\par Let $M_k \in \mathbb{R}$ be
\begin{equation}
    M_k = \sup_{n \in \mathbb{N}}|f_k(n) - f(n)|
\end{equation}
As penalty function $g$ is non-decreasing and $\Delta_{\pi^{opt}, \mathbf{Y}}(.) < B_U + M$, then $M_k \leq g(M+B_U)$. Furthermore, as $k$ goes to infinity, $M_k$ approaches $0$ due to the continuity of the penalty function, $g$. As a result, $f_k$ is uniformly convergent to $f$ by \cite[Theorem~7.9]{rudin_1976}. This completes the proof.

\end{proof}

%The authors would like to thank...

% Can use something like this to put references on a page
% by themselves when using endfloat and the captionsoff option.
\ifCLASSOPTIONcaptionsoff
\newpage
\fi

% trigger a \newpage just before the given reference
% number - used to balance the columns on the last page
% adjust value as needed - may need to be readjusted if
% the document is modified later
%\IEEEtriggeratref{8}
% The "triggered" command can be changed if desired:
%\IEEEtriggercmd{\enlargethispage{-5in}}

% references section

% can use a bibliography generated by BibTeX as a .bbl file
% BibTeX documentation can be easily obtained at:
% http://mirror.ctan.org/biblio/bibtex/contrib/doc/
% The IEEEtran BibTeX style support page is at:
% http://www.michaelshell.org/tex/ieeetran/bibtex/
%\bibliographystyle{IEEEtran}
% argument is your BibTeX string definitions and bibliography database(s)
%\bibliography{IEEEabrv,../bib/paper}
%
% <OR> manually copy in the resultant .bbl file
% set second argument of \begin to the number of references
% (used to reserve space for the reference number labels box)
%\begin{thebibliography}{1}
	
%	\bibitem{IEEEhowto:kopka}
%	H.~Kopka and P.~W. Daly, \emph{A Guide to \LaTeX}, 3rd~ed.\hskip 1em plus
%	0.5em minus 0.4em\relax Harlow, England: Addison-Wesley, 1999.
	
%\end{thebibliography}
%\bibliographystyle{plain}
\bibliography{references} 
% biography section
% 
% If you have an EPS/PDF photo (graphicx package needed) extra braces are
% needed around the contents of the optional argument to biography to prevent
% the LaTeX parser from getting confused when it sees the complicated
% \includegraphics command within an optional argument. (You could create
% your own custom macro containing the \includegraphics command to make things
% simpler here.)
%\begin{IEEEbiography}[{\includegraphics[width=1in,height=1.25in,clip,keepaspectratio]{mshell}}]{Michael Shell}
% or if you just want to reserve a space for a photo:
%\iffalse
%\begin{IEEEbiography}{Michael Shell}
%	Biography text here.
%\end{IEEEbiography}

% if you will not have a photo at all:
%\begin{IEEEbiographynophoto}{John Doe}
%	Biography text here.
%\end{IEEEbiographynophoto}

% insert where needed to balance the two columns on the last page with
% biographies
%\newpage

%\begin{IEEEbiographynophoto}{Jane Doe}
%	Biography text here.
%\end{IEEEbiographynophoto}
%\fi

% You can push biographies down or up by placing
% a \vfill before or after them. The appropriate
% use of \vfill depends on what kind of text is
% on the last page and whether or not the columns
% are being equalized.

%\vfill

% Can be used to pull up biographies so that the bottom of the last one
% is flush with the other column.
%\enlargethispage{-5in}

% that's all folks
\end{document}